\newtheorem{theorem}{Theorem}[section]
\newtheorem{lemma}{Lemma}[section]
\newtheorem{remark}{Remark}[section]
\newtheorem{example}{Example}[section]
\font\tenscr=rsfs10  scaled \magstep0 \font\sevenscr=rsfs7 scaled \magstep0
\font\fivescr=rsfs5 scaled \magstep0 \skewchar\tenscr='177 \skewchar\sevenscr='177
\def\mathscr{\fam\scrfam}
\renewcommand{\cal}{\mathscr}
\def\endproof{\qed\endtrivlist}
\let\csname endproof*\endcsname=\endproof
\def\qedsymbol{\ifmmode\bgroup\else$\bgroup\aftergroup$\fi
  \vcenter{\hrule\hbox{\vrule height.6em\kern.6em\vrule}\hrule}\egroup}
\def\qed{\ifmmode\else\unskip\nobreak\fi\quad\qedsymbol}
\renewcommand{\iff}{\Leftrightarrow}
\renewcommand{\implies}{\Rightarrow}
\newcommand{\lra}{\leftrightarrow}
\renewcommand{\le}{\leqslant}
\renewcommand{\ge}{\geqslant}
\newcommand{\natp}[1]{\pi_{{\!}_{\scriptstyle{#1}}}}
\begin{document}

\journal{\qquad}

\title{\bf\large Reduction of fuzzy automata by means of fuzzy quasi-orders\tnoteref{t1}}
\tnotetext[t1]{Research supported by Ministry  of Science and Technological Development, Republic
of Serbia, Grant No. 174013}
\author[fsmun]{Aleksandar Stamenkovi\' c}
\ead{aca@pmf.ni.ac.rs}

\author[fsmun]{Miroslav \'Ciri\'c\corref{cor}}
\ead{mciric@pmf.ni.ac.rs}

\author[fsmun]{Jelena Ignjatovi\'c}
\ead{jekaignjatovic73@gmail.com}

\cortext[cor]{Corresponding author. Tel.: +38118224492; fax: +38118533014.}

\address[fsmun]{Faculty of Sciences and Mathematics, University of Ni\v s, Vi\v segradska 33, P. O. Box 224, 18000 Ni\v s, Serbia}

\begin{abstract}
In our recent paper we have established close relationships
between state reduction of a fuzzy recognizer and~resolution of a
particular system of fuzzy relation equations. In that paper we
have also studied reductions by means of those solutions which are
fuzzy equivalences.~In this paper we will see that in some cases
better reductions~can~be obtained using the solutions of this
system that are fuzzy quasi-orders.~Generally, fuzzy quasi-orders
and fuzzy~equivalences are equally good in the state reduction,
but we show that right and left invariant fuzzy quasi-orders give
better reductions than right and left invariant fuzzy
equivalences. We also show that alternate reductions by means of
fuzzy quasi-orders give better results than alternate reductions
by means of fuzzy~equivalences.~Furthermore we study a more
general type of fuzzy quasi-orders, weakly right and left
invariant ones, and we~show that they are closely related to
determinization of fuzzy recognizers.~We also demonstrate some applications of weakly
left invariant fuzzy quasi-orders in conflict analysis of fuzzy discrete
event systems. \end{abstract}

\begin{keyword}
Fuzzy automaton; non-deterministic automaton; fuzzy quasi-order; fuzzy equivalence;
state reduction; afterset; alternate reduction; simulation; bisimulation; fuzzy relation equation; complete residuated lattice, fuzzy discrete event system
\end{keyword}

\maketitle

\section{Introduction}

Unlike deterministic finite automata (DFA), whose efficient
minimization is possible, the state minimi\-zation problem for
non-deterministic finite automata (NFA)~is com\-putationally~hard
({\small PSPACE}-complete, \cite{JR.93,Yu.97}) and  known
algorithms like in \cite{CC.03a,KW.70,Mel.99,Mel.00,Seng.92}
cannot be used in practice.~For that reason, many researchers
aimed their atten\-tion~to NFA state reduction methods which do
not~neces\-sarily give a minimal one, but they give "reasonably"
small NFAs that can be constructed efficiently.~The
basic~idea~of~redu\-cing the number of states of NFAs by computing
and merging~indis\-tinguishable states~resembles
the~minimi\-zation algorithm for DFAs, but is
more~com\-plica\-ted.~That led~to the concept of a right
invariant~equi\-va\-lence on an NFA, studied by Ilie and Yu
\cite{IY.02,IY.03},  who showed that they can be used to
construct~small NFAs from regular expressions.~In particular, both
the partial derivative automaton and the follow~automa\-ton of a
given regular expression are factor automata of the position
automaton  with respect to the right invariant equivalences
(cf.~\cite{CZ.01a,CZ.01b,IY.02a,IY.03,IY.03a}).~Right invariant
equivalences have been also
studied~in~\cite{CCK.00,CSY.05,CC.04,IY.03,INY.04,ISY.05}.~Moreover,
the same concept was studied under the name "bisimulation
equivalence" in~many areas of computer science and mathematics,
such as modal~logic, concurrency theory, set theory, formal
verification, model checking, etc., and numerous algorithms have
been proposed~to~compute the greatest bisimulation equivalence on
a given labeled graph or a labeled transition system (cf.
\cite{LV.95,Milner.80,Milner.89,Milner.99,Park.81,RM-C.00,Sangiorgi.07}).~The
faster algorithms are based on the crucial equivalence between the
greatest bisimulation equivalence and the relational coarsest
partition problem (see \cite{DPP.04,GPP.03,KS.90,RT.08,PT.87}).

Better results in the state reduction of NFAs can be achieved in
two ways.~The first one was also proposed~by Ilie and Yu in
\cite{IY.02,IY.03,INY.04,ISY.05} who introduced the dual concept
of a left invariant equivalence on an NFA and showed that even
smaller NFAs can be obtained alternating reductions by means of
right invariant and left invariant equivalences. On the other
hand, Champarnaud and Coulon in \cite{CC.03,CC.04} proposed use of
quasi-orders (preorders) instead of equivalences and showed that
the method based on quasi-orders gives better reductions than the
method based on equivalences.~They gave an algorithm for computing
the greatest right invariant and~left invariant quasi-orders on an
NFA working in a polynomial time, which was later improved in
\cite{INY.04,ISY.05}.~

Fuzzy finite automata are generalizations of NFAs, and the above
mentioned problems concerning minimiza\-tion and reduction of NFAs
are also present in the work with fuzzy~auto\-mata.~Reduction of
the number of states of~fuzzy automata was studied in
\cite{BG.02,CM.04,LL.07,MMS.99,MM.02,P.06}, and the algorithms
given there were also based on the idea~of compu\-ting and merging
indistinguishable states.~They were called minimization
algorithms, but the term minimization is not adequate because it
does not involve the usual construction of the minimal~fuzzy
automaton in the set of all fuzzy automata recognizing a given
fuzzy language, but just the proce\-dure~of compu\-ting and
merging indistinguishable states.~Therefore, these are essentially
just state reduction algorithms.

In the deterministic case we can effectively detect and merge
indistinguishable~states, but in the non-deterministic case we
have sets of states and it is seemingly very difficult to decide
whether two states are distin\-guish\-able or not.~What we shall
do in this paper is find a superset such that one is certain not
to merge state~that should not be merged.~There can always be
states which could be merged but detecting those is too
computationally expensive.~In the case of fuzzy automata, this
problem is even worse because we work with fuzzy sets of
states.~However, it turned out that in the non-deterministic case
indistinguishability can be successfully
modelled~by~equi\-valences and quasi-orders. In
\cite{CSIP.07,CSIP.09} we have shown that in the fuzzy case the
indistinguishability can be mod\-elled by fuzzy equivalences, and
here we show that this can be done by
fuzzy~quasi-orders.~It~is~worth noting that in all previous papers
dealing with reduction of fuzzy automata
(cf.~\cite{BG.02,CM.04,LL.07,MMS.99,MM.02,P.06}) only reductions
by means of crisp equivalences have been investigated. In this
paper, as well as \cite{CSIP.07,CSIP.09}, show that better
reductions~can~be~achieved employing fuzzy relations, namely,
fuzzy equivalences and fuzzy quasi-orders.

In contrast to \cite{CSIP.07,CSIP.09}, where we have started from
a fuzzy equivalence on a set of states $A$ of a~fuzzy~auto\-maton
$\cal A$, here we start from an arbitrary fuzzy quasi-order $R$ on
$A$, we form the set $A/R$ of all~aftersets of~$R$, and we turn
the fuzzy transition function~on~$A$ into a related fuzzy
tran\-sition function on $A/R$. This results in the afterset fuzzy
automaton ${\cal A}/R$.~If, in addition, $\cal A$ is a fuzzy
recognizer, then we~also turn its fuzzy sets of initial and
terminal states into related fuzzy sets of initial and terminal
states of the afterset fuzzy recognizer ${\cal A}/R$.~In a similar
way, we construct the foreset fuzzy~recognizer of $\cal A$
w.r.t.~$R$, but we show that they are isomorphic, and hence, it is
enough to consider only afterset fuzzy automata and
recog\-nizers.~However, if we do not impose any restriction on
$R$, then the~afterset fuzzy recognizer ${\cal A}/R$ does not
necessary recognize the same fuzzy language as~$\cal
A$.~We~show~that~$\cal A$ and ${\cal A}/R$ recognize the same
fuzzy language, i.e., they are equivalent, if and only if $R$ is a
solution~to a particular system~of fuzzy rela\-tion equations
including $R$, as an unknown fuzzy quasi-order,
transition~rela\-tions on $\cal A$ and~fuzzy sets of initial and
terminal states.~This system, called the {\it general system\/},
has at least one solution in~the~set ${\cal Q}(A)$ of all fuzzy
quasi-orders on $A$, the equality relation on $A$. Nevertheless,
to obtain~the~best possible reduction~of $\cal A$, we have to find
the greatest solu\-tion to the general system in ${\cal Q}(A)$, if
it exists, or to find as big a solu\-tion as possible.~The general
system does not necessary have the greatest solution (Example \ref{ex:not.great}),
and~also, it may consist of infinitely many equations, and~finding its
nontrivial solutions may be a very difficult task. For that reason we
aim our attention to some instances~of the general system.~These
instances have to be as general as possible, but they have to be
easier to solve. From a practical point of view, these instances
have to consist of finitely many equations.

In Section \ref{sec:RLIFQO} we study two instances of the general
system whose solutions, called the right and left invariant fuzzy
quasi-orders, are common generalization of right and left
invariant quasi-orders and equivalences, studied in
\cite{CC.03,CC.04,INY.04,ISY.05}, and right and left invariant
fuzzy equivalences, studied in \cite{CSIP.07,CSIP.09}.~Using a
methodology similar to the one developed in \cite{CSIP.07,CSIP.09}
for fuzzy equivalences, we give a characterization of right
invariant fuzzy quasi-orders on a fuzzy automaton $\cal A$, and we
prove that they form a complete lattice whose greatest element
gives the best reduction of $\cal A$ by means of fuzzy
quasi-orders of this type.~Then~by Theorem \ref{th:alg} we give  a
procedure for computing the greatest right invariant fuzzy
quasi-order contained in a given fuzzy quasi-order, which works if
the underlying structure $\cal L$ of truth values is locally
finite,~but it does not necessary work if $\cal L$  is not locally
finite.~In particular, it works for classical fuzzy automata~over
the G\"odel structure, but it does not necessary work for fuzzy
automata over the Goguen~(product)~structure.~We also characterize
the greatest right invariant fuzzy quasi-order in the case when
the structure~$\cal L$ satisfies certain distributivity conditions
for join and multiplication over infima. This characterization
hold, for example, whenever multiplication is assumed to be a
continuous t-norm on the real unit interval, and hence, they hold
for {\L}ukasiewicz, Goguen and G\"odel structures.~Although the
results, as well as the methodology,~are~similar to those obtained
in \cite{CSIP.07,CSIP.09} for fuzzy equivalences, there are some
important differences which justify our study of state reductions
by means of fuzzy quasi-orders.~Generally, fuzzy quasi-orders and
fuzzy equivalences are equally good in the reduction of fuzzy
automata and recognizers, as we have~shown by  Theorem
\ref{th:lang.ER}.~However, Example~\ref{ex:R.ri} shows that the
right invariant fuzzy quasi-orders give better reductions than
right invariant fuzzy equivalences.~Moreover, the iterative
procedure for computing the greatest~right invariant fuzzy
quasi-order on a fuzzy automa\-ton $\cal A$, given in Theorem
\ref{th:alg}, can terminate in a finite number of steps even if a
similar iteration procedure for computing the greatest right
invariant fuzzy equivalence on  $\cal A$, developed in
\cite{CSIP.07,CSIP.09}, does not terminate in a finite number of
steps  (Example \ref{ex:prod2}).~It is worth~noting~that
the greatest right and left invariant fuzzy quasi-orders are calculated using iterative procedures, but these calculations are not approximative.~Whenever these procedures
terminate in a finite number of steps, exact solutions to the considered systems of fuzzy relation equations are obtained.

As we have noted, the procedure for computing the greatest right
invariant fuzzy quasi-orders on fuzzy automata does not necessary
work if the underlying structure $\cal L$ of truth values is not
locally finite.~For~that reason  in Section \ref{sec:APPROX} we consider some special types of right and left invariant fuzzy quasi-orders, and
we show that the greatest fuzzy quasi-orders of these types  can be effectively computed even if  $\cal L$ is not necessary locally finite.~By
Theorem~\ref{th:alg.c}~we give an  iterative procedure for
computing the greatest right invariant~crisp quasi-order contained
in a given crisp or fuzzy quasi-order.~This procedure works if
$\cal L$ is any complete residuated lattice, and~even if $\cal L$
is a lattice ordered monoid. On the other hand, as Example
\ref{ex:FQO-QO} shows, in cases when we are able to effectively
compute the~greatest~right invariant fuzzy quasi-order, using it,
better state reduction can be achieved than by using the greatest
right invariant crisp quasi-order.~We also study the strongly
right invariant fuzzy quasi-orders, which can be effectively
computed without any iteration procedure, by solving a simpler
system of~fuzzy relation equations. This procedure works if $\cal
L$ is any complete residuated lattice, even though we also show
that reductions by means of the greatest strongly right invariant
fuzzy quasi-orders give worse results than reductions by means of
the greatest right invariant ones (Example \ref{ex:ri.sri}).

In addition to special types of right and left invariant fuzzy
quasi-orders considered in Section \ref{sec:APPROX}, in
Section~\ref{sec:WEAKLY} we study some more general types of these
fuzzy quasi-orders -- the weakly right and left invariant fuzzy quasi-orders.~We show that the weakly right invariant fuzzy
quasi-orders on a fuzzy recognizer $\cal A$ form a principal ideal
of the lattice of quasi-orders on the set of states of $\cal A$.
We give a procedure for computing the great\-est element of this
principal ideal (Theorem \ref{th:syst.R.tau}), and we show that
weakly right invariant fuzzy quasi-orders give better reductions
than right invariant ones.~However, although the system of fuzzy
relation equations that defines the weakly right invariant fuzzy
quasi-orders consists of fuzzy relation equations whose greatest
solutions can be easily computed, computing the greatest solution
to the whole system is computationally hard.~Namely, the number of
equations may be exponential in the number of states of $\cal A$,
or it may~even be infinite.~This is an immediate consequence of
the fact that the procedure for computing the~greatest weakly
right invariant fuzzy quasi-order on $\cal A$ includes the
procedure for determinization of the reverse fuzzy recognizer of
$\cal A$ developed in \cite{ICB.08}, whereas the procedure for
computing the greatest weakly left invariant one includes
determinization of $\cal A$.

In Section 7 we show that even better results
in the state reduction can be obtained by alternating~reductions
by means of the greatest right and left invariant fuzzy
quasi-orders, or by means of the greatest weakly right and left
invariant ones.~First we show that if we  reduce a fuzzy
automaton~using~the greatest right invariant fuzzy quasi-order,
repeated reduction using  right invariant fuzzy quasi-orders can
not decrease the number of states. The number of states can be
decreased if we apply reduction by means of the greatest left
invariant fuzzy quasi-order.~The same observation is true for left
invariant fuzzy quasi-orders, as well as for weakly right and left
invariant ones.~We also show that alternate reductions starting
with~a~(weakly) right invariant fuzzy quasi-order, and those
starting with a (weakly) left invariant one,~can have different
lengths, and related alternate reducts can have different number
of states (Example \ref{ex:all.in.one}).~Moreover, there is no a
general procedure to decide which of these alternate reductions
will give better results.~Also,~there is no a general procedure to
decide whether we have reached the smallest number of states
in~alternate reductions.

Let us note that Champarnaud and Coulon \cite{CC.03,CC.04}, Ilie,
Navarro and Yu \cite{INY.04}, and~Ilie, Solis-Oba~and Yu
\cite{ISY.05} studied the state reduction of non-deterministic
recognizers by means of~right~and left invariant~quasi-orders.
However, they do not used the afterset or foreset recognizers
w.r.t.~a quasi-order $R$. Instead, they used~the factor recognizer
w.r.t.~the natural equivalence $E_R$ of $R$.~Although these
recognizers~are equivalent and have the same number of states,
there are some differences in their use if one works with
alternate reductions. Indeed, by Example \ref{ex:all.in.one} we
also show that in some cases alternate reductions by~means of
natural~equiva\-lences of right and left invariant quasi-orders
and alternate reductions by means of right and left invariant
equivalences do not decrease the number of states, while the
alternate reductions by~means of  right and left invariant
quasi-orders decrease this number.

Finally, in Section 8 we demonstrate some applications of weakly left invariant fuzzy quasi-orders in~the fuzzy discrete event systems theory.~We show that every fuzzy recognizer $\cal A$ is conflict-equivalent with~the afterset fuzzy recognizer ${\cal
A}/R$ w.r.t.~any weakly left invariant fuzzy quasi-order $R$ on $\cal A$.~For the
sake~of~con\-flict analysis, this means that in the parallel~composi\-tion of fuzzy recog\-nizers every component can be~replaced by such afterset fuzzy recognizer, what results in a smaller fuzzy recognizer to be analysed, and do not affect conflicting properties of the components.~It is also interesting to study applications of~fuzzy quasi-orders for reducing automaton states in other branches of the theory of discrete event systems, for example in the fault diagnosis, and these applications will be a subject of our future research.

Note again that the meaning of state reductions by means of fuzzy quasi-orders and fuzzy equivalences is in their possible effectiveness, as opposed to the minimization which is not effective.~However,~by~Theorem 3.5 we show that there exists a fuzzy recognizer such that no its state reduction by means of fuzzy quasi-orders or fuzzy equivalences provide a minimal fuzzy recognizer.

\section{Preliminaries}

\subsection{Fuzzy sets and relations}\medskip

In this paper we will use complete residuated lattices as
structures of membership values.~A {\it residuated lattice\/} is
an algebra ${\cal L}=(L,\land ,\lor , \otimes ,\to , 0, 1)$ such
that
\begin{itemize}
\parskip=-2pt
\item[{\rm (L1)}] $(L,\land ,\lor , 0, 1)$ is a lattice with the
least element $0$ and the greatest element~$1$, \item[{\rm (L2)}]
$(L,\otimes ,1)$ is a commutative monoid with the unit $1$,
\item[{\rm (L3)}] $\otimes $ and $\to $ form an {\it adjoint
pair\/}, i.e., they satisfy the {\it adjunction property\/}: for
all $x,y,z\in L$,
\begin{equation}\label{eq:adj}
x\otimes y \le z \ \iff \ x\le y\to z .
\end{equation}
\end{itemize}
If, in addition, $(L,\land ,\lor , 0, 1)$ is a complete lattice, then ${\cal L}$ is called a {\it
complete residuated lattice\/}.

The operations $\otimes $ (called {\it multiplication\/}) and $\to
$ (called {\it residuum\/}) are intended for modeling the
conjunction and implication of the corresponding logical calculus,
and supremum ($\bigvee $) and infimum ($\bigwedge $) are intended
for modeling of the existential and general quantifier,
respectively. An operation $\lra $ defined by
\begin{equation}\label{eq:bires}
x\lra y = (x\to y) \land (y\to x),
\end{equation}
called {\it biresiduum\/} (or {\it biimplication\/}), is used for
modeling the equivalence of truth values. It can be easily
verified that with respect to $\le $, $\otimes $ is isotonic in
both arguments, and $\to $ is isotonic in the second and antitonic
in the first argument. Emphasizing their monoidal structure, in
some sources residuated lattices are called integral, commutative,
residuated $\ell $-monoids \cite{Hohle.95}. It can be easily
verified that with respect to $\le $, $\otimes $ is isotonic in
both arguments, $\to $ is isotonic in the second and antitonic in
the first argument, and for any $x,y,z\in L$ and any
$\{x_i\}_{i\in I},\{y_i\}_{i\in I}\subseteq L$, the following
hold:
\begin{eqnarray}
&& x\to y \le x\otimes z\to y\otimes z ,\label{eq:res.mult} \\
&& \Bigl(\bigvee_{i\in I}x_i\Bigr)\otimes x=\bigvee_{i\in I}(x_i\otimes x) ,\label{eq:inf.dist} \\
&& \bigwedge_{i\in I}(x_i\to y_i)\le \Bigl(\bigwedge_{i\in I}x_i\Bigr)\to
\Bigl(\bigwedge_{i\in I}y_i\Bigr) \label{eq:res.inf.inf}\\
&& \bigwedge_{i\in I}(x_i\to y_i)\le \Bigl(\bigvee_{i\in I}x_i\Bigr)\to
\Bigl(\bigvee_{i\in I}y_i\Bigr). \label{eq:res.inf.sup}
\end{eqnarray}
For other properties of complete residuated lattices one can refer
to \cite{Bel.02,BV.05}.

The most studied and applied structures of truth values, defined
on the real unit interval $[0,1]$ with\break $x\land y =\min
(x,y)$ and $x\lor y =\max (x,y)$, are the {\it {\L}ukasiewicz
structure\/} ($x\otimes y = \max(x+y-1,0)$, $x\to y=
\min(1-x+y,1)$), the {\it Goguen} ({\it product\/}) {\it
structure\/} ($x\otimes y = x\cdot y$, $x\to y= 1$ if $x\le y$
and~$=y/x$ otherwise) and the {\it G\"odel structure\/} ($x\otimes
y = \min(x,y)$, $x\to y= 1$ if $x\le y$ and $=y$
otherwise).~More~generally, an algebra $([0,1],\land ,\lor ,
\otimes,\to , 0, 1)$ is a complete~resi\-duated lattice if and
only if $\otimes $ is a left-continuous t-norm and the residuum is
defined by $x\to y = \bigvee \{u\in [0,1]\,|\, u\otimes x\le y\}$.
Another important set of truth values is the set
$\{a_0,a_1,\ldots,a_n\}$, $0=a_0<\dots <a_n=1$, with $a_k\otimes
a_l=a_{\max(k+l-n,0)}$ and $a_k\to a_l=a_{\min(n-k+l,n)}$. A
special case of the latter algebras is the two-element Boolean
algebra of classical logic with the support $\{0,1\}$.~The only
adjoint pair on the two-element Boolean algebra consists of the
classical conjunction and implication operations.~This structure
of truth values we call the {\it Boolean structure\/}.~A
residuated~lattice $\cal L$ satisfying $x\otimes y=x\land y$ is
called a {\it Heyting algebra\/}, whereas a Heyting algebra
satisfying the prelinearity axiom $(x\to y)\lor (y\to x)=1$ is
called a {\it G\"odel algebra\/}. If any finitelly
generated~sub\-algebra of residuated lattice $\cal L$ is finite,
then $\cal L$ is called {\it locally finite\/}.~For example, every
G\"odel algebra, and hence, the G\"odel structure, is locally
finite, whereas the product structure is not locally finite.

In the further text $\cal L$ will be a complete residuated
lattice.~A {\it fuzzy subset\/} of a set $A$ {\it over\/} ${\cal
L}$, or~simply a {\it fuzzy subset\/} of $A$, is any mapping from
$A$ into $L$.~Ordinary crisp subsets~of~$A$ are considered as
fuzzy subsets of $A$ taking membership values in the set
$\{0,1\}\subseteq L$.~Let $f$ and $g$ be two fuzzy subsets of
$A$.~The {\it equality\/} of $f$ and $g$ is defined as the usual
equality of mappings, i.e., $f=g$ if and only if $f(x)=g(x)$, for
every $x\in A$. The {\it inclusion\/} $f\le g$ is also defined
pointwise: $f\le g$ if and only if $f(x)\le g(x)$, for every $x\in
A$. Endowed with this partial order the set $L^A$ of all fuzzy
subsets of $A$ forms a complete residuated lattice, in which the
meet (intersection) $\bigwedge_{i\in I}f_i$ and the join (union)
$\bigvee_{i\in I}f_i$ of an arbitrary family $\{f_i\}_{i\in I}$ of
fuzzy subsets of $A$ are mappings from $A$ into $L$ defined by
\[
\left(\bigwedge_{i\in I}f_i\right)(x)=\bigwedge_{i\in I}f_i(x), \qquad \left(\bigvee_{i\in
I}f_i\right)(x)=\bigvee_{i\in I}f_i(x),
\]
and the {\it product\/} $f\otimes g$ is a fuzzy subset defined by
$f\otimes g (x)=f(x)\otimes g(x)$, for every $x\in A$.
The {\it crisp~part\/} of a fuzzy subset $f$ of $A$ is a crisp
subset $\hat f=\{a\in A\,|\, f(a)=1\}$ of $A$.~We~will~also consider $\hat f$ as a
mapping $\hat f:A\to L$ defined by $\hat f(a)=1$, if $f(a)=1$, and $\hat f(a)=0$,
if $f(a)<1$.

A {\it fuzzy relation\/} on a set $A$ is any mapping from $A\times
A$ into $L$, that is to say, any fuzzy subset of $A\times A$, and
the equality, inclusion, joins, meets and ordering of fuzzy
relations are defined as for fuzzy sets.~The set of all fuzzy
relations on $A$ will be denoted by ${\cal R}(A)$.

For fuzzy relations $P,Q\in {\cal R}(A)$, their {\it
composition\/}~$P\circ Q$ is a fuzzy relation on $A$ defined by
\begin{equation}\label{eq:comp.rr}
(P \circ Q )(a,b)=\bigvee_{c\in A}\,P(a,c)\otimes Q(c,b),
\end{equation}
for all $a,b\in A$, and for a fuzzy subset $f$ of $A$ and a fuzzy
relation $P\in {\cal R}(A)$, the {\it compositions\/} $f\circ P$
and $P\circ f$ are fuzzy subsets of $A$ defined by
\begin{equation}\label{eq:comp.sr}
(f \circ P)(a)=\bigvee_{b\in A}\,f(b)\otimes P(b,a),\quad
(P \circ f)(a)=\bigvee_{b\in A}\,P(a,b)\otimes f(b),
\end{equation}
for any $a\in A$. Finally, for fuzzy subsets $f$ and $g$ of $A$ we write
\begin{equation}\label{eq:comp.ss}
f \circ g =\bigvee_{a\in A}\,f(a)\otimes g(a) .
\end{equation}
The value $f\circ g$ can be interpreted as the "degree of
overlapping" of $f$ and $g$.

For any $P,Q,R\in {\cal R}(A)$ and any $\{P_i\}_{i\in
I},\{Q_i\}_{i\in
I}\subseteq {\cal R}(A)$, the following hold:

\begin{eqnarray}
&& (P\circ Q)\circ R = P\circ (Q\circ R), \label{eq:comp.as} \\
&& P\le Q\ \ \text{implies}\ \ P\circ R \le  Q\circ R \ \ \text{and}\ \ R\circ P \le  R\circ Q, \label{eq:comp.mon} \\
&& P\circ \bigl(\bigvee_{i\in I}Q_i\bigr) = \bigvee_{i\in I}(P\circ Q_i) , \ \
\bigl(\bigvee_{i\in I}P_i\bigr)\circ Q = \bigvee_{i\in I}(P_i\circ Q) \label{eq:comp.sup} \\
&& P\circ \bigl(\bigwedge_{i\in I}Q_i\bigr) \le \bigwedge_{i\in I}(P\circ Q_i) , \ \
\bigl(\bigwedge_{i\in I}P_i\bigr)\circ Q \le \bigwedge_{i\in I}(P_i\circ Q). \label{eq:comp.inf}
\end{eqnarray}
We can also easily
verify that
\begin{equation}\label{eq:comp.as2}
(f\circ P)\circ Q=f\circ (P\circ Q), \quad (f\circ P)\circ g=f\circ (P\circ g),
\end{equation}
for arbitrary fuzzy subsets $f$ and $g$ of $A$, and fuzzy
relations $P$ and $Q$ on $A$, and hence, the parentheses~in
(\ref{eq:comp.as}) can be omitted.~For $n\in \Bbb N$, an $n$-th
power of a fuzzy relation $R$ on $A$ is a fuzzy relation $R^n$ on
$A$ defined inductively by $R^1=R$ and $R^{n+1}=R^n\circ R$.~We
also define $R^0$ to be the equality relation~on~$A$.

Note also that if $A$ is a finite set with $n$ elements, then $P$
and $Q$ can be treated~as $n\times n$ fuzzy matrices over $\cal L$
and $P\circ Q$ is the matrix~pro\-duct, whereas $f\circ P$ can be
treated as the product of a $1\times n$ matrix $f$ and an $n\times
n$ matrix $P$, and $P\circ f$ as the product of an $n\times n$
matrix $P$ and an $n\times 1$ matrix $f^t$ (the transpose~of~$f$).

A fuzzy relation $R$ on $A$ is said to be
\begin{itemize}\parskip=-3pt
\parskip=-2pt
\item[(R)] {\it reflexive\/} (or {\it fuzzy reflexive\/}) if $R (a,a)=1$, for every $a\in A$;
\item[(S)] {\it symmetric\/} (or {\it fuzzy symmetric\/}) if $R (a,b)=R (b,a)$, for all $a,b\in A$;
\item[(T)] {\it transitive\/} (or {\it fuzzy transitive\/}) if for all $a,b,c\in A$ we have
\[
R (a,b)\otimes R (b,c)\le R (a,c).
\]
\end{itemize}

For a fuzzy relation $R$ on a set $A$, a fuzzy relation $R^\infty $ on $A$ defined
by
\[
R^\infty = \bigvee_{n\in \Bbb N} R^n
\]
is the least transitive fuzzy relation on $A$ containing $R$, and it is called the
{\it transitive closure\/} of $R$.

A fuzzy relation on $A$ which is reflexive, symmetric and
transitive is called a
{\it fuzzy equivalence\/}.~With respect to the ordering of fuzzy
relations, the set ${\cal E}(A)$ of all fuzzy equivalences on $A$ is a complete
lattice,
in which the meet coincide with the ordinary intersection of fuzzy relations, but in the
general case, the join in ${\cal E}(A)$ does not coincide with the ordinary union of fuzzy
relations.

For a fuzzy~equi\-valence $E  $ on $A$ and $a\in A$ we define a
fuzzy subset $E _a$ of $A$ by $E_a(x)=E(a,x)$,~for~every $x\in
A$.~We call $E _a$ an {\it equivalence class\/} of $E$
deter\-mined by $a$.~The set $A/E  =\{E _a\,|\, a\in A\}$ is
called~the {\it factor set\/} of $A$ w.r.t. $E$ (cf.
\cite{Bel.02,CIB.07}). For an equivalence $\pi $ on $A$, the
related factor set will be denoted by $A/\pi $ and the equivalence
class of an element $a\in A$ by $\pi_a$.~A fuzzy equivalence $E$
on a set $A$ is called a {\it fuzzy equality\/} if for all $x,y\in
A$, $E(x,y)=1$ implies $x=y$. In other words, $E$ is a fuzzy
equality if and only if its crisp part $\widehat E$ is a crisp
equality.

A fuzzy relation on a set $A$ which is reflexive and transitive is
called a {\it fuzzy quasi-order\/}, and a~reflexive and transitive
crisp relation on $A$ is called a {\it quasi-order\/}.~In some
sources quasi-orders and fuzzy quasi-orders are called preorders
and fuzzy preorders (for example, see
\cite{CC.03,CC.04,INY.04,ISY.05}).~Note that a reflexive fuzzy
relation $R$ is a fuzzy quasi-order if and only if $R^2=R$.~With
respect to the ordering~of fuzzy relations,~the set ${\cal Q}(A)$
of all fuzzy quasi-orders on $A$ is a complete lattice, in
which~the~meet~coin\-cide with the ordinary intersection of fuzzy
relations. Nevertheless, in the general case, the join in ${\cal
Q}(A)$ does not coincide with the ordinary union of fuzzy
relations.~Namely, if $R$ is the join in ${\cal Q}(A)$ of a family
$\{R_i\}_{i\in I}$~of~fuzzy quasi-orders on $A$, then $R$ can be
represented by
\begin{equation}\label{eq:fqo.join}
R=\biggl( \bigvee_{i\in I}R_i \biggr)^{\!\infty} = \bigvee_{n\in \Bbb N}\biggl( \bigvee_{i\in I}R_i \biggr)^{\!n} .
\end{equation}
If $R$ is a fuzzy quasi-order on a set $A$, then a fuzzy relation
$E_R$  defined by $E_R=R\land R^{-1}$ is a fuzzy equivalence on
$A$, and is called a {\it natural fuzzy equivalence\/} of $R$.~A
fuzzy quasi-order $R$ on a set  $A$ is a {\it fuzzy order\/} if
for all $a,b\in A$, $R(a,b)=R(b,a)=1$ implies~$a=b$, i.e., if the
natural fuzzy equivalence~$E_R$ of $R$ is a fuzzy
equality.~Clearly, a fuzzy quasi-order $R$ is a fuzzy order if and
only if its crisp part $\widehat R$ is a crisp order.

It is worth noting that different concepts of a fuzzy order have
been discussed in literature~concerning fuzzy relations (for
example, see \cite{Bod.00,Bod.03,BDeBF.07,BK.04} and other sources
cited there). In particular, fuzzy~orders defined here differ from
fuzzy order\-ings defined in \cite{Bod.00,Bod.03,BK.04}.~

For more information about fuzzy sets and fuzzy logic we refer to the books \cite{Bel.02,BV.05,KY.95}, as
well as to recent papers \cite{Z.05,Z.08}, which review fuzzy
logic and uncertainty in a much broader perspective.

\subsection{Fuzzy automata and languages}\label{ssec:FAL}
\medskip

By a {\it fuzzy automaton over\/} $\cal L$, or simply a {\it fuzzy
automaton\/}, is defined as a triple ${\cal A}=(A,X,\delta^A)$,
where~$A$ and $X$ are the {\it set of states\/} and the {\it input
alphabet\/}, and $\delta^A:A\times X\times A\to L$~is~a fuzzy
subset~of $A\times X\times A$, called the {\it fuzzy transition
function\/}.~We can interpret $\delta ^{A}(a,x,b)$ as the
degree~to~which an~input letter $x\in X$~causes~a~transition from
a state $a\in A$ into a state $b\in A$.~The input alphabet $X$
will be always finite, but for methodological reasons we will
allow the set of states $A$ to be infinite.~A~fuzzy auto\-maton
whose set of states is finite is called a {\it fuzzy finite
automaton\/}.~Cardinality of a fuzzy automaton ${\cal
A}=(A,X,\delta^{A} )$, denoted as $|{\cal A}|$, is defined as the
cardinality of its set of states $A$.

Let $X^*$ denote the free monoid over the alphabet $X$, and  let $e\in X^*$ be the
empty word.~The mapping~$\delta^{A} $ can be extended~up to a mapping
$\delta_*^A:A\times X^*\times A\to L$ as follows: If $a,b\in A$, then
\begin{equation}\label{eq:delta.e}
\delta_*^A(a,e ,b)=\begin{cases}\ 1 & \text{if}\ a=b \\ \ 0 & \mbox{otherwise}
\end{cases},
\end{equation}
and if $a,b\in A$, $u\in X^*$ and $x\in X$, then
\begin{equation}\label{eq:delta.x}
\delta_*^A(a,ux ,b)= \bigvee _{c\in A} \delta_*^A(a,u,c)\otimes \delta^A (c,x,b).
\end{equation}

By (\ref{eq:inf.dist}) and Theorem 3.1 \cite{LP.05} (see also \cite{Qiu.01,Qiu.02,Qiu.06}), we have that
\begin{equation}\label{eq:delta.uv}
\delta_*^A(a,uv,b)= \bigvee _{c\in A} \delta_*^A(a,u,c)\otimes \delta_*^A(c,v,b),
\end{equation}
for all $a,b\in A$ and $u,v\in X^*$, i.e., if $w=x_1\cdots x_n$, for $x_1,\ldots ,x_n\in X$,
then
\begin{equation}\label{eq:delta.x1xn}
\delta_*^A(a,w,b)= \bigvee _{(c_1,\ldots ,c_{n-1})\in A^{n-1}}
\delta^{A}(a,x_1,c_1)\otimes \delta^{A}(c_1,x_2,c_2) \otimes\cdots \otimes \delta^{A}(c_{n-1},x_n,b).
\end{equation}
Intuitively, the product $\delta^{A}(a,x_1,c_1)\otimes \delta^{A}(c_1,x_2,c_2) \otimes\cdots \otimes
\delta^{A}(c_{n-1},x_n,b)$ represents the degree to which the input word $w$ causes a transition from a
state $a$ into a state $b$ through the sequence of intermediate states $c_1,\ldots ,c_{n-1}\in A$,
and $\delta_*^A(a,w,b)$ represents the supremum of degrees of all possible transitions from $a$ into
$b$ caused by $w$.

For any $u\in X^*$, and any $a,b\in A$ define a fuzzy relation
$\delta_u^A$ on $A$ by
\begin{equation}\label{eq:trans.rel}
\delta_u^A (a,b) = \delta_*^A (a,u,b),
\end{equation}
called the {\it fuzzy transition relation\/} determined by $u$.
Then (\ref{eq:delta.uv}) can be written as
\begin{equation}\label{eq:delta.uv.2}
\delta_{uv}^A= \delta_u^A\circ \delta_v^A,
\end{equation}
for all $u,v\in X^*$.

An {\it initial fuzzy automaton\/} is defined as a quadruple
${\cal A}=(A,X,\delta ^A,\sigma^A )$, where $(A,X,\delta^A )$ is a
fuzzy automaton and $\sigma^A \in L^A$ is the fuzzy set of {\it
initial states\/}, and a {\it fuzzy recognizer\/} is defined as a
five-tuple ${\cal A}=(A,X,\delta^A ,\sigma^A ,\tau^A )$, where
$(A,X,\delta^A,\sigma^A)$ is as above, and $\tau^A \in L^A$ is the
fuzzy set of~{\it terminal states\/}. We also say that $\cal A$ is
a {\it fuzzy recognizer belonging to the fuzzy automaton\/}
$(A,X,\delta^A)$.

A {\it fuzzy language\/} in $X^*$ over $\cal L$, or briefly a {\it
fuzzy language\/}, is any fuzzy subset of~$X^*$, i.e., any mapping
from $X^*$ into $L$.~A {\it fuzzy language recognized by a fuzzy
recognizer\/} ${\cal A}=(A,X,\delta^{A} ,\sigma^{A} ,\tau^{A} )$,
denoted as $L({\cal A})$, is a fuzzy language in $L^{X^*}$ defined
by
\begin{equation}\label{eq:recog}
L({\cal A})(u) = \bigvee_{a,b\in A} \sigma^{A} (a)\otimes \delta_*^A(a,u,b)\otimes \tau^A(b) ,
\end{equation}
or equivalently,
\begin{equation}\label{eq:recog.comp}
\begin{aligned}
L({\cal A})(e) &= \sigma^A\circ \tau^A ,\\
L({\cal A})(u) &= \sigma^A\circ \delta_{x_1}^A\circ \delta_{x_2}^A\circ \cdots
\circ \delta_{x_n}^A\circ \tau^A ,
\end{aligned}
\end{equation}
for any $u=x_1x_2\dots x_n\in X^+$, where $x_1,x_2,\ldots ,x_n\in
X$.~In other words, the equality (\ref{eq:recog}) means that~the
membership degree of the word $u$~to~the fuzzy language $L({\cal
A})$ is equal to the degree to which $\cal A$ recognizes or
accepts the word $u$.~

The~{\it reverse fuzzy auto\-maton\/} of a fuzzy automaton ${\cal
A}=(A,X,\delta ^{A})$ denoted as $\bar {\cal
A}=(A,X,\bar\delta^{A} )$, is a fuzzy~automa\-ton  with the fuzzy
transition function defined by $\bar\delta^{A} (a,x,b)=\delta
^{A}(b,x,a)$, for all $a,b\in A$ and $x\in X$. A {\it reverse
fuzzy recognizer\/} of a fuzzy recognizer ${\cal A}=(A,X,\delta
^{A},\sigma^A,\tau^A)$ is  $\bar {\cal
A}=(A,X,\bar\delta^{A},\bar\sigma^A,\bar\tau^A )$,~a~fuzzy
recognizer  with the fuzzy transition function $\bar\delta^{A}$
defined as above, and fuzzy sets of initial and terminal states
defined by $\bar\sigma^A =\tau^A$ and $\bar\tau^A =\sigma^A$.

Fuzzy~auto\-mata ${\cal A}=(A,X,\delta^{A} )$ and ${\cal
A}'=(A',X,\delta^{A'} )$ are {\it isomorphic\/} if there is a
bijective mapping $\phi :A\to A'$ such that $\delta^{A}
(a,x,b)=\delta^{A'}(\phi(a),x,\phi(b))$, for all $a,b\in A$ and
$x\in X$.~It is easy to check that in this case we also have that
$\delta_*^A(a,u,b)=\delta_*^{A'}(\phi(a),u,\phi(b))$, for all
$a,b\in A$ and $u\in X^*$.~Similarly, fuzzy recognizers ${\cal
A}=(A,X,\delta^{A},\sigma^A,\tau^A)$ and ${\cal
A}'=(A',X,\delta^{A'}, \sigma^{A'},\tau^{A'} )$ are {\it
isomorphic\/} if there exists a bijective mapping $\phi :A\to A'$
such that $\delta^{A} (a,x,b)=\delta^{A'}(\phi(a),x,\phi(b))$, for
all $a,b\in A$ and $x\in X$, and also,
$\sigma^A(a)=\sigma^{A'}(\phi (a))$ and $\tau^A(a)=\tau^{A'}(\phi
(a))$, for every $a\in A$.

If ${\cal A}=(A,X,\delta^{A} )$ is a fuzzy automaton such that
$\delta^A$ is a crisp relation, then $\cal A$ is an ordinary crisp
{\it non-deterministic automaton\/}, while if $\delta^{A} $ is a
mapping of $A\times X$ into~$A$, then $\cal A$ is an ordinary {\it
deterministic automaton\/}.~Evidently, in these two cases we have
that $\delta_*^A$ is also a crisp subset of $A\times X^*\times A$,
and a mapping of $A\times X^*$ into $A$, respectively.~In other
words, non-deterministic automata are fuzzy automata over the
Boolean structure.~If  ${\cal A}=(A,X,\delta^{A},\sigma^A,\tau^A)$
such that $\delta^A$ is a crisp relation and $\sigma^A$ and
$\tau^A$ are crisp subsets of $A$, then $\cal A$ is called a {\it
non-deterministic recognizer\/}.

\medskip

For undefined notions and notation one can refer to
\cite{Bel.02,BV.05,MM.02}.

\section{Afterset and foreset fuzzy automata}

Let $R$ be a fuzzy quasi-order on a set $A$.~For each $a\in A$,
the {\it $R$-afterset\/} of $a$ is the fuzzy set $R_a\in L^{A}$
defined by $R_a(b)=R(a,b)$, for any $b\in A$,~while the {\it
$R$-foreset\/} of $a$ is the fuzzy set $R^a\in L^{A}$ defined by
$R^a(b)=R(b,a)$, for any $b\in A$ (see
\cite{BK.80,DeBK.94,DeCK.04}).~The set of all $R$-aftersets will
be denoted~by~$A/R$, and the set of all $R$-foresets will be
denoted by $A\backslash R$.~Clearly, if $R$ is~a~fuzzy
equivalence, then $A/R=A\backslash R$ is the set of all
equivalence classes of $R$.

If $f$ is an arbitrary fuzzy subset of $A$, then fuzzy relations $R_f$ and $R^f$
on $A$ defined by
\begin{equation}\label{eq:Rf}
R_f(a,b)=f(a)\to f(b), \ \ \ R^f(a,b)=f(b)\to f(a),
\end{equation}
for all $a,b\in A$, are fuzzy quasi-orders on $A$.~In particular, if $f$ is a normalized
fuzzy subset of $A$, then it is an afterset of $R_f$ and a foreset of $R^f$.

\begin{theorem}\label{th:ER} Let $R$ be a fuzzy quasi-order on a set
$A$ and $E$ the natural fuzzy equivalence of $R$.~Then
\begin{itemize}\itemindent10pt\parskip=-2pt
\item[{\rm (a)}] For arbitrary $a, b\in A$ the following conditions are equivalent:
\begin{itemize}\itemindent10pt\parskip=0pt
\item[\rm{(i)}] $E(a,b)=1$;
\item[\rm{(ii)}] $E_a = E_b$;
\item[\rm{(iii)}] $R_a = R_b$;
\item[\rm{(iv)}] $R^a = R^b$.
\end{itemize}
\item[{\rm (b)}] Functions $R_a\mapsto E_a$ of $A/R$ to $A/E$, and $R_a\mapsto R^a$ of $A/R$ to $A\backslash R$, are bijective functions.
\end{itemize}
\end{theorem}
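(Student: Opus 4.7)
The plan is to prove part (a) by establishing (i) $\iff$ (iii), (i) $\iff$ (iv), and (i) $\iff$ (ii) separately, each time using only reflexivity, transitivity, and (for (ii)) symmetry, together with the unit law $1\otimes x=x$. Part (b) will then follow almost formally from part (a).

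For (i) $\implies$ (iii), suppose $E(a,b)=1$. Since $E=R\land R^{-1}$, this gives $R(a,b)=R(b,a)=1$. For any $c\in A$, transitivity of $R$ yields $R(a,c)\ge R(a,b)\otimes R(b,c)=R(b,c)$ and, symmetrically, $R(b,c)\ge R(b,a)\otimes R(a,c)=R(a,c)$, so $R_a=R_b$. For the converse (iii) $\implies$ (i), I would simply evaluate $R_a=R_b$ at $c=a$ and at $c=b$ and use reflexivity of $R$, obtaining $R(b,a)=R(a,a)=1$ and $R(a,b)=R(b,b)=1$, hence $E(a,b)=1$. The equivalence (i) $\iff$ (iv) is analogous, applied to the fuzzy quasi-order $R^{-1}$ (which is reflexive and transitive since $R$ is), noting that the $R^{-1}$-aftersets coincide with the $R$-foresets up to the same kind of argument; here $R^a(c)=R(c,a)=R^{-1}(a,c)$.

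For (i) $\iff$ (ii), I would use that $E$ itself is a fuzzy equivalence (reflexive, symmetric, transitive) and repeat the same reflexivity/transitivity argument as in (i) $\iff$ (iii), now applied to $E$: from $E(a,b)=1$ and transitivity we deduce $E(a,c)=E(b,c)$ for every $c$, and conversely, evaluating $E_a=E_b$ at $c=a$ and using $E(a,a)=1$ gives $E(a,b)=1$.

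For part (b), the maps $R_a\mapsto E_a$ and $R_a\mapsto R^a$ are well-defined and injective as immediate consequences of the equivalences (iii) $\iff$ (ii) and (iii) $\iff$ (iv) from part (a): $R_a=R_b$ holds if and only if $E_a=E_b$, which holds if and only if $R^a=R^b$. Surjectivity is automatic since every element of $A/E$ (respectively $A\backslash R$) is by definition of the form $E_a$ (respectively $R^a$) for some $a\in A$, and such an element is the image of $R_a$. The only genuine content, therefore, lies in part (a); I do not anticipate a real obstacle, since the argument reduces to standard manipulations of reflexivity, transitivity, and the unit law in a residuated lattice, and no properties of the join in ${\cal Q}(A)$ or any deeper structural facts are needed.
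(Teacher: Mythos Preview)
Your proposal is correct and follows essentially the same approach as the paper: the core argument for (i)$\iff$(iii) via transitivity (using $R(a,b)=1$ to get $R(a,c)\ge R(b,c)$) and reflexivity (evaluating $R_a=R_b$ at $c=a,b$) is exactly what the paper does, and the remaining equivalences are handled in both cases by the obvious symmetry. Your treatment is in fact slightly more explicit than the paper's, which proves only (i)$\iff$(iii) in detail (with a labeling slip) and dismisses the rest as analogous; part (b) is derived from (a) in both.
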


\begin{proof} (a) Consider arbitrary $a,b\in A$.

(i)$\implies $(ii). Let $E(a,b)=1$, that is $R(a,b)=R(b,a)=1$.Then
for every $c\in A$ we have that
\[
R_b(c)=R(b,c)=R(a,b)\otimes R(b,c)\le R(a,c)=R_a(c),
\]
whence $R_b\le R_a$. Analogously we prove that $R_a\le R_b$, and
therefore, $R_a=R_b$.

(ii)$\implies $(i). Let $R_a = R_b$. Then
\[
R(a,b)=R_a(b)\ge R_b(b)=R(b,b)=1,
\]
which yields $R(a,b)=1$. Analogously we prove that $R(b,a)=1$, and hence, $E(a,b)=1$.

Equivalence (i)$\iff$(iii) can be proved similarly as (i)$\iff $(ii).

The assertion (b) follows immediately by (a).
\end{proof}

Let us consider the G\"odel structure and a fuzzy quasi-order $R$ on a set $A$  given by
\[
R=\begin{bmatrix}
1 & 0.3 & 0.3 \\
0 &  1  & 0.2 \\
0 &  1  &  1
\end{bmatrix}.
\]
The natural fuzzy equivalence $E_R$ of $R$ is calculated by
$E_R(a,b)=R(a,b)\wedge R^{-1}(a,b)=R(a,b)\wedge R(b,a)$, i.e.
\[
E_R=\begin{bmatrix}
1 &  0  &  0 \\
0 &  1  & 0.2 \\
0 & 0.2 &  1
\end{bmatrix}.
\]

If $A$ is a finite set with $n$ elements and a fuzzy quasi-order
$R$ on $A$ is treated as an $n\times n$ fuzzy matrix over $\cal
L$, then $R$-aftersets are row vectors, whereas $R$-foresets are
column vectors of this matrix.~The~previous theorem says that
$i$-th and $j$-th row vectors of this matrix  are equal if and
only if its $i$-th and $j$-th column vectors are equal, and vice
versa.~Moreover, we have that $R$ is a fuzzy order if and only if
all~its~row~vectors are different, or equivalently, if and only if
all its column vectors are different.

Let ${\cal A}=(A, X, \delta^A)$ be a fuzzy automaton and let $R$ be a
fuzzy quasi-order on $A$.~We can define~the~fuzzy transition function
$\delta^{A/R}:A/R\times X\times A/R\to L$ by
\begin{equation}\label{eq:aft.aut}
\delta^{A/R}(R_a,x,R_b)=\bigvee_{a',b'\in
A}R(a,a')\otimes\delta^A(a',x,b')\otimes R(b',b),
\end{equation}
or equivalently
\begin{equation}\label{eq:aft.aut.1}
\delta^{A/R}(R_a,x,R_b)=(R\circ\delta_x^A\circ R)(a,b)=R_a\circ \delta_x^A\circ R^b,
\end{equation}
for all $a,b\in A$ and $x\in X.$ According to the statement (a) of
Theorem \ref{th:ER}, $\delta^{A/R}$ is well-defined, and we~have that ${\cal
A}/R=(A/R,X,\delta^{A/R})$ is a fuzzy automaton, called the {\it
afterset fuzzy automaton\/} of $\cal A$ w.r.t.~$R$.

In addition, if ${\cal A}=(A,X,\delta^A,\sigma^A,\tau^A)$ is a
fuzzy recognizer, then we define the fuzzy transition function
$\delta^{A/R}$ as in (\ref{eq:aft.aut}), and we also define a
fuzzy set $\sigma^{A/R}\in L^{A/R}$ of initial~states~and a fuzzy
set $\tau^{A/R}\in L^{A/R}$ of terminal states by
\begin{gather}
\sigma^{A/R}(R_a) = \bigvee_{a'\in A}\sigma^{A} (a')\otimes R(a',a) = (\sigma^A\circ
R)(a) = \sigma^A\circ R^a , \label{eq:sE} \\
\tau^{A/R}(R_a) = \bigvee_{a'\in A}R(a,a')\otimes \tau^A(a') = (R\circ\tau^A)(a) = R_a\circ\tau^A , \label{eq:tE}
\end{gather}
for any $a\in A$.~According to (a) of Theorem \ref{th:ER},
$\sigma^{A/R}$ and $\tau^{A/R}$ are well-defined functions, and~we~have that
${\cal A}/R=(A/R,X,\delta^{A/R},\sigma^{A/R},\tau^{A/R})$ is a~fuzzy recognizer, which is called the
{\it afterset fuzzy recognizer\/} of $\cal A$ w.r.t.~$R$.~

Analogously, for a fuzzy automaton ${\cal A}=(A, X, \delta^A)$, the {\it  foreset fuzzy automaton\/} of $\cal A$ w.r.t.~$R$ is a fuzzy
automaton ${\cal A}\backslash R=(A\backslash R,X,\delta^{A\backslash R})$ with the
fuzzy transition function $\delta^{A\backslash R}$ defined by
\begin{equation}\label{eq:for.aut}
\delta^{A\backslash R}(R^a,x,R^b)=\bigvee_{a',b'\in
A}R(a,a')\otimes\delta^A(a',x,b')\otimes R(b',b) =(R\circ\delta_x^A\circ R)(a,b)=R_a\circ \delta_x^A\circ R^b,
\end{equation}
for all $a,b\in A$ and $x\in X$. In addition, for a fuzzy
recognizer ${\cal A}=(A,X,\delta^A,\sigma^A,\tau^A)$, the {\it
foreset fuzzy recognizer\/} of $\cal A$ w.r.t.~$R$ is a fuzzy
recognizer ${\cal A}\backslash R=(A\backslash
R,X,\delta^{A\backslash R},\sigma^{A\backslash
R},\tau^{A\backslash R})$ with a  a fuzzy set $\sigma^{A\backslash
R}\in L^{A\backslash R}$ of initial~states~and a fuzzy set
$\tau^{A\backslash R}\in L^{A\backslash R}$ of terminal states by
\begin{gather}
\sigma^{A\backslash R}(R^a) = \bigvee_{a'\in A}\sigma^{A} (a')\otimes R(a',a) = (\sigma^A\circ
R)(a) = \sigma^A\circ R^a , \label{eq:sE2} \\
\tau^{A\backslash R}(R^a) = \bigvee_{a'\in A}R(a,a')\otimes \tau^A(a') = (R\circ\tau^A)(a) = R_a\circ\tau^A , \label{eq:tE2}
\end{gather}
for any $a\in A$.~

We can easily prove the following:

\begin{theorem}\label{th:aft.for}
For any fuzzy quasi-order $R$ on a fuzzy recognizer {\rm ({\it automaton\/})} ${\cal A}$ the afterset fuzzy recog\-nizer {\rm ({\it automaton\/})} ${\cal A}/R$ and the foreset fuzzy recognizer {\rm ({\it automaton\/})} ${\cal A}\backslash R$ are isomorphic.
\end{theorem}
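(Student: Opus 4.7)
The plan is to exhibit the obvious candidate isomorphism, namely the map $\phi:A/R\to A\backslash R$ defined by $\phi(R_a)=R^a$ for every $a\in A$, and then observe that the defining formulas for the transition relations and for the fuzzy sets of initial and terminal states in ${\cal A}/R$ and ${\cal A}\backslash R$ coincide as functions of the representative $a\in A$. So the proof essentially reduces to checking that $\phi$ is well-defined and bijective, and then to a side-by-side comparison of definitions.

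First, by Theorem \ref{th:ER}(a), for all $a,b\in A$ we have $R_a=R_b$ if and only if $R^a=R^b$, so the assignment $R_a\mapsto R^a$ does not depend on the chosen representative, and it is injective; surjectivity is immediate from the definition of $A\backslash R$. Hence $\phi$ is a bijection (this is precisely what Theorem \ref{th:ER}(b) records).

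Next I would verify the transition relation compatibility. By (\ref{eq:aft.aut.1}) we have
\[
\delta^{A/R}(R_a,x,R_b)=(R\circ\delta_x^A\circ R)(a,b),
\]
while (\ref{eq:for.aut}) gives
\[
\delta^{A\backslash R}(R^a,x,R^b)=(R\circ\delta_x^A\circ R)(a,b),
\]
so the two right-hand sides are literally the same. Therefore $\delta^{A/R}(R_a,x,R_b)=\delta^{A\backslash R}(\phi(R_a),x,\phi(R_b))$ for all $a,b\in A$ and $x\in X$, which is exactly the condition in the definition of isomorphism of fuzzy automata.

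Finally, for the recognizer case, comparing (\ref{eq:sE}) with (\ref{eq:sE2}) and (\ref{eq:tE}) with (\ref{eq:tE2}) yields $\sigma^{A/R}(R_a)=\sigma^A\circ R^a=\sigma^{A\backslash R}(R^a)$ and $\tau^{A/R}(R_a)=R_a\circ\tau^A=\tau^{A\backslash R}(R^a)$, again tautologically. This completes the verification that $\phi$ is an isomorphism both at the automaton level and at the recognizer level. There is no real obstacle here; the only subtle point is the well-definedness of $\phi$, which is exactly handled by Theorem \ref{th:ER}(a), and one should note that well-definedness of $\delta^{A/R}$, $\sigma^{A/R}$, $\tau^{A/R}$ and of $\delta^{A\backslash R}$, $\sigma^{A\backslash R}$, $\tau^{A\backslash R}$ has already been established when introducing the afterset and foreset constructions, so no further verification is required.
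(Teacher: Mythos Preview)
Your proof is correct and follows exactly the approach the paper takes: the paper's proof is a one-line remark that the result follows immediately from (\ref{eq:aft.aut}), (\ref{eq:for.aut}) and Theorem \ref{th:ER}(b), and your argument simply unpacks this by exhibiting the bijection $R_a\mapsto R^a$ and comparing the identical defining formulas.
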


\begin{proof} This follows immediately by (\ref{eq:aft.aut}), (\ref{eq:for.aut})
and (b) of Theorem \ref{th:ER}.
\end{proof}

In view of Theorem \ref{th:aft.for}, in the remainder of this
paper we will consider only afterset fuzzy recognizers and
automata. We will~see~in Example \ref{ex:R.ri} that
the~factor~fuzzy recognizer (automaton) ${\cal A}/E_R$ of $\cal
A$, w.r.t.~the natural fuzzy equivalence $E_R$ of~$R$, is not
necessary isomorphic to~fuzzy recognizers  ${\cal A}/R$ and ${\cal
A}\backslash R$, but by (b) of Theorem \ref{th:ER},~it has the
same cardinality as ${\cal A}/R$ and ${\cal A}\backslash R$,~and
if  $L({\cal A})=L({\cal A}/R)$ $(=L({\cal A}\backslash R)$, then
we also have that $L({\cal A})=L({\cal A}/E_R)$.

If ${\cal A}=(A,X,\delta^A)$ is a fuzzy automaton and $R$ is a fuzzy quasi-order
on $A$, then we also define a new fuzzy transition function $\delta^{A|R}:A\times
X\times A\to L$ by
\[
\delta^{A|R}(a,x,b)=(R\circ \delta^A_x\circ R )(a,b), \ \ \ \text{for all
$a,b\in A$ and $x\in X$,}
\]
i.e., $\delta^{A|R}_x=R\circ \delta^A_x\circ R$, for each $x\in X$, and we
obtain a new fuzzy automaton ${\cal A}|R = (A,X,\delta^{A|R})$~with~the same
set of states and input alphabet as the original one.~Furthermore, if ${\cal A}=(A,X,\delta^A,\sigma^A,\tau^A)$ is a~fuzzy recognizer, then we also set  $\sigma^{A|R}=\sigma^{A}$ and $\tau^{A|R}=\tau^A$,
and we have that ${\cal A}|R = (A,X,\delta^{A|R},\sigma^{A|R},\tau^{A|R})$ is a fuzzy
recognizer.

The following theorem can be conceived as a version of the
well-known Second Isomor\-phism Theorem, concerning fuzzy automata
and fuzzy quasi-orders~on~them. (cf. \cite{BS.81}, \S 2.6).

\begin{theorem}\label{th:SIT}
Let ${\cal A}=(A,X,\delta^{A},\sigma^A,\tau^A)$ be a fuzzy recognizer and let $R$ and $S$ be fuzzy quasi-orders on $\cal A$ such that $R\le S$.~Then a fuzzy relation $S/R$ on $A/R$ defined~by
\begin{equation}\label{eq:EdE=d}
S/R(R_a,R_b)=S(a,b), \qquad\text{for all $a,b\in A$},
\end{equation}
is a fuzzy quasi-order on ${A}/R$ and  fuzzy recognizers ${\cal A}/S$,
$({\cal A}/R)/(S/R)$ and $({\cal A}|R)/S$ are isomorphic.
\end{theorem}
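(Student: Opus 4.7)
The plan is to first verify that the formula $S/R(R_a,R_b)=S(a,b)$ defines a fuzzy quasi-order on $A/R$, and then to realize the two required isomorphisms by natural maps; the third isomorphism then follows by transitivity. The algebraic identity on which everything will hinge is
\[
R\circ S = S\circ R = S.
\]
One direction follows from reflexivity of $R$: $(R\circ S)(a,b)\ge R(a,a)\otimes S(a,b)=S(a,b)$, and the other from $R\le S$ combined with transitivity of $S$: $(R\circ S)(a,b)\le (S\circ S)(a,b)=S(a,b)$. The same argument gives $S\circ R=S$.

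For well-definedness of $S/R$, suppose $R_a=R_{a'}$ and $R_b=R_{b'}$. Theorem~\ref{th:ER} yields $R(a,a')=R(a',a)=R(b,b')=R(b',b)=1$, and since $R\le S$ the same equalities hold with $S$ in place of $R$. Transitivity of $S$ then gives $S(a,b)=S(a',a)\otimes S(a,b)\otimes S(b,b')\le S(a',b')$, and symmetrically $S(a',b')\le S(a,b)$. Reflexivity and transitivity of $S/R$ transfer directly from those of $S$.

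For the isomorphism ${\cal A}/S\cong ({\cal A}/R)/(S/R)$, I will take $\phi(S_a)=(S/R)_{R_a}$. Bijectivity comes from Theorem~\ref{th:ER}: $S_a=S_{a'}$ iff $S(a,a')=S(a',a)=1$, which by the definition of $S/R$ is equivalent to $(S/R)(R_a,R_{a'})=(S/R)(R_{a'},R_a)=1$, i.e.\ $(S/R)_{R_a}=(S/R)_{R_{a'}}$; surjectivity is clear. For compatibility with transitions, I will expand
\[
\delta^{(A/R)/(S/R)}\bigl((S/R)_{R_a},x,(S/R)_{R_b}\bigr)=\bigvee_{R_{a'},R_{b'}\in A/R}(S/R)(R_a,R_{a'})\otimes \delta^{A/R}(R_{a'},x,R_{b'})\otimes (S/R)(R_{b'},R_b).
\]
Because each factor in the summand depends only on the $R$-classes of $a'$ and $b'$, the supremum over $A/R\times A/R$ coincides with the supremum over $A\times A$; substituting the definition of $\delta^{A/R}$ produces $(S\circ R\circ \delta^A_x\circ R\circ S)(a,b)$, which by the pivotal identity collapses to $(S\circ \delta^A_x\circ S)(a,b)=\delta^{A/S}(S_a,x,S_b)$. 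The same calculation, with $\delta^A_x$ replaced by $\sigma^A$ or $\tau^A$, handles the initial and terminal fuzzy sets.

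The isomorphism ${\cal A}/S\cong ({\cal A}|R)/S$ is witnessed by the identity map on the common state set $A/S$: since $\delta^{A|R}_x=R\circ \delta^A_x\circ R$ and $\sigma^{A|R}=\sigma^A$, $\tau^{A|R}=\tau^A$, the transition relations of $({\cal A}|R)/S$ become $S\circ (R\circ \delta^A_x\circ R)\circ S = S\circ \delta^A_x\circ S$ by the pivotal identity, while the fuzzy sets of initial and terminal states are visibly unchanged. The heart of the argument is really the identity $R\circ S=S\circ R=S$; once this is established, everything else reduces to careful but routine manipulation using associativity and sup-distributivity of composition.
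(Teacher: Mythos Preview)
Your proof is correct and follows essentially the same route as the paper's: both hinge on the identity $R\circ S=S\circ R=S$, use the map $\phi(S_a)=(S/R)_{R_a}$ for the first isomorphism and the identity on $A/S$ for the second, and collapse $S\circ R\circ\delta_x^A\circ R\circ S$ to $S\circ\delta_x^A\circ S$. Your handling of the initial and terminal fuzzy sets is a bit terse but unwinds to exactly the computation $\sigma^A\circ R\circ S=\sigma^A\circ S$ (and dually for $\tau$), which is what is needed.
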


\begin{proof}
Let $a,a',b,b'\in A$ such that $R_a=R_{a'}$
and $R_b=R_{b'}$, i.e., $E_{R}(a,a')=E_{R}(b,b')=1$.~Since~$R\le S$, we also have
that $R^{-1}\le S^{-1}$, whence $E_R\le E_S$, and by this it follows that
$E_{S}(a,a')=E_{S}(b,b')=1$, so $S(a,b)=S(a',b')$.~Therefore, $S/R$ is~a~well-defined fuzzy relation, and clearly, $S/R$ is a fuzzy quasi-order.

For the sake of simplicity set $S/R=Q$.
Define a mapping $\phi :A/S\to (A/R)/Q$ by
\[
\phi (S_a)=Q_{R_a}, \qquad \text{for every $a\in A$}.
\]
According to Theorem \ref{th:ER}, for arbitrary $a,b\in A$  we have that
\[
S_a=S_b \ \Leftrightarrow\ \ S(a,b)=S(b,a)=1 \ \Leftrightarrow \ Q(R_a,R_b)=Q(R_b,R_a)=1 \ \Leftrightarrow \  Q_{R_a} = Q_{R_b} \ \Leftrightarrow \ \phi (S_a)=\phi
(S_b),
\]
and hence, $\phi $ is a well-defined and injective function.~It is clear that $\phi
$ is also a surjective function. Thus, $\phi $ is a bijective function of $A/S$
onto $(A/R)/Q$.

Since $R\le S$ implies $R\circ S=S\circ R=S$, for arbitrary $a,b\in
A$ and $x\in X$ we have that
\[
\begin{aligned}
\delta_x^{(A/R)/Q}(\phi (S_a),\phi (S_b)) &= \delta_x^{(A/R)/Q}(Q_{R_a},Q_{R_b})= (Q\circ \delta_x^{A/R}\circ Q)(R_a,R_b) \\
&=  \bigvee_{c,d\in A} Q(R_a,R_c)\otimes \delta_x^{A/R}(R_c,R_d)\otimes Q(R_d,R_b) \\
&=  \bigvee_{c,d\in A} S(a,c)\otimes (R\circ \delta_x^A\circ R)(c,d)\otimes S(d,b)\\
&= (S\circ R\circ \delta_x^A\circ R\circ S)(a,b) = (S\circ \delta_x^A\circ S)(a,b) = \delta_x^{A/S}(S_a,S_b) .
\end{aligned}
\]
Moreover, for any $a\in A$ we have that
\[
\sigma^{(A/R)/Q}(\phi(S_a))= \sigma^{(A/R)/Q}(Q_{R_a})=\sigma^{A/R}({R_a})= \sigma^{A}(a)=
\sigma^{A/S}({S_a}),
\]
and similarly, $\tau^{(A/R)/Q}(\phi(S_a))= \tau^{A/S}({S_a})$.~Therefore, $\phi $ is an isomorphism of the fuzzy recognizer ${\cal A}/S$ onto the
fuzzy recognizer $({\cal A}/R)/(S/R)$.

Next, for all $a,b\in A$ and $x\in X$ we have that
\[
\begin{aligned}
\delta^{(A|R)/S}(S_a,x,S_b) & = (S\circ \delta_x^{A|R}\circ
S)(a,b) =(S\circ R\circ \delta_x^A\circ R\circ S)(a,b) \\
&= (S\circ \delta_x^A\circ S)(a,b) =  \delta^{A/S}(S_a,x,S_b),
\end{aligned}
\]
and $\sigma^{(A|R)/S}=\sigma^{A/S}$,  $\tau^{(A|R)/S}=\tau^{A/S}$,
so fuzzy recognizers $({\cal A}|R)/S$ and ${\cal A}/S$ are isomorphic.
\end{proof}

If in the proof of the previous theorem we disregard fuzzy sets of initial and terminal
states, we see~that the theorem also hold for fuzzy automata.

\begin{remark}\label{re:afrcardinality}\rm
For any given fuzzy quasi-order $R$ on a fuzzy recognizer ${\cal A}=(A,X,\delta^{A},\sigma^A,\tau^A)$, the rule $a\mapsto R_a$ defines a surjective function of $A$ onto $A/R$.~This means that the afterset fuzzy recognizer ${\cal A}/R$ has smaller or equal cardinality than the fuzzy recognizer ${\cal A}$.

Now, if $R$ and $S$ are fuzzy quasi-orders on ${\cal A}$ such that $R\le S$, according to Theorem \ref{th:SIT}, the afterset fuzzy recognizer ${\cal A}/S$ has smaller or equal cardinality than ${\cal A}/R$.~This fact will be frequently used in the rest of the paper.
\end{remark}

Let us note that if ${\cal A}$ is a fuzzy recognizer or a fuzzy automaton, $A$ is
its set of states, and $R$, $S$ and~$T$ are fuzzy quasi-orders~on~$A$~such that $R\le S$ and $R\le
T$, then
\begin{equation}\label{eq:FE.GE}
S\le T \ \Leftrightarrow \  S/R\le T/R ,
\end{equation}
and hence, a mapping $\Phi:{\cal Q}_R(A)=\{S\in {\cal Q}(A)\mid R\le S\}\to {\cal Q}(A/R)$, given by $\Phi :S\mapsto S/R$, is injective (in fact, it
is an order isomorphism of ${\cal Q}_R(A)$ onto a subset of ${\cal Q}(A/R)$).~In particular, for~a~fuzzy~quasi-order $R$ on $A$, the fuzzy relation $R/R$ on $A/R$
will be denoted by $\widetilde R$.~It can be easily verified that $\widetilde R$
is a fuzzy order on $A/R$, and if $E$ is a fuzzy equivalence on $A$, then $\widetilde E$ is a fuzzy equality on $A/E$.

For a fuzzy recognizer ${\cal A}=(A,X,\delta^A,\sigma^A,\tau^A)$ and a fuzzy quasi-order
$R$ on $A$ we have that
the fuzzy language $L({\cal A}/R)$ recognized~by the afterset fuzzy recognizer ${\cal A}/R$ is~given~by
\begin{equation}\label{eq:LAR}
\begin{aligned}
L({\cal A}/R)(e) &= \sigma^A\circ R\circ \tau^A ,\\
L({\cal A}/R)(u) &= \sigma^A\circ R\circ \delta_{x_1}^A\circ R\circ \delta_{x_2}^A\circ R \circ \cdots \circ R\circ \delta_{x_n}^A\circ R\circ \tau^A ,
\end{aligned}
\end{equation}
whereas the fuzzy language $L({\cal A})$ recognized by ${\cal A}$ is given by
\[
\begin{aligned}\label{eq:LA}
L({\cal A})(e) &= \sigma^A\circ \tau^A ,\\
L({\cal A})(u) &= \sigma^A\circ \delta_{x_1}^A\circ \delta_{x_2}^A\circ \cdots
\circ \delta_{x_n}^A\circ \tau^A ,
\end{aligned}
\]
for any $u=x_1x_2\dots x_n\in X^+$, where $x_1,x_2,\ldots ,x_n\in X$.~Let us note
that the equation
(\ref{eq:LAR}) follows immediately by definition of the afterset fuzzy recognizer
${\cal A}/R$ (the equations (\ref{eq:aft.aut.1}), (\ref{eq:sE}) and (\ref{eq:tE})),
by the equations (\ref{eq:comp.as}) and (\ref{eq:comp.as2}), and the fact that $R\circ
R=R$, for every fuzzy quasi-order $R$.~Hence, the fuzzy recognizer $\cal A$~and
the~afterset fuzzy recognizer ${\cal A}/R$ are {\it equivalent\/}, i.e., they recognize the same fuzzy language, if and only~if the fuzzy quasi-order $R$ is a solution to a system of fuzzy relation equations
\begin{equation}\label{eq:gen.syst}
\begin{aligned}
&\sigma^A\circ\tau^A=\sigma^A\circ R\circ\tau^A, \\
&\sigma^A\circ \delta_{x_1}^A\circ \delta_{x_2}^A\circ \cdots \circ
\delta_{x_n}^A\circ \tau^A = \sigma^A\circ R\circ \delta_{x_1}^A\circ R\circ \delta_{x_2}^A\circ R \circ \cdots
\circ R\circ \delta_{x_n}^A\circ R\circ \tau^A,
\end{aligned}
\end{equation}
for all $n\in \Bbb N$ and $x_1,x_2,\ldots ,x_n\in X$.~We will
call (\ref{eq:gen.syst})  the {\it general~system\/}.

The general system has at least one solution in ${\cal Q}(A)$,~the equality relation on $A$.~It will be called
the {\it trivial solution\/}.~To attain the best possible reduction of $\cal A$, we have to find the greatest
solu\-tion to the general system in ${\cal Q}(A)$, if it exists, or to find as big a solu\-tion as
possible.~However, the general system~does not necessary have the greatest solution (see Example 3.2), and also, it
may consist of infinitely many equations, and finding its nontrivial solutions may be a very difficult task.~For that reason we will aim our attention to some instances of
the general system.~These instances have to be as general as possible, but they have to be easier to
solve. From a practical point of view, these instances have to consist of finitely many equations.

The following theorem describes some properties of the set of all solutions to the
general system.

\begin{theorem}\label{th:lang.ER}
Let ${\cal A}=(A,X,\delta^A,\sigma^A,\tau^A)$ be a fuzzy recognizer.

The set of all solutions to the general system in ${\cal Q}(A)$
is an order ideal of the lattice ${\cal Q}(A)$.

Consequently,~if a fuzzy quasi-order $R$ on $A$ is a solution to the general system,~then its natural~fuzzy equivalence $E_R$ is also a solution to the general system.
\end{theorem}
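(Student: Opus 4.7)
The plan is to establish the order ideal property by proving that the set of solutions is downward closed in ${\cal Q}(A)$: given a solution $R$ and any fuzzy quasi-order $S\le R$ on $A$, I would verify that $S$ satisfies every equation of (\ref{eq:gen.syst}).

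The first step is a useful preliminary reformulation. Because any fuzzy quasi-order $T$ is reflexive, it dominates the equality relation $\Delta$ on $A$, which is the neutral element for $\circ$. Hence by monotonicity (\ref{eq:comp.mon}) the right-hand side of each equation in (\ref{eq:gen.syst}) is automatically at least its left-hand side, so each equation is equivalent to the one-sided inequality
\[
\sigma^A\circ T\circ \delta_{x_1}^A\circ T\circ \cdots\circ T\circ \delta_{x_n}^A\circ T\circ \tau^A \ \le\ \sigma^A\circ \delta_{x_1}^A\circ \cdots \circ \delta_{x_n}^A\circ \tau^A.
\]
With this reformulation, downward closure is immediate: applying monotonicity to $S\le R$ shows that the left-hand side of the displayed inequality with $T=S$ is bounded above by the same expression with $T=R$, which in turn is bounded above by the right-hand side because $R$ is a solution. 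The equation corresponding to $n=0$, namely $\sigma^A\circ S\circ\tau^A=\sigma^A\circ\tau^A$, is handled identically.

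For the stated consequence, I would use $E_R=R\land R^{-1}\le R$. Since $E_R$ is a fuzzy equivalence it is in particular a fuzzy quasi-order, hence $E_R\in {\cal Q}(A)$, and by the downward closure just proved $E_R$ is also a solution.

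I do not foresee a serious obstacle: the argument rests only on monotonicity of $\circ$ and on the reflexivity-based reformulation of (\ref{eq:gen.syst}) as a family of one-sided inequalities. The only minor bookkeeping involved is in handling the alternating factors $T$ and $\delta_{x_i}^A$ in long compositions, but that is routine given associativity (\ref{eq:comp.as}) and distributivity of $\circ$ over joins (\ref{eq:comp.sup}).
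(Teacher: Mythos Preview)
Your proposal is correct and follows essentially the same approach as the paper: both arguments use reflexivity of quasi-orders to reduce each equation of (\ref{eq:gen.syst}) to a one-sided inequality, then apply monotonicity of $\circ$ to sandwich the expression for the smaller quasi-order between the left-hand side and the value for the larger one, and finally deduce the statement about $E_R$ from $E_R=R\land R^{-1}\le R$. The only difference is cosmetic: you isolate the one-sided reformulation as a preliminary step, whereas the paper runs the whole chain of inequalities inline; note also that the appeal to (\ref{eq:comp.sup}) is not actually needed here.
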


\begin{proof}
Consider arbitrary $n\in \Bbb N$, $x_1,x_2,\ldots ,x_n\in X$, and fuzzy quasi-orders $R$ and $S$ on $A$ such that $S$ is a solution to the general system and $R\le S$.~By the facts that $S$ is a solution to the general system~and $R\le S$,
by reflexivity of $R$, and by (\ref{eq:comp.mon}) we obtain that
\[
\begin{aligned}
\sigma^A\circ \delta_{x_1}^A\circ \delta_{x_2}^A\circ \cdots \circ
\delta_{x_n}^A\circ \tau^A &\le \sigma^A\circ R\circ \delta_{x_1}^A\circ R\circ \delta_{x_2}^A\circ R \circ \cdots \circ R\circ \delta_{x_n}^A\circ R\circ \tau^A\\
&\le \sigma^A\circ S\circ \delta_{x_1}^A\circ S\circ \delta_{x_2}^A\circ S \circ \cdots
\circ S\circ \delta_{x_n}^A\circ S\circ \tau^A \\
&= \sigma^A\circ \delta_{x_1}^A\circ \delta_{x_2}^A\circ \cdots \circ
\delta_{x_n}^A\circ \tau^A ,
\end{aligned}
\]
and hence, $R$ is a solution to the general system.~By this it follows that solutions
to the general system in ${\cal Q}(A)$ form an order ideal of the lattice ${\cal Q}(A)$.

The second part of the theorem follows immediately by the fact that $E_R=R\land
R^{-1}\le R$.
\end{proof}

The following example shows that there are fuzzy quasi-orders which are not solutions
to the general system, but their natural fuzzy equivalences are solutions to this
system.
\begin{example}\rm
Let $\cal L$ be the Boolean structure, let ${\cal A}=(A,X,\delta^A,\sigma^A,\tau^A)$ be a fuzzy recognizer~over~$\cal L$, where $A=\{1,2,3\}$, $X=\{x,y\}$, and  $\delta_x^A$,
$\delta_y^A$, $\sigma^A$ and  $\tau^A$ are given by
\[
\delta_x^A=\begin{bmatrix}
1 & 0 & 0 \\
0 & 0 & 0 \\
0 & 0 & 0
\end{bmatrix},\ \ \ \
\delta_y^A=\begin{bmatrix}
0 & 1 & 0 \\
1 & 1 & 1 \\
1 & 0 & 0
\end{bmatrix},\ \ \ \
\sigma^A=\begin{bmatrix}
1 & 1 & 1\end{bmatrix},
\ \ \ \ \tau^A=\begin{bmatrix}
1 \\
0 \\
1
\end{bmatrix},
\]
and consider a fuzzy quasi-order $R$ on $A$  given by
\[
R=\begin{bmatrix}
1 & 1 & 1 \\
0 & 1 & 1 \\
0 & 0 & 1
\end{bmatrix}.
\]
Then we have that
\[
\sigma^A\circ R\circ \delta_x^A\circ R\circ \delta_y^A\circ R\circ \tau^A=
1\ne 0 =\sigma^A\circ  \delta_x^A\circ  \delta_y^A\circ  \tau^A ,
\]
so $R$ is not a solution to the general system, but its natural fuzzy
equivalence $E_R$ is the equality relation on $A$, and hence, it is a solution
to the general system.
\end{example}

The next example shows that the general system does not necessary have the greatest
solution.

\begin{example}\label{ex:not.great}\rm
Let $\cal L$ be the Boolean structure, let ${\cal A}=(A,X,\delta^A,\sigma^A,\tau^A)$ be a fuzzy recognizer~over~$\cal L$, where $A=\{1,2,3\}$, $X=\{x\}$, and  $\delta_x^A$,
$\sigma^A$ and  $\tau^A$ are given by
\[
\delta_x^A=\begin{bmatrix}
0 & 1 & 0 \\
0 & 0 & 0 \\
0 & 0 & 0
\end{bmatrix},\ \ \ \
\sigma^A=\begin{bmatrix}
1 & 1 & 1\end{bmatrix},
\ \ \ \ \tau^A=\begin{bmatrix}
1 \\
1 \\
1
\end{bmatrix},
\]
and consider fuzzy quasi-orders (in fact, fuzzy equivalences) $E$ and $F$ on $A$ given by
\[
E=\begin{bmatrix}
1 & 0 & 0 \\
0 & 1 & 1 \\
0 & 1 & 1
\end{bmatrix},\ \ \ \
F=\begin{bmatrix}
1 & 0 & 1 \\
0 & 1 & 0 \\
1& 0 & 1
\end{bmatrix}.
\]
We have that both $E$ and $F$ are solutions to the general system (since $E$ is right
invariant and $F$ is left invariant, see the next section for details).~On the other
hand, the join of $E$ and $F$ in the lattice ${\cal Q}(A)$ is a fuzzy quasi-order
$U$ given by
\[
U=\begin{bmatrix}
1 & 1 & 1 \\
1 & 1 & 1 \\
1 & 1 & 1
\end{bmatrix},
\]
and it is not a solution to the general system, since
\[
\sigma^A\circ U\circ \delta_x^A\circ U\circ \delta_x^A\circ U\circ \tau^A=1\ne 0=
\sigma^A\circ \delta_x^A\circ \delta_x^A\circ \tau^A .
\]
If the general system would have the greatest solution $R$ in ${\cal Q}(A)$,
then $E\le R$ and $F\le R$ would imply $U\le R$, and by
Theorem \ref{th:lang.ER} we would obtain that $U$ is a solution to the general system.~Hence,
we conclude that the general system does not have the greatest solution in ${\cal Q}(A)$.
\end{example}

The next theorem demonstrates one shortcoming of state reductions by means of fuzzy quasi-orders and fuzzy equivalences.~Namely, we show that for some fuzzy recognizers no reduction will result in its minimal automaton.

\begin{theorem}
There exists a fuzzy automaton $\cal A$ such that no reduction of $\cal A$ by means of fuzzy quasi-orders provide a minimal fuzzy recognizer.
\end{theorem}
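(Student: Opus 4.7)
The plan is to prove this existence statement by exhibiting an explicit counterexample, working over the Boolean structure where every fuzzy recognizer is a non-deterministic finite automaton and every fuzzy quasi-order is a crisp quasi-order. This restriction loses no generality for the purpose of proving existence, and it gains access to the well-developed NFA state-minimization literature, which already exhibits the gap between reduction by quasi-orders and absolute minimality (cf.~\cite{CC.03,CC.04,INY.04,ISY.05,JR.93}): NFA minimization is PSPACE-complete, whereas the greatest right/left invariant quasi-orders of an NFA are computable in polynomial time, so there must exist NFAs for which no quasi-order quotient reaches the minimum.

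Concretely, I would fix a small NFA $\cal A$ on a state set $A$ with $|A| = n$ for which (i)~the recognized language $L({\cal A})$ is recognized by some strictly smaller NFA $\cal B$ with $m < n$ states, and (ii)~no quasi-order $R$ on $A$ that is a solution to the general system (\ref{eq:gen.syst}) yields $|A/R| \le m$. By Theorem~\ref{th:lang.ER} together with Theorem~\ref{th:ER}(b), verifying~(ii) reduces to the analogous statement for crisp equivalences: every equivalence $\pi$ on $A$ with $|A/\pi| \le m$ fails to solve (\ref{eq:gen.syst}). I would carry out this verification by enumerating the (finitely many) candidate equivalences $\pi$ with $|A/\pi| \le m$ and, for each one, exhibiting a word $w \in X^*$ and evaluating both $L({\cal A})(w)$ and $L({\cal A}/\pi)(w)$ via (\ref{eq:recog.comp}) and (\ref{eq:LAR}) to obtain a mismatch.

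The main obstacle is selecting the example so that this case analysis remains tractable while still witnessing the phenomenon: the NFA must be small enough to allow a complete enumeration of equivalences with few classes, yet complex enough that none of them preserves $L({\cal A})$. A natural route is to take a classical small NFA from the minimization literature whose non-trivial right and left invariant equivalences have been explicitly computed, and to verify by hand that none of them, nor any further merging, produces a language-preserving quotient with $m$ states. Once such $\cal A$ and $\cal B$ are fixed and the distinguishing words for each candidate $\pi$ are produced, the conclusion follows immediately: for every fuzzy quasi-order $R$ solving (\ref{eq:gen.syst}) one has $|A/R| = |A/E_R| > m$ by Theorems~\ref{th:ER}(b) and~\ref{th:lang.ER} combined with the case analysis, so the afterset fuzzy recognizer ${\cal A}/R$ has strictly more states than a minimal fuzzy recognizer of $L({\cal A})$ and is therefore not itself minimal.
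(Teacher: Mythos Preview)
Your strategy is exactly the paper's: work over the Boolean structure, reduce from quasi-orders to equivalences via Theorem~\ref{th:ER}(b) and Theorem~\ref{th:lang.ER}, and then rule out by enumeration every equivalence with at most $m$ classes. The paper carries this out with the concrete witness $|A|=4$, $X=\{x\}$, $\delta_x^A=\left[\begin{smallmatrix}1&0&0&0\\0&0&0&1\\0&0&0&0\\0&0&0&0\end{smallmatrix}\right]$, $\sigma^A=[\,0\ 1\ 0\ 0\,]$, $\tau^A=[\,0\ 0\ 1\ 1\,]^t$, whose language $\{x\}$ has a two-state recognizer, and then checks (using $\sigma^A\circ E\circ\tau^A=0$ and $E\circ E=E$) that every equivalence solving the general system has at least three classes --- so your outline becomes a proof once you plug in this (or an equivalent) example. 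One caution: the PSPACE-vs-polynomial aside is only heuristic here, both because it is conditional on complexity separations and because it speaks only to right/left invariant quasi-orders, not to arbitrary solutions of the general system; the constructive route you describe is what actually does the work.
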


\begin{proof}
Let $\cal L$ be the Boolean structure and ${\cal A}=(A,X,\delta^A,\sigma^A,\tau^A)$ a fuzzy recognizer over $\cal L$, where $|A|=4$, $X=\{x\}$, and $\delta^A$,$\sigma^A$, and $\tau^A $ are given by
\[
\delta_x^A=\begin{bmatrix} 1 & 0 & 0 & 0 \\ 0 & 0 & 0 & 1 \\ 0 & 0 & 0 & 0 \\ 0 & 0 & 0 & 0
\end{bmatrix}, \qquad \sigma^A=\begin{bmatrix} 0 & 1 & 0 & 0 \end{bmatrix}, \qquad
\tau^A=\begin{bmatrix} 0 \\ 0 \\ 1 \\ 1 \end{bmatrix}.
\]
It is easy to check that for each $u\in X^*$ the following is true:
\[
L({\cal A})(u)=\begin{cases}\ 0 & \text{if}\ u=e\ \text{or}\ u=x^n, \text{for}\
 n\ge 2 , \\
\ 1 & \text{if}\ u=x ,
\end{cases}
\]
(in fact, $\cal A$ is a nondeterministic recognizer and $L({\cal A})$ is an ordinary crisp language consisting only of the letter $x$). If ${\cal B}=(B,X,\delta^B,\sigma^B,\tau^B)$ is a fuzzy recognizer over $\cal L$ with $|B|=2$, and
\[
\delta_x^B=\begin{bmatrix} 0 & 1 \\ 0 & 0 \end{bmatrix}, \qquad \sigma^B=\begin{bmatrix} 1 & 0 \end{bmatrix}, \qquad
\tau^B=\begin{bmatrix} 0 \\ 1 \end{bmatrix},
\]
then $\cal B$ recognizes $L({\cal A})$, and it is a minimal fuzzy recognizer of $L({\cal A})$, since $L({\cal A})$ can not be recognized by a fuzzy recognizer with only one state.

Consider now an arbitrary fuzzy equivalence
\[
E=\begin{bmatrix}
1 & a_{12} & a_{13} & a_{14} \\
a_{12} & 1 & a_{23} & a_{24} \\
a_{13} & a_{23} & 1 & a_{34} \\
a_{14} & a_{24} & a_{34} & 1
\end{bmatrix}
\]
on $A$, and suppose that $E$ is a solution to the general system corresponding to the fuzzy automaton $\cal A$. We will show that $E$ can not reduce $\cal A$ to a fuzzy recognizer with two states.

First, by $\sigma^A\circ E\circ \tau^A=a_{23}\lor a_{24}$ and $\sigma^A\circ E\circ \tau^A=\sigma^A\circ \tau^A=L({\cal A})(e)=0$ it follows $a_{23}=a_{24}=0$. Next, reflexivity and transitivity of $E$ yield $E\circ E=E$, what implies
\begin{eqnarray}
a_{12}\land a_{13}=0 ,& & a_{12}=0\ \ \text{or}\ \ a_{13}=0 \label{eq:coef1}\\
a_{12}\land a_{14}=0 ,& & a_{12}=0\ \ \text{or}\ \ a_{14}=0 \label{eq:coef2}\\
a_{13}\lor (a_{14}\land a_{34})=a_{13} ,&\qquad\text{i.e.,}\qquad & a_{13}=0\ \ \text{implies}\ \ a_{14}=0\ \ \text{or}\ \ a_{34}=0, \label{eq:coef3}\\
a_{14}\lor (a_{13}\land a_{34})=a_{14} ,& & a_{14}=0\ \ \text{implies}\ \ a_{13}=0\ \ \text{or}\ \ a_{34}=0, \label{eq:coef4}\\
a_{34}\lor (a_{13}\land a_{14})=a_{34} ,& & a_{34}=0\ \ \text{implies}\ \ a_{13}=0\ \ \text{or}\ \ a_{14}=0. \label{eq:coef5}
\end{eqnarray}
If $a_{12}=1$, then by (\ref{eq:coef1}) and (\ref{eq:coef2}) we obtain $a_{13}=a_{14}=0$, and hence
\[
E=\begin{bmatrix}
1 & 1 & 0 & 0 \\
1 & 1 & 0 & 0 \\
0 & 0 & 1 & 0 \\
0 & 0 & 0 & 1
\end{bmatrix} \qquad\text{or}\qquad
E=\begin{bmatrix}
1 & 1 & 0 & 0 \\
1 & 1 & 0 & 0 \\
0 & 0 & 1 & 1 \\
0 & 0 & 1 & 1
\end{bmatrix}.
\]
However, none of these two matrices is a solution to the general system.~Therefore, we conclude that $a_{12}=0$. According to
(\ref{eq:coef3}), (\ref{eq:coef4}) and (\ref{eq:coef5}), we distinguish the following five cases
\begin{eqnarray*}
&&a_{13}=a_{14}=a_{34}=0 ,\\
&&a_{13}=a_{14}=0, \quad a_{34}=1 ,\\
&&a_{13}=a_{34}=0, \quad a_{14}=1 ,\\
&&a_{14}=a_{34}=0, \quad a_{13}=1 ,\\
&&a_{13}=a_{14}=a_{34}=1 ,
\end{eqnarray*}
and we obtain that $E$ has one of the following forms
\begin{equation}\label{eq:matrices}
E=\begin{bmatrix}
1 & 0 & 0 & 0 \\
0 & 1 & 0 & 0 \\
0 & 0 & 1 & 0 \\
0 & 0 & 0 & 1
\end{bmatrix}, \
E=\begin{bmatrix}
1 & 0 & 0 & 0 \\
0 & 1 & 0 & 0 \\
0 & 0 & 1 & 1 \\
0 & 0 & 1 & 1
\end{bmatrix}, \
E=\begin{bmatrix}
1 & 0 & 0 & 1 \\
0 & 1 & 0 & 0 \\
0 & 0 & 1 & 0 \\
1 & 0 & 0 & 1
\end{bmatrix}, \
E=\begin{bmatrix}
1 & 0 & 1 & 0 \\
0 & 1 & 0 & 0 \\
1 & 0 & 1 & 0 \\
0 & 0 & 0 & 1
\end{bmatrix}, \
E=\begin{bmatrix}
1 & 0 & 1 & 1 \\
0 & 1 & 0 & 0 \\
1 & 0 & 1 & 1 \\
1 & 0 & 1 & 1
\end{bmatrix}.
\end{equation}
In the first case, $E$ is the equality relation, and it does not provide any reduction of $\cal A$, and in the second and fourth case, it can be easily verified that $E$ is a solution to the general system, but it reduces $\cal A$ to a fuzzy recognizer with three states.~Finally, in the third and fifth case, $E$ is not a solution to the general system, since
\[
\sigma^A\circ E\circ \delta_x^A\circ E\circ \delta_x^A\circ E\circ \tau_x^A = 1 \ne 0 =
\sigma^A\circ \delta_x^A\circ \delta_x^A\circ \tau_x^A .
\]
Therefore, any state reduction of $\cal A$ by means of fuzzy equivalences does not provide fuzzy recognizer with less than three states. According to (b) of Theorem \ref{th:ER}, the same conclusion also holds for fuzzy quasi-orders. This completes the proof of the theorem.
\end{proof}

\section{Right and left invariant fuzzy quasi-orders}\label{sec:RLIFQO}

As in \cite{CSIP.07,CSIP.09}, where similar questions concerning
fuzzy equivalences have been considered, here we~study the following two
instances of the general system.~Let ${\cal A}=(A,X,\delta^A)$ be a fuzzy automaton.~If a fuzzy quasi-order $R$ on $A$ is a solution
to  a system
\begin{equation}\label{eq:RIFQO}
R\circ \delta_{x}^{A}\circ R = \delta_{x}^{A}\circ R, \ \ \text{for every}\ x\in X,
\end{equation}
then it will be called a {\it right invariant fuzzy quasi-order\/} on $\cal A$, and if it is a solution
to a system
\begin{equation}\label{eq:LIFQO}
R\circ \delta_{x}^A\circ R = R\circ \delta_{x}^A, \ \ \text{for every}\ x\in X,
\end{equation}
then it will be called a {\it left invariant fuzzy quasi-order\/} on $\cal A$.~A crisp quasi-order on $A$ which is a
solution to (\ref{eq:RIFQO}) is called a {\it right invariant quasi-order\/} on $\cal A$, and a crisp quasi-order
which is a solution to (\ref{eq:LIFQO}) is called a {\it left invariant quasi-order\/} on $\cal A$.~Let us note that a fuzzy quasi-order on $A$ is both right and left invariant if and only if it is a solution to system
\begin{equation}\label{eq:IFQO}
R\circ \delta_x^{A} = \delta_x^{A}\circ R, \ \ \text{for every}\ x\in X,
\end{equation}
and then it is called an {\it  invariant fuzzy quasi-order\/}.

If ${\cal A}=(A,X,\delta^A,\sigma^A,\tau^A)$ is a fuzzy recognizer, then
by a {\it right invariant fuzzy quasi-order\/} on $\cal A$~we~mean a fuzzy quasi-order
$R$ on $A$ which is a solution to (\ref{eq:RIFQO})
and

\begin{equation}\label{eq:RIFQO.2}
R\circ \tau^A  = \tau^A ,
\end{equation}
and a {\it left invariant fuzzy quasi-order\/} on $\cal A$ is a fuzzy quasi-order
$R$ on $A$ which is a solution to (\ref{eq:LIFQO})
and
\begin{equation}\label{eq:LIFQO.2}
\sigma^A\circ R  = \sigma^A .
\end{equation}
It is clear that all right and left invariant fuzzy quasi-orders
on a fuzzy recognizer $\cal A$ are solutions of the general system (\ref{eq:gen.syst}), and hence, the corresponding afterset fuzzy automata are equivalent to $\cal
A$.

In other words, right (resp.~left) invariant fuzzy quasi-orders on the fuzzy recognizer $\cal A$ are exactly~those right (resp.~left) invariant fuzzy quasi-orders on the fuzzy automaton $(A,X,\delta^A)$ which are solutions to the\break fuzzy relation equation (\ref{eq:RIFQO.2}) (resp.~(\ref{eq:LIFQO.2})).~It is well-known
(see \cite{CIB.09,Perf.04,PG.03,PN.07,San.76}) that solutions to (\ref{eq:RIFQO.2}) (resp.~(\ref{eq:LIFQO.2})) in ${\cal Q}(A)$ form a principal ideal of ${\cal Q}(A)$
whose greatest element is a fuzzy quasi-order $R^\tau $ (resp.~$R_\sigma$) defined by (\ref{eq:Rf}) (here we write $\tau^A=\tau $ and $\sigma^A=\sigma $).~This means that right (resp.~left) invariant fuzzy quasi-orders on the fuzzy recognizer $\cal A$ are those right (resp.~left) invariant fuzzy
quasi-orders on the fuzzy automaton $(A,X,\delta^A)$ which are contained in $R^\tau
$ (resp.~$R_\sigma $).

Let us note that fuzzy equivalences satisfying (\ref{eq:RIFQO})
and (\ref{eq:LIFQO}) have been studied in
\cite{CSIP.07,CSIP.09}.~They are~respect\-ively called right and
left invariant fuzzy equivalences.~Right and left invariant
quasi-orders have~been used for the state reduction of
non-deterministic automata by~Champarnaud and Coulon
\cite{CC.03,CC.04}, Ilie, Navarro and Yu \cite{INY.04}, and Ilie,
Solis-Oba and Yu \cite{ISY.05} (see also \cite{IY.02,IY.03}).

By the following theorem we give a characterization of right invariant fuzzy quasi-orders:

\begin{theorem} \label{th:rifqo}
Let ${\cal A}=(A,X,\delta^A)$ be a fuzzy automaton and $R$ a fuzzy quasi-order on $A$.~Then the~follow\-ing conditions are equivalent:
\begin{itemize}\parskip=0pt
\item[{\rm (i)}] $R$ is a right invariant fuzzy quasi-order;
\item[{\rm (ii)}] $R\circ \delta_x^A\le \delta_x^A\circ R$, for every $x\in X$;
\item[{\rm (iii)}] for all $a,b\in A$ we have
\begin{equation}\label{eq:rifqo1}
R(a,b)\le \bigwedge_{x\in X}\bigwedge_{c\in A}(\delta_x^A\circ R)(b,c)\to (\delta_x^A\circ R)(a,c).
\end{equation}
\end{itemize}
\end{theorem}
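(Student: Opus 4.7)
The plan is to establish the cycle (i)$\implies$(ii)$\implies$(i) and then show (i)$\iff$(iii) via the adjunction, exploiting two elementary facts about a fuzzy quasi-order $R$: reflexivity gives $P \le P\circ R$ and $P \le R\circ P$ for any fuzzy relation $P$, while transitivity gives $R\circ R = R$.

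For (i)$\implies$(ii), I would use reflexivity of $R$ in the form $\delta_x^A \le \delta_x^A\circ R$ and monotonicity of composition~(\ref{eq:comp.mon}) to conclude $R\circ\delta_x^A \le R\circ\delta_x^A\circ R$, which by (i) equals $\delta_x^A\circ R$. For (ii)$\implies$(i), assuming $R\circ \delta_x^A \le \delta_x^A \circ R$, I compose on the right with $R$ and use $R\circ R = R$ to obtain $R\circ\delta_x^A\circ R \le \delta_x^A\circ R\circ R=\delta_x^A\circ R$; the opposite inequality $\delta_x^A\circ R\le R\circ\delta_x^A\circ R$ holds automatically by reflexivity of $R$, so equality follows.

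For (iii)$\iff$(i), I would first observe, using the adjunction~(\ref{eq:adj}) and the definition of composition~(\ref{eq:comp.rr}), that (\ref{eq:rifqo1}) holding for all $a,b\in A$ is equivalent to
\[
R(a,b)\otimes(\delta_x^A\circ R)(b,c)\le(\delta_x^A\circ R)(a,c)\quad\text{for all $a,b,c\in A$ and $x\in X$,}
\]
which, after taking the supremum over $b$, is in turn equivalent to the inclusion $R\circ\delta_x^A\circ R \le \delta_x^A\circ R$ for every $x\in X$. Since the reverse inclusion $\delta_x^A\circ R \le R\circ\delta_x^A\circ R$ holds by reflexivity of $R$, this last inclusion is equivalent to the defining equation (\ref{eq:RIFQO}) of a right invariant fuzzy quasi-order, i.e.\ to~(i).

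There is no real obstacle here; the argument is a careful interplay of reflexivity, $R\circ R=R$, monotonicity, and the adjunction property. The only point that needs attention is separating the two directions in (iii)$\iff$(i): the forward direction of the inner inequality uses the adjunction, and the collapse from $R\circ\delta_x^A\circ R \le \delta_x^A\circ R$ to equality (to recover the original form of (i)) relies on reflexivity in the opposite direction.
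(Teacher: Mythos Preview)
Your proof is correct and follows essentially the same approach as the paper: the equivalence (i)$\iff$(ii) is argued identically via reflexivity and $R\circ R=R$, and your treatment of (i)$\iff$(iii) via the adjunction and supremum over $b$ is precisely the paper's (i)$\implies$(iii) and (iii)$\implies$(i) arguments, merely packaged as a single chain of equivalences rather than two separate implications.
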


\begin{proof}
(i)$\iff $(ii). Consider an arbitrary $x\in X$. If $R\circ \delta_{x}^A\circ R = \delta_{x}^A\circ R$, then by reflexivity of $R$ it follows
\[
R\circ \delta_x^A\le R\circ \delta_{x}^A\circ R =
\delta_x^A\circ R .
\]
Conversely, if $R\circ \delta_x^A\le \delta_x^A\circ R$ then
$R\circ \delta_x^{A}\circ R\le \delta_x^A\circ R\circ R = \delta_x^A\circ R$, and since the opposite
inequality follows by reflexivity of $R$, we conclude that $R\circ \delta_{x}^A\circ R = \delta_{x}^A\circ R$.

(i)$\implies $(iii). Let $R$ be a right invariant fuzzy equivalence.~Then for all
$x\in X$ and $a,b,c\in A$~we have that
\[
R(a,b)\otimes (\delta_x^A\circ R)(b,c) \le (R\circ\delta_x^A\circ R)(a,c) = (\delta_x^A\circ R)(a,c),
\]
and by the adjunction property we obtain that $R(a,b)\le (\delta_x^A\circ R)(b,c) \to (\delta_x^A\circ R)(a,c)$.
Hence,
\begin{equation}\label{eq:rife2}
R(a,b) \le (\delta_x^A\circ R)(b,c) \to (\delta_x^{A}\circ R)(a,c).
\end{equation}
Since (\ref{eq:rife2}) is satisfied for all $c\in A$ and $x\in X$, we conclude that (\ref{eq:rifqo1}) holds.

(iii)$\implies $(i). If (iii) holds, then for arbitrary $x\in X$ and $a,b,c\in A$ we have that
\[
R(a,b) \le (\delta_x^A\circ R)(b,c) \to (\delta_x^A\circ R)(a,c),
\]
and by the adjunction~prop\-erty we obtain that $R(a,b)\otimes (\delta_x^A\circ R)(b,c)\le (\delta_x^A\circ R)(a,c)$. Now,
\[
(R\circ\delta_x^A\circ R)(a,c)= \bigvee_{b\in A} R(a,b)\otimes (\delta_x^A\circ R)(b,c)\le (\delta_x^A\circ R)(a,c),
\]
whence $R\circ\delta_x^A\circ R\le \delta_x^A\circ R$, and since the opposite inequality follows immediately by reflexivity of $R$,~we conclude that $R\circ\delta_x^A\circ R= \delta_x^A\circ R$, for every $x\in X$, i.e., $R$ is a right invariant fuzzy quasi-order.
\end{proof}

Let ${\cal A}=(A,X,\delta^A)$ be a fuzzy automaton and $R$ a fuzzy quasi-order on~$A$.~Let
us define a fuzzy~relation $R^r$ on $A$ by
\begin{equation}\label{eq:R.r}
R^r(a,b) = \bigwedge_{x\in X}\bigwedge_{c\in A}(\delta_x^{A}\circ R)(b,c)\to
(\delta_x^{A}\circ R)(a,c), \qquad
\end{equation}
for all $a,b\in A$. Since $R^r$ is an intersection of a family of fuzzy quasi-orders defined as in (\ref{eq:Rf}), we~have~that $R^r$ is also a fuzzy quasi-order.~According
to Theorem \ref{th:rifqo}, $R$ is a right invariant fuzzy quasi-order if and only if $R\le R^r$.

Moreover, we have the following:

\begin{lemma}\label{le:R.r}
Let ${\cal A}=(A,X,\delta^{A})$ be a fuzzy automaton, and let $R$ and $S$ be fuzzy quasi-orders on
$A$.

If $R\le S$, then $R^r\le S^r$.
\end{lemma}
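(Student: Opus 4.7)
My plan is to exploit the characterization of $R^r$ given by the adjunction: by the construction in equation (\ref{eq:R.r}), $R^r(a,b)$ is the greatest element of $L$ satisfying $R^r(a,b)\otimes(\delta_x^A\circ R)(b,c)\le (\delta_x^A\circ R)(a,c)$ for every $x\in X$ and $c\in A$. The task therefore reduces to showing the analogous inequality with $S$ in place of $R$, namely $R^r(a,b)\otimes(\delta_x^A\circ S)(b,c)\le(\delta_x^A\circ S)(a,c)$; once this is established, adjunction yields $R^r(a,b)\le (\delta_x^A\circ S)(b,c)\to(\delta_x^A\circ S)(a,c)$ and the infimum over $x,c$ gives $R^r(a,b)\le S^r(a,b)$.

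The pivotal step will be a small compositional identity: since $R$ and $S$ are both reflexive and transitive and $R\le S$, one has $R\circ S = S\circ R = S$. Indeed, reflexivity gives $S\le R\circ S\le S\circ S=S$ (using $R\le S$ and transitivity of $S$), and symmetrically for $S\circ R$. Consequently $\delta_x^A\circ S = \delta_x^A\circ R\circ S$, which is what will allow the defining inequality of $R^r$ to be applied inside an expression involving $S$.

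With that identity in hand, I would simply compute, using (\ref{eq:comp.as}), (\ref{eq:comp.sup}), and the defining inequality of $R^r$:
\[
\begin{aligned}
R^r(a,b)\otimes(\delta_x^A\circ S)(b,c)
&= R^r(a,b)\otimes\bigvee_{d\in A}(\delta_x^A\circ R)(b,d)\otimes S(d,c)\\
&= \bigvee_{d\in A}\bigl(R^r(a,b)\otimes(\delta_x^A\circ R)(b,d)\bigr)\otimes S(d,c)\\
&\le \bigvee_{d\in A}(\delta_x^A\circ R)(a,d)\otimes S(d,c)
= (\delta_x^A\circ R\circ S)(a,c)=(\delta_x^A\circ S)(a,c).
\end{aligned}
\]
Adjunction and taking the infimum over $x$ and $c$ then deliver $R^r\le S^r$.

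The main conceptual obstacle is precisely this: a naive pointwise comparison fails, because $\to$ is antitone in the first coordinate and isotone in the second, so monotonicity of the inner expressions $(\delta_x^A\circ R)(b,c)\le(\delta_x^A\circ S)(b,c)$ pulls in the wrong direction. The identity $R\circ S=S$, which relies jointly on $R\le S$, reflexivity of $R$, and transitivity of $S$, is what resolves this tension and lets the greatness property of $R^r$ transfer across from $R$ to $S$.
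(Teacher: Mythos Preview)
Your proof is correct and hinges on the same key identity as the paper's, namely $R\circ S=S$ (from $R\le S$, reflexivity of $R$, and transitivity of $S$), which is exactly what allows the defining property of $R^r$ to transfer to expressions involving $S$.

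The execution differs slightly: the paper stays on the residuum side, applying (\ref{eq:res.mult}) to each term $(\delta_x^A\circ R)(b,c)\to(\delta_x^A\circ R)(a,c)$ by multiplying both sides by $S(c,d)$, and then invoking (\ref{eq:res.inf.sup}) to push the infimum through to a supremum and recognize $(\delta_x^A\circ R\circ S)(b,d)\to(\delta_x^A\circ R\circ S)(a,d)$. You instead translate via adjunction to the multiplication side, use distributivity of $\otimes$ over $\bigvee$ directly, and then adjoin back. These are dual renderings of the same computation; your version is arguably a bit more elementary since it avoids the two residuum-specific inequalities (\ref{eq:res.mult}) and (\ref{eq:res.inf.sup}) in favor of the basic adjunction and (\ref{eq:inf.dist}).
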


\begin{proof}
Consider arbitrary $a,b\in A$ and $x\in X$. By $R\le S$ it follows $R\circ S=S$, and
by (\ref{eq:res.mult}), for arbitrary $c,d\in A$ we have that
\[
(\delta_x^{A}\circ R)(b,c)\to (\delta_x^{A}\circ R)(a,c) \le
(\delta_x\circ R)(b,c)\otimes S(c,d)\to (\delta_x^{A}\circ R)(a,c)\otimes S(c,d).
\]
Now, by (\ref{eq:res.inf.sup}) we obtain that
\[
\begin{aligned}
R^r(a,b)&\le \bigwedge_{c\in A}(\delta_x^{A}\circ R)(b,c)\to (\delta_x^A\circ R)(a,c) \\
&\le \bigwedge_{c\in A}\Bigl[(\delta_x^{A}\circ R)(b,c)\otimes S(c,d)\to
(\delta_x^{A}\circ R)(b,c)\otimes S(c,d)\Bigr] \\
&\le \Bigl[\bigvee_{c\in A}(\delta_x^{A}\circ R)(b,c)\otimes S(c,d)\Bigr]\to
\Bigl[\bigvee_{c\in A}(\delta_x^{A}\circ R)(a,c)\otimes S(c,d)\Bigr] \\
&= (\delta_x^{A}\circ R\circ S)(b,d)\to (\delta_x^{A}\circ R\circ S)(a,d) =
(\delta_x^{A}\circ S)(b,d)\to (\delta_x^{A}\circ S)(a,d) .
\end{aligned}
\]
Since this holds for all $x\in X$ and $d\in A$, we conclude that
\[
R^r(a,b)\le \bigwedge_{x\in X}\bigwedge_{d\in A}(\delta_x^{A}\circ S)(b,d)\to (\delta_x^{A}\circ S)(a,d)=S^r(a,b),
\]
and hence, $R^r\le S^r$.
\end{proof}

Now we prove the following:
\begin{theorem} \label{th:lrifqo}
Let ${\cal A}=(A,X,\delta^A)$ be a fuzzy automaton and let ${\cal A}'=(A,X,\delta^A,\sigma^A,\tau^A)$ be a fuzzy~recog\-nizer belonging to $\cal A$.
Then
\begin{itemize}\parskip0pt
\item[{\rm (a)}] The set ${\cal Q}^{\mathrm{ri}}({\cal A})$ of all right invariant fuzzy quasi-orders on $\cal A$ forms a complete lattice, which is a complete join-subsemilattice of the lattice ${\cal Q}(A)$ of all fuzzy quasi-orders on $A$.
\item[{\rm (b)}] The set ${\cal Q}^{\mathrm{cri}}({\cal A})$ of all right invariant crisp quasi-orders on $\cal A$ forms a complete lattice, which is a complete join-subsemilattice of the lattice ${\cal Q}^{\mathrm{ri}}(A)$.
\item[{\rm (c)}] The set ${\cal Q}^{\mathrm{ri}}({\cal A}')$ of all right invariant fuzzy quasi-orders on ${\cal A}'$ is a principal ideal of
the lattice ${\cal Q}^{\mathrm{ri}}({\cal A})$.
\end{itemize}
\end{theorem}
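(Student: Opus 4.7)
For part (a), the plan is to exploit the characterization in Theorem \ref{th:rifqo} that a fuzzy quasi-order $R$ on $A$ is right invariant on ${\cal A}$ if and only if $R\circ \delta_x^A\le \delta_x^A\circ R$ for every $x\in X$, together with the description (\ref{eq:fqo.join}) of joins in ${\cal Q}(A)$. Given a family $\{R_i\}_{i\in I}\subseteq {\cal Q}^{\mathrm{ri}}({\cal A})$, I would set $S=\bigvee_{i\in I}R_i$ (pointwise union of fuzzy relations) and, using (\ref{eq:comp.sup}), observe that $S\circ \delta_x^A=\bigvee_i(R_i\circ \delta_x^A)\le \bigvee_i(\delta_x^A\circ R_i)=\delta_x^A\circ S$. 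A short induction using (\ref{eq:comp.mon}) and (\ref{eq:comp.sup}) then yields $S^n\circ \delta_x^A\le \delta_x^A\circ S^n$ for every $n\in \Bbb N$, from which $R\circ \delta_x^A\le \delta_x^A\circ R$ follows for the join $R=S^\infty$ of the family in ${\cal Q}(A)$. Since the equality relation on $A$ is trivially right invariant and is the bottom of ${\cal Q}(A)$, the standard fact that a subset of a complete lattice which contains the bottom and is closed under arbitrary joins is itself a complete lattice (with inherited join) completes (a).

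For part (b), I would note that crispness is preserved under pointwise joins and under $\otimes $-composition, because $\{0,1\}$ is closed under $\vee $, $\wedge $ and $\otimes $ in any complete residuated lattice. In view of (\ref{eq:fqo.join}) and part (a), this says that the join in ${\cal Q}(A)$ of a family in ${\cal Q}^{\mathrm{cri}}({\cal A})$ is again in ${\cal Q}^{\mathrm{cri}}({\cal A})$. Since the equality relation on $A$ is crisp, the same bottom-plus-closure argument as in (a) gives (b).

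For part (c), I would use the fact recalled in the paragraph preceding the theorem, that the set of $R\in {\cal Q}(A)$ satisfying $R\circ \tau^A=\tau^A$ is a principal ideal of ${\cal Q}(A)$ with greatest element $R^{\tau^A}$ from (\ref{eq:Rf}). Thus ${\cal Q}^{\mathrm{ri}}({\cal A}')$ is the intersection of ${\cal Q}^{\mathrm{ri}}({\cal A})$ with this principal ideal. Downward closure in ${\cal Q}^{\mathrm{ri}}({\cal A})$ is immediate: if $R\in {\cal Q}^{\mathrm{ri}}({\cal A})$ and $S\in {\cal Q}^{\mathrm{ri}}({\cal A}')$ with $R\le S$, then $R\circ \tau^A\le S\circ \tau^A=\tau^A$, while reflexivity of $R$ gives the reverse inequality. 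For the greatest element, I would take the join in ${\cal Q}^{\mathrm{ri}}({\cal A})$ of all elements of ${\cal Q}^{\mathrm{ri}}({\cal A}')$; by (a) this join is $S^\infty $ with $S=\bigvee_i R_i$, and repeating the inductive argument from (a) with $\tau^A$ in place of $\delta_x^A$ shows $S^n\circ \tau^A=\tau^A$ for all $n$, so that the property $R\circ \tau^A=\tau^A$ survives the supremum in $n$.

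The main obstacle I anticipate is the recurring technical point in (a) and (c): the join in ${\cal Q}(A)$ is not the pointwise join but its transitive closure (\ref{eq:fqo.join}), so a single application of (\ref{eq:comp.sup}) is not enough and one has to propagate the relevant invariance condition through the powers $S^n$ before taking the supremum in $n$. Once this is handled for (a), the analogue for (c) is routine, and the crisp case (b) reduces to the easy observation that $\{0,1\}$ is closed under the lattice operations and under $\otimes $.
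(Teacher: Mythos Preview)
Your proposal is correct, and for parts (b) and (c) it is essentially the paper's argument. For part (a), however, you take a somewhat different route. The paper does not push the inequality $S\circ\delta_x^A\le\delta_x^A\circ S$ through the powers $S^n$ by induction; instead it uses Lemma~\ref{le:R.r} (monotonicity of the operator $R\mapsto R^r$) together with the characterization ``$R$ is right invariant iff $R\le R^r$'' from Theorem~\ref{th:rifqo}(iii): since each $R_i\le R_i^r\le R^r$ and $R^r$ is a fuzzy quasi-order, the join $R$ in ${\cal Q}(A)$ automatically satisfies $R\le R^r$. This is a one-line argument once Lemma~\ref{le:R.r} is in hand, whereas your approach is more elementary and self-contained, bypassing that lemma entirely by working directly with condition (ii). Similarly, for (c) the paper avoids your inductive computation with $\tau^A$: it simply notes that $R^{\tau}$ is itself a fuzzy quasi-order and an upper bound of all the $R_i$, so the join in ${\cal Q}(A)$ stays below $R^{\tau}$ automatically. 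Your explicit verification that $S^n\circ\tau^A=\tau^A$ for all $n$ is perfectly fine but slightly more laborious than the paper's least-upper-bound argument.
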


\begin{proof}
(a) Let $\{R_i\}_{i\in I}\subseteq {\cal Q}^{\mathrm{ri}}({\cal A})$, and let $R$ be the join of this family in ${\cal Q}(A)$.~Then for each $i\in
I$, by $R_i\le R$ and Lemma \ref{le:R.r} we obtain that $R_i\le R_i^r\le
R^r$, whence $R\le R^r$. Now, by Theorem \ref{th:rifqo} it follows that $R\in
{\cal Q}^{\mathrm{ri}}({\cal A})$, and hence, ${\cal Q}^{\mathrm{ri}}({\cal A})$ is a complete join-subsemilattice of ${\cal Q}({\cal A})$.~Since ${\cal Q}^{\mathrm{ri}}({\cal A})$ contains the
least element of ${\cal Q}(A)$, the equality relation on~$A$, we conclude
that ${\cal Q}^{\mathrm{ri}}({\cal A})$ is a complete lattice.

(b) This follows immediately by (a) and (\ref{eq:fqo.join}), since union and composition of fuzzy relations,
applied~to crisp relations, as results give crisp relations.

(c) By definition, ${\cal Q}^{\mathrm{ri}}({\cal A}')$ consists of all $R\in {\cal
Q}^{\mathrm{ri}}({\cal A})$ which satisfy $R\circ \tau = \tau $.~It~is~well-known that $R\circ \tau
= \tau $ is equivalent to $R\le R_\tau $, what implies that ${\cal Q}^{\mathrm{ri}}({\cal A}')$ is
an ideal of ${\cal Q}^{\mathrm{ri}}({\cal A})$. Next, let $\{R_i\}_{i\in I}$ be an arbitrary family
of elements of ${\cal Q}^{\mathrm{ri}}({\cal A}')$ and let $R$ be the join of this family in ${\cal
Q}^{\mathrm{ri}}({\cal A})$.~According~to~(a) of this theorem, $R$ is also the join of the family
$\{R_i\}_{i\in I}$ in ${\cal Q}({\cal A})$, and since $R_i\le R_\tau $, for every $i\in I$, we
conclude that $R\le R_\tau $.~By this it follows that, ${\cal Q}^{\mathrm{ri}}({\cal A}')$ is a
complete join-subsemilattice of ${\cal Q}^{\mathrm{ri}}({\cal A})$, and hence, ${\cal
Q}^{\mathrm{ri}}({\cal A}')$ is an ideal of ${\cal Q}^{\mathrm{ri}}({\cal A})$ having the greatest
element, what means that ${\cal Q}^{\mathrm{ri}}({\cal A}')$ is a~principal ideal of ${\cal
Q}^{\mathrm{ri}}({\cal A})$.
\end{proof}

As we have noted before, the problem of computing the greatest right invariant fuzzy quasi-order~on~a fuzzy recognizer ${\cal A}=(A,X,\delta^A,\sigma^A,\tau^A)$
one reduces to the problem of
computing the greatest~right~invar\-iant fuzzy quasi-order~on a fuzzy automaton
$(A,X,\delta^A)$ contained in the fuzzy quasi-order $R^\tau $ ($\tau=\tau^A$). For
that reason,
in the sequel we consider the problem how to construct
the greatest right~invariant fuzzy quasi-order
$R^{\textrm{ri}}$ contained in a given fuzzy quasi-order $R$ on a fuzzy automaton.

\begin{theorem}\label{th:alg}
Let ${\cal A}=(A,X,\delta^{A})$ be a fuzzy automaton, let $R$ be a fuzzy quasi-order on $\cal A$ and let $R^{\mathrm{ri}}$ be the greatest right invariant fuzzy quasi-order
on $\cal A$ contained in $R$.

Define inductively a sequence
$\{R_k\}_{k\in\Bbb N}$ of fuzzy quasi-orders on $\cal A$ as follows:
\begin{equation}\label{eq:Rk}
R_1=R, \ \ R_{k+1}=R_k\land R_k^r, \ \ \text{for each}\ k\in \Bbb N .
\end{equation}
Then
\begin{itemize}\parskip=0pt
\item[{\rm (a)}] $R^{\mathrm{ri}} \le \cdots \le R_{k+1}\le R_k\le \cdots \le R_1=R $;
\item[{\rm (b)}] If $R_k=R_{k+m}$, for some $k,m\in \Bbb N$, then $R_k=R_{k+1}=R^{\mathrm{ri}}$;
\item[{\rm (c)}] If $\cal A$ is finite and $\cal L$ is locally finite,
then $R_k=R^{\mathrm{ri}}$ for some $k\in \Bbb N$.
\end{itemize}
\end{theorem}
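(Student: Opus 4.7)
The three parts can be handled in order of increasing difficulty, with (c) containing the only real content.

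For (a), I would first verify by induction that each $R_k$ is a fuzzy quasi-order, using that the meet of two fuzzy quasi-orders is again one (reflexivity is immediate; transitivity follows from isotonicity of $\otimes$) together with the fact that $R_k^r$ is a fuzzy quasi-order by the remark preceding Lemma \ref{le:R.r}. The descending chain $R_{k+1}\le R_k$ is built into the definition. For $R^{\mathrm{ri}}\le R_k$, I would induct on $k$: the base case is $R^{\mathrm{ri}}\le R_1=R$ by hypothesis, and assuming $R^{\mathrm{ri}}\le R_k$, Lemma \ref{le:R.r} yields $(R^{\mathrm{ri}})^r\le R_k^r$; since $R^{\mathrm{ri}}$ is right invariant, Theorem \ref{th:rifqo} gives $R^{\mathrm{ri}}\le (R^{\mathrm{ri}})^r\le R_k^r$, so combined with the inductive hypothesis, $R^{\mathrm{ri}}\le R_k\land R_k^r=R_{k+1}$.

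For (b), since the sequence is non-increasing by (a), the equality $R_k=R_{k+m}$ forces $R_k=R_{k+1}$; hence $R_k=R_k\land R_k^r$, i.e., $R_k\le R_k^r$, so by Theorem \ref{th:rifqo} the relation $R_k$ is right invariant. Since $R_k\le R$ and $R^{\mathrm{ri}}$ is by definition the greatest right invariant fuzzy quasi-order contained in $R$, we obtain $R_k\le R^{\mathrm{ri}}$, while the reverse inequality is part of (a).

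The hard part is (c), which requires a finiteness argument to guarantee that the descending sequence actually stabilizes. The plan is to trap all entries of the $R_k$ inside a finite subalgebra of $\cal L$. Let $L_0\subseteq L$ be the (finite) set of all values taken by $R$ and by the fuzzy transition relations $\delta_x^A$ for $x\in X$, and let $L^*$ be the subalgebra of $\cal L$ generated by $L_0$; by local finiteness of $\cal L$, $L^*$ is finite. An induction on $k$ then shows that all values of $R_k$ lie in $L^*$: the entries of $\delta_x^A\circ R_k$ are finite joins of $\otimes$-products of elements of $L^*$, the values of $R_k^r$ are by (\ref{eq:R.r}) finite meets of residua of such entries, and $R_{k+1}=R_k\land R_k^r$ inherits its values from $L^*$. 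Since $A$ and $L^*$ are both finite, only finitely many fuzzy relations on $A$ take values in $L^*$, so the non-increasing sequence $\{R_k\}_{k\in\Bbb N}$ must repeat a term, and (b) then identifies this stable value with $R^{\mathrm{ri}}$.
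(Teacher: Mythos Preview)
Your proposal is correct and follows essentially the same approach as the paper: the induction for (a) using Lemma~\ref{le:R.r} and Theorem~\ref{th:rifqo}, the stabilization argument for (b), and for (c) the finiteness argument via trapping all values of the $R_k$ in the finite subalgebra of $\cal L$ generated by the (finitely many) values of $\delta^A$ and $R$. The only difference is cosmetic---you spell out why each $R_k$ is a fuzzy quasi-order, which the paper leaves implicit.
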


\begin{proof}
(a) Clearly, $R_{k+1}\le R_k$, for each $k\in \Bbb N$, and $R^{\mathrm{ri}}\le R_1$.~Suppose~that
$R^{\mathrm{ri}}\le R_k$, for some~$k\in \Bbb N$. Then $R^{\mathrm{ri}} \le (R^{\mathrm{ri}})^r\le R_k^r$,
so $R^{\mathrm{ri}} \le R_k\land R_k^r=R_{k+1}$.~Therefore, by
induction~we~obtain that  $R^{\mathrm{ri}} \le R_k$, for~every~$k\in \Bbb N$.

(b) Let $R_k=R_{k+m}$, for some $k,m\in \Bbb N$. Then $R_k=R_{k+m}\le R_{k+1}=R_k\land R_k^r\le R_k^r$, what means~that $R_k$ is a right invariant fuzzy quasi-order.~Since $R^{\mathrm{ri}}$ is the greatest right invariant
fuzzy~quasi-order~con\-tained in $R$, we conclude that $R_k=R_{k+1}=R^{\mathrm{ri}}$.

(c) Let $\cal A$ be a finite fuzzy automaton and $\cal L$ be a locally finite algebra. Let
the carrier of a subalgebra of $\cal L$ generated by the set
$\delta^{A}(A\times X\times A)\cup R(A\times A)$ be denoted by $L_{\cal A}$. This generating set is finite, so
$L_{\cal A}$ is also finite, and hence, the set $L_{\cal A}^{A\times A}$ of all fuzzy
relations on $A$ with values in $L_{\cal A}$ is finite.~By definitions of fuzzy relations
$R_k$ and $R_k^r$ we have that $R_k\in L_{\cal A}^{A\times A}$, what implies that there
exist $k,n\in \Bbb N$ such that $R_k=R_{k+m}$, and by (b) we conclude that $R_k=R^{\mathrm{ri}}$.
\end{proof}

According to (c) of Theorem \ref{th:alg}, if the structure $\cal L$ is locally
finite, then for every fuzzy automaton~$\cal A$ over $\cal L$ we have that  every sequence of fuzzy quasi-orders defined by (\ref{eq:Rk}) is finite.~However, this does~not necessary hold if $\cal L$ is not locally finite, as the following
example shows:

\begin{example}\label{ex:prod}\rm
Let $\cal L$ be the Goguen (product) structure and ${\cal A}=(A,X,\delta^A)$ a fuzzy
automaton over~$\cal L$, where $A=\{1,2\}$, $X=\{x\}$, and $\delta_x^A$ is given by
\[
\delta_x^A=\begin{bmatrix}
0.2 &  0  \\
 0  & 0.1
\end{bmatrix},
\]
and let $R$ be the universal relation on $A$.~Applying to $R$ the procedure from Theorem \ref{th:alg}, we obtain~a~sequence $\{R_k\}_{k\in \Bbb N}$ of fuzzy quasi-orders given by
\[
R_k=\begin{bmatrix} 1 & 1\ \\ \textstyle\frac1{2^{k-1}} & 1\  \end{bmatrix},\ \
k\in \Bbb N ,
\]
whose all members are different, i.e., this sequence is infinite.~We also have that the greatest right~invar\-iant fuzzy quasi-order contained in $R$
is given by
\[
R^{\mathrm{ri}}=\begin{bmatrix} \,1\ &\ 1\, \\ \,0\ &\ 1\, \end{bmatrix}.
\]
\end{example}

For a fuzzy  automaton ${\cal A}=(A,X,\delta^A)$ over a complete residuated lattice $\cal L$,
the~greatest left invar\-iant fuzzy quasi-order $R^{\mathrm{li}}$ contained in a given fuzzy
quasi-order $R$ on $A$ can be computed in a similar way as~$R^{\mathrm{ri}}$. Indeed,
inductively~we define~a sequence $\{R_k\}_{k\in \Bbb N}$ of fuzzy quasi-orders on $A$ by
\begin{equation}\label{eq:Lk}
R_1=R, \ \ R_{k+1}=R_k\land R_k^{l}, \ \ \text{for each}\ k\in \Bbb N ,
\end{equation}
where $R_k^{l}$ is a fuzzy quasi-order on $A$ defined by
\[
R_k^{l}(a,b) = \bigwedge_{x\in X}\bigwedge_{c\in A}(R_k\circ \delta_x^{A})(c,a)\to
(R_k\circ \delta_x^{A})(c,b), \qquad \text{for all $a,b\in A$}.
\]
If $\cal L$ is locally finite,
then this sequence is necessary finite and $R^{\mathrm{li}}$ equals~the least element of this~sequence.

It is worth~noting~that
the greatest right and left invariant fuzzy quasi-orders are calculated using iterative procedures, but these calculations are not approximative.~Whenever these procedures
terminate in a finite number of steps, exact solutions to the considered systems of fuzzy relation equations are obtained.

Note also that for a fuzzy automaton ${\cal A}=(A,X,\delta^A)$ over a complete residuated lattice $\cal
L$, in \cite{CSIP.07,CSIP.09}~we gave a procedure~for~compu\-ting the greatest right invariant fuzzy equivalence
$E^{\mathrm{rie}}$   contained in a given fuzzy
equivalence $E$ on $A$. This procedure is similar to the proce\-dure given in Theorem~\ref{th:alg}~for~fuzzy~quasi-orders, and it also~works~for all fuzzy finite automata over a locally
finite complete residuated lattice. Namely,
inductively~we define~a sequence $\{E_k\}_{k\in \Bbb N}$ of fuzzy equivalences on $A$ by
\begin{equation}\label{eq:Ek}
E_1=E, \ \ E_{k+1}=E_k\land E_k^{req}, \ \ \text{for each}\ k\in \Bbb N ,
\end{equation}
where $E_k^{req}$ is a fuzzy equivalence defined by
\[
E_k^{req}(a,b) = \bigwedge_{x\in X}\bigwedge_{c\in A}(\delta_x^{A}\circ E_k)(a,c)\lra
(\delta_x^{A}\circ E_k)(b,c), \qquad \text{for all $a,b\in A$}.
\]
It was proved in \cite{CSIP.07,CSIP.09} that if $\cal L$ is locally finite,
then this sequence is necessary finite and $E^{\mathrm{rie}}$ equals~the least element of this sequence.~

By the next example we show that it is possible that the sequence of fuzzy equivalences defined by~(\ref{eq:Ek}) is infinite, but the sequence of fuzzy quasi-orders defined by (\ref{eq:Rk})
is finite.

\begin{example}\label{ex:prod2}\rm
Let $\cal L$ be the Goguen (product) structure and   ${\cal A}=(A,X,\delta^A)$  a fuzzy
automaton over~$\cal L$, where $A=\{1,2,3\}$, $X=\{x\}$, and $\delta_x^A$ is given by
\[
\delta_x^A=
\begin{bmatrix}
\,0\ &\ 1\, & \ 1\, \\
\,0\ &\ 1\, & \ 1\, \\
\,\frac12\ &\ 0\, & \ 0\,
\end{bmatrix}.
\]
If we start from the universal relation on $A$, applying the rule (\ref{eq:Ek}) we obtain
an infinite sequence~$\{E_k\}_{k\in \Bbb N}$ of fuzzy equivalences
on $A$, where
\[
E_k=\begin{bmatrix}
1 & 1 & \frac1{2^{k-1}} \\
1 & 1 & \frac1{2^{k-1}} \\
\textstyle\frac1{2^{k-1}} & \textstyle\frac1{2^{k-1}} & 1
\end{bmatrix},\ \ k\in \Bbb N .
\]
On the other hand, if we also start from the universal relation, the rule (\ref{eq:Rk}) gives a finite sequence~$\{R_k\}_{k\in \Bbb N}$ of fuzzy quasi-orders on $A$, where
\[
R_1=\begin{bmatrix}
1 & 1 & 1 \\
1 & 1 & 1 \\
1 & 1 & 1
\end{bmatrix}, \ \ \
R_2=\begin{bmatrix}
1 & 1 & 1 \\
1 & 1 & 1 \\
\frac 12 & \frac 12 &1
\end{bmatrix}, \ \ \
R_k=R_2, \ \text{for each $k\in \Bbb N$, $k\ge 3$}.
\]

\end{example}

Reduction of fuzzy automata by means of right and left invariant fuzzy equivalences
has been studied in \cite{CSIP.07,CSIP.09}.~Since the set of all right invariant fuzzy equivalences is a subset of the set of all right invariant fuzzy quasi-orders, the greatest element of this subset (the greatest right invariant fuzzy equivalence) is~less or equal than the greatest element of the whole set (the greatest right invariant fuzzy quasi-order).~The next example shows that this inequality can be strict.~Thus, reduction of a fuzzy automaton by using the greatest right invariant fuzzy quasi-order gives better results than its reduction by using the greatest right invariant fuzzy equivalence, according to Remark \ref{re:afrcardinality}.

Furthermore, as we have shown by Theorem \ref{th:lang.ER}, if a fuzzy quasi-order $R$ on a fuzzy automaton $\cal A$ is a solution to the general system, then its natural fuzzy equivalence $E_R$ is also a solution to the general system.~But, the next example also shows that
if $R$ is a right invariant fuzzy quasi-order, then $E_R$ is not necessary a right
invariant fuzzy equivalence.

\begin{example}\label{ex:R.ri}\rm
Let $\cal L$ be the Boolean structure, and let ${\cal A}=(A,X,\delta^A)$ be a fuzzy
automaton over $\cal L$, where $A=\{1,2,3\}$, $X=\{x,y\}$, and fuzzy transition relations
$\delta_x^A$ and $\delta_y^A$
are given by
\[
\delta_x^A=\begin{bmatrix}
1 & 0 & 0  \\
0 &  0  & 0 \\
0 & 0 & 0
\end{bmatrix}, \ \ \ \
\delta_y^A=\begin{bmatrix}
1 & 0 & 0  \\
1 &  1  & 0 \\
1 & 0 & 0
\end{bmatrix}.
\]
The greatest right invariant
fuzzy quasi order $R^{\mathrm{ri}}$~on~$\cal A$, its natural fuzzy equivalence $E_{R^{\mathrm{ri}}}$,
and the greatest right invariant fuzzy equivalence $E^{\mathrm{ri}}$ on $\cal A$
are given~by
\[
R^{\mathrm{ri}}=\begin{bmatrix}
1 & 1 & 1  \\
0 & 1 & 1 \\
0 & 1 & 1
\end{bmatrix},\ \ \
E_{R^{\mathrm{ri}}}=\begin{bmatrix}
1 & 0 & 0  \\
0 & 1 & 1 \\
0 & 1 & 1
\end{bmatrix},\ \ \
E^{\mathrm{ri}}=\begin{bmatrix}
1 & 0 & 0  \\
0 & 1 & 0 \\
0 & 0 & 1
\end{bmatrix}.
\]
Thus, $E^{\mathrm{ri}}$ do not reduce the number of states of $\cal A$, but
$R^{\mathrm{ri}}$ reduces $\cal A$ to a fuzzy automaton with~two~states.

Moreover, $R^{\mathrm{ri}}$ is a right invariant fuzzy quasi-order, but its
natural~fuzzy~equiva\-lence $E_{R^{\mathrm{ri}}}$ is not a right invariant fuzzy equivalence,
because $E^{\mathrm{ri}}<E_{R^{\mathrm{ri}}}$.~We also have that the afterset fuzzy automaton
${\cal A}/R^{\mathrm{ri}}$ is not isomorphic to the factor fuzzy automaton ${\cal
A}/E_{R^{\mathrm{ri}}}$, since
\[
R^{\mathrm{ri}}\circ \delta_y^A\circ R^{\mathrm{ri}}=\begin{bmatrix}
1 & 1 & 1  \\
1 & 1 & 1 \\
1 & 1 & 1
\end{bmatrix}, \ \ \ \
E_{R^{\mathrm{ri}}}\circ \delta_y^A\circ E_{R^{\mathrm{ri}}}=\begin{bmatrix}
1 & 0 & 0  \\
1 &  1  & 1 \\
1 & 1 & 1
\end{bmatrix}.
\]
\end{example}

Next we consider the case when $\cal L$ is a complete residuated lattice satisfying
the following conditions:
\begin{eqnarray}\label{eq:infd}
x\lor \bigl(\bigwedge_{i\in I}y_i\bigr) = \bigwedge_{i\in I}(x\lor y_i) ,\\
\label{eq:infdm}
x\otimes \bigl(\bigwedge_{i\in I}y_i\bigr) = \bigwedge_{i\in I}(x\otimes y_i) ,
\end{eqnarray}
for all $x\in L$ and $\{y_i\}_{i\in I}\subseteq L$. Let us note that if ${\cal L}=([0,1],\land ,\lor , \otimes ,\to , 0, 1)$, where $[0,1]$ is the~real~unit interval
and $\otimes $ is a left-con\-tin\-uous t-norm on $[0,1]$, then (\ref{eq:infd}) follows
immediately by linearity of $\cal L$, and $\cal L$ satisfies (\ref{eq:infdm}) if and only if $\otimes $ is
a continuous t-norm, i.e., if and only if $\cal L$ is a $BL$-algebra~(cf.~\cite{Bel.02,BV.05}).
 Therefore, conditions {\rm (\ref{eq:infd})} and {\rm (\ref{eq:infdm})} hold for~every $BL$-algebra on the real unit interval.~In particular,
the {\L}ukasiewicz, Goguen (product)~and~G\"odel structures fulfill {\rm (\ref{eq:infd})} and {\rm (\ref{eq:infdm})}.

We have that the following is true:

\begin{theorem}\label{th:inf}
Let $\cal L$ be a complete residuated lattice satisfying {\rm (\ref{eq:infd})} and {\rm
(\ref{eq:infdm})}, let ${\cal A}=(A,X,\delta^A)$ be a fuzzy finite automaton over $\cal L$, let $R$
be a fuzzy quasi-order on $A$, let $R^{\,\mathrm{ri}}$ be the greatest right invariant fuzzy
quasi-order on $\cal A$ contained in $R$, and let $\{R_k\}_{k\in\Bbb N}$ be the sequence of fuzzy
quasi-orders on $A$~defined by {\rm (\ref{eq:Rk})}.~Then
\begin{equation}\label{eq:inf.Ek}
R^{\mathrm{ri}} = \bigwedge_{k\in \Bbb N}R_k .
\end{equation}
\end{theorem}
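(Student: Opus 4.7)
The plan is to set $S=\bigwedge_{k\in\Bbb N}R_k$. By Theorem \ref{th:alg}(a) we already have $R^{\mathrm{ri}}\le S$, so only $S\le R^{\mathrm{ri}}$ needs proof. Clearly $S\le R_1=R$, and $S$ is a fuzzy quasi-order as the meet of such relations. Hence, by maximality of $R^{\mathrm{ri}}$ among right invariant fuzzy quasi-orders contained in $R$, it suffices to show that $S$ is right invariant, which by Theorem \ref{th:rifqo}(ii) and the adjunction property amounts to verifying
\[
S(a,b)\otimes(\delta_x^A\circ S)(b,c)\le (\delta_x^A\circ S)(a,c)
\]
for all $a,b,c\in A$ and $x\in X$.

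The decisive step will be the identity
\[
(\delta_x^A\circ S)(b,c)=\bigwedge_{k\in\Bbb N}(\delta_x^A\circ R_k)(b,c),
\]
which says that a finite join (over the finite set $A$) commutes with the countable meet defining $S$. Since $\{R_k\}$ is decreasing by Theorem \ref{th:alg}(a), for each fixed $d\in A$ the numbers $\delta_x^A(b,d)\otimes R_k(d,c)$ are decreasing in $k$; condition (\ref{eq:infdm}) pulls the meet through $\otimes$, so the claim reduces to showing that a finite $\bigvee$ commutes with the $\bigwedge$ of a pointwise decreasing family $\{a_k^{(d)}\}$. I would prove this by induction on $|A|$. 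In the two-element case, given any $j,l$ the index $m=\max(j,l)$ gives $\bigwedge_k(a_k^{(1)}\lor a_k^{(2)})\le a_m^{(1)}\lor a_m^{(2)}\le a_j^{(1)}\lor a_l^{(2)}$, so $\bigwedge_k(a_k^{(1)}\lor a_k^{(2)})\le \bigwedge_{j,l}(a_j^{(1)}\lor a_l^{(2)})$; two applications of (\ref{eq:infd}) then identify the latter with $\bigwedge_j a_j^{(1)}\lor\bigwedge_l a_l^{(2)}$, and the reverse inequality is trivial. The inductive step is immediate.

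Once this interchange is in hand, the rest is routine. From $R_{k+1}\le R_k^r$ and the definition of $R_k^r$ I obtain $R_{k+1}(a,b)\otimes(\delta_x^A\circ R_k)(b,c)\le(\delta_x^A\circ R_k)(a,c)$, and replacing $R_{k+1}(a,b)$ by the smaller $S(a,b)$ and taking the meet over $k$ while pushing $S(a,b)\otimes(-)$ through it via (\ref{eq:infdm}) yields
\[
S(a,b)\otimes \bigwedge_{k\in\Bbb N}(\delta_x^A\circ R_k)(b,c)\le \bigwedge_{k\in\Bbb N}(\delta_x^A\circ R_k)(a,c).
\]
Applying the interchange identity on both sides turns this into the required inequality, whence $S$ is right invariant and equals $R^{\mathrm{ri}}$. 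The main obstacle is really that single interchange of the finite $\bigvee$ with the infinite $\bigwedge$ for monotone families; everything else is bookkeeping with the adjunction, monotonicity of $\otimes$, and the distributivity hypotheses on $\cal L$.
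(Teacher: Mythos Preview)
Your proposal is correct and follows essentially the same route as the paper: set $S=\bigwedge_k R_k$, prove the interchange $(\delta_x^A\circ S)(b,c)=\bigwedge_k(\delta_x^A\circ R_k)(b,c)$ using finiteness of $A$, (\ref{eq:infdm}), and the fact that a finite join commutes with the meet of a decreasing family (which follows from (\ref{eq:infd})), and conclude via Theorem~\ref{th:rifqo}. The only cosmetic difference is that the paper quotes the two-term interchange $\bigwedge_k(x_k\lor y_k)=(\bigwedge_k x_k)\lor(\bigwedge_k y_k)$ for decreasing sequences from an earlier paper, whereas you reprove it inline by your $m=\max(j,l)$ argument and then induct on $|A|$; the paper also phrases the final step through (\ref{eq:res.inf.inf}) rather than pushing $S(a,b)\otimes(-)$ through the meet via (\ref{eq:infdm}), but these are equivalent by adjunction.
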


\begin{proof}
It was proved in \cite{CSIP.09} that if {\rm (\ref{eq:infd})} holds, then for all non-increasing sequences
$\{x_k\}_{k\in \Bbb N}, \{y_k\}_{k\in \Bbb N}\subseteq L$ we have
\begin{equation}\label{eq:infd2}
\bigwedge_{k\in \Bbb N}(x_k\lor y_k) = \bigl(\bigwedge_{k\in \Bbb N}x_k\bigr)\lor
\bigl(\bigwedge_{k\in \Bbb N}y_k\bigr).
\end{equation}
For the sake of simplicity set
\[
S=\bigwedge_{k\in \Bbb N}R_k .
\]
Clearly, $S$ is a fuzzy quasi-order.~To prove (\ref{eq:inf.Ek}) it is enough to prove that
$S$ is a right~invariant fuzzy quasi-order on $\cal A$. First, we have that
\begin{equation}\label{eq:Fab}
S(a,b)\le R_{k+1}(a,b)\le R_k^r(a,b) \le \delta_x^{A}\circ R_k(b,c)\to \delta_x^A\circ R_k(a,c),
\end{equation}
holds for all $a,b,c\in A$, $x\in X$ and $k\in \Bbb N$. Now, by (\ref{eq:Fab}) and (\ref{eq:res.inf.inf}) we obtain that
\begin{equation}\label{eq:Fab2}
S(a,b)\le \bigwedge_{k\in \Bbb N}\Bigl(\delta_x^{A}\circ R_k(b,c)\to \delta_x^{A}\circ R_k(a,c)\Bigr)
\le \bigwedge_{k\in \Bbb N}\bigl(\delta_x^{A}\circ R_k(b,c)\bigr)\to
\bigwedge_{k\in \Bbb N}\bigl(\delta_x^{A}\circ R_k(a,c)\bigr),
\end{equation}
for all $a,b,c\in A$ and $x\in X$. Next,
\begin{equation}\label{eq:infk}
\begin{aligned}
\bigwedge_{k\in \Bbb N}\bigl(\delta_x^{A}\circ R_k(b,c)\bigr) &
= \bigwedge_{k\in \Bbb N}\Bigl(\bigvee_{d\in A}\bigl(\delta_x^{A}(b,d)\otimes R_k(d,c)\bigr)\Bigr) &&\\
&= \bigvee_{d\in A}\Bigl(\bigwedge_{k\in \Bbb N}\bigl(\delta_x^{A}(b,d)\otimes R_k(d,c)\bigr)\Bigr)\hspace{2cm} &&
\text{(by (\ref{eq:infd2}))} \\
&= \bigvee_{d\in A}\Bigl(\delta_x^{A}(b,d)\otimes \bigl(\bigwedge_{k\in \Bbb N} R_k(d,c)\bigr)\Bigr)\hspace{2cm} &&
\text{(by (\ref{eq:infdm}))} \\
&= \bigvee_{d\in A}\Bigl(\delta_x^{A}(b,d)\otimes S(d,c)\Bigr) = (\delta_x^{A}\circ S)(b,c). &&
\end{aligned}
\end{equation}
Use of condition (\ref{eq:infd2}) is justified by the facts that $A$ is finite, and that
$\{R_k(d,c)\}_{k\in \Bbb N}$ is a non-increasing sequence, so
$\{\delta_x^{A}(b,d)\otimes R_k(d,c)\}_{k\in \Bbb N}$ is also a non-increasing sequence.
In the same way we prove that
\begin{equation}\label{eq:infk2}
\bigwedge_{k\in \Bbb N}\bigl(\delta_x^{A}\circ R_k(a,c)\bigr) = (\delta_x^{A}\circ S)(a,c).
\end{equation}
Therefore, by (\ref{eq:Fab2}), (\ref{eq:infk}) and (\ref{eq:infk2}) we obtain that
\[
S(a,b)\le (\delta_x^{A}\circ S)(b,c)\to (\delta_x^{A}\circ S)(a,c).
\]
Since this inequality holds for all $x\in X$ and $c\in A$, we have that
\[
S(a,b)\le \bigwedge_{x\in X}\bigwedge_{c\in A}(\delta_x^{A}\circ S)(b,c)\to (\delta_x^A\circ S)(a,c),
\]
and by (iii) of Theorem \ref{th:rifqo} we obtain that $S$ is a right invariant fuzzy quasi-order on $\cal A$.
\end{proof}

\section{Some special types of right and left invariant fuzzy quasi-orders}\label{sec:APPROX}

For a given fuzzy quasi-order $R$ on a fuzzy automaton $\cal A$, Theorem \ref{th:alg} gives a
procedure for computing $R^{\mathrm{ri}}$ in case when the complete residuated lattice $\cal L$ is
locally finite, and Theorem \ref{th:inf} characterizes~$R^{\mathrm{ri}}$ in case when $\cal L$
satisfies some additional distributivity conditions.~But, what to do if $\cal L$ do not satisfy
any~of these conditions?~In this case we could consider~some subset of ${\cal Q}^{\mathrm{ri}}({\cal A})$ whose greatest element
can be effectively computed when $\cal L$ is any complete residuated lattice.~Here we consider two
such subsets.~The first one is the set ${\cal Q}^{\mathrm{cri}}({\cal A})$ of all right invariant
crisp~quasi-orders on $\cal A$, and the second one is the set ${\cal Q}^{\mathrm{sri}}({\cal A})$
of strongly right invariant fuzzy quasi-orders, which~will be defined latter.

~Note that for a crisp~rela\-tion $\varrho $ and a fuzzy relation $R$ on a set $A$ we
have that $\varrho \le R$ if and only if $\varrho \subseteq \widehat R$, where $\widehat
R$ denotes the crisp part of $R$.~Let ${\cal A}=(A,X,\delta^{A})$ be a fuzzy automaton and  $R$  a fuzzy~quasi-order
on $A$.~It is  easy to verify~that the crisp part of the fuzzy quasi-order $R^r$ can
be represented as follows: for all $a,b\in A$ we have
\begin{equation}
(a,b)\in \widehat{R^r} \ \iff \ (\forall x\in X)(\forall c\in A)\ (\delta_x^A\circ
R)(b,c)\le (\delta_x^A\circ R)(a,c) .
\end{equation}
We have that $\widehat{R^r}$ is a quasi-order, since the crisp part of any fuzzy quasi-order is a quasi-order.

The following theorem gives a procedure for computing the greatest right invariant
crisp quasi-order on a fuzzy automaton contained in a given quasi-order.

\begin{theorem} \label{th:alg.c}
Let ${\cal A}=(A,X,\delta^{A})$ be a fuzzy automaton, let $\varrho $ be a quasi-order on $\cal A$ and
let $\varrho^{\mathrm{ri}}$~be~the~greatest right invariant quasi-order on $\cal A$ contained in $\varrho $.

Define inductively a sequence
$\{\varrho_k\}_{k\in\Bbb N}$ of quasi-orders on $\cal A$ as follows:
\[
\varrho_1=\varrho, \ \ \varrho_{k+1}=\varrho_k\cap \widehat{\varrho_k^r}, \ \
\text{for each}\ k\in \Bbb N .
\]
 Then
\begin{itemize}\parskip=1pt
\item[{\rm (a)}] $\varrho^{\mathrm{ri}} \subseteq \cdots \subseteq \varrho_{k+1}\subseteq \varrho_k\subseteq \cdots
\subseteq \varrho_1=\varrho $;
\item[{\rm (b)}] If $\varrho_k=\varrho_{k+m}$, for some $k,m\in \Bbb N$, then
$\varrho_k=\varrho_{k+1}=\varrho^{\mathrm{ri}}$;
\item[{\rm (c)}] If $\cal A$ is finite, then $\varrho_k=\varrho^{\mathrm{ri}}$ for some
$k\in \Bbb N$.
\end{itemize}
\end{theorem}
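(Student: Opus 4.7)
The plan is to follow the same three-step pattern used for Theorem~\ref{th:alg}, but work throughout with crisp relations, using the identification $\varrho \subseteq \sigma$ (as crisp sets) $\iff$ $\varrho \le \sigma$ (as fuzzy relations taking values in $\{0,1\}$). First I would observe, by induction on $k$, that every $\varrho_k$ is crisp: $\varrho_1=\varrho$ is crisp by hypothesis, and if $\varrho_k$ is crisp then $\widehat{\varrho_k^r}$ is crisp (as the crisp part of any fuzzy relation), hence so is $\varrho_{k+1}=\varrho_k\cap\widehat{\varrho_k^r}$. Reflexivity and transitivity of each $\varrho_k$ are preserved since the intersection of two quasi-orders is a quasi-order.

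For part (a), the inclusions $\varrho_{k+1}\subseteq\varrho_k$ are immediate from the definition. To establish $\varrho^{\mathrm{ri}}\subseteq\varrho_k$, I argue by induction: the base case is clear, and assuming $\varrho^{\mathrm{ri}}\subseteq\varrho_k$, I would apply Lemma~\ref{le:R.r} (viewing the inclusion as $\varrho^{\mathrm{ri}}\le\varrho_k$ in the fuzzy sense) to obtain $(\varrho^{\mathrm{ri}})^r\le\varrho_k^r$. Since $\varrho^{\mathrm{ri}}$ is a right invariant fuzzy quasi-order, Theorem~\ref{th:rifqo} gives $\varrho^{\mathrm{ri}}\le(\varrho^{\mathrm{ri}})^r$; combined with the fact that $\varrho^{\mathrm{ri}}$ is crisp (only takes values $0$ or $1$), this forces $\varrho^{\mathrm{ri}}\subseteq\widehat{(\varrho^{\mathrm{ri}})^r}\subseteq\widehat{\varrho_k^r}$, and hence $\varrho^{\mathrm{ri}}\subseteq\varrho_k\cap\widehat{\varrho_k^r}=\varrho_{k+1}$.

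For part (b), if $\varrho_k=\varrho_{k+m}$ then the chain of inclusions in (a) collapses at stage $k$, so in particular $\varrho_k=\varrho_{k+1}=\varrho_k\cap\widehat{\varrho_k^r}$, i.e.\ $\varrho_k\subseteq\widehat{\varrho_k^r}$. Translating this back to fuzzy relations gives $\varrho_k\le\varrho_k^r$, and by Theorem~\ref{th:rifqo} (condition (iii)) this means $\varrho_k$ is a right invariant fuzzy quasi-order. Being crisp, it is a right invariant crisp quasi-order contained in $\varrho$, and by maximality of $\varrho^{\mathrm{ri}}$ we conclude $\varrho_k\subseteq\varrho^{\mathrm{ri}}$; combining with (a) yields $\varrho_k=\varrho^{\mathrm{ri}}$.

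Part (c) is where the crisp setting becomes decisively easier than the fuzzy one: when $A$ is finite, the lattice of crisp quasi-orders on $A$ is a finite subset of the power set of $A\times A$, so any non-increasing sequence in it must stabilize after finitely many steps, and (b) then identifies the stable value as $\varrho^{\mathrm{ri}}$. The main conceptual point I expect to have to justify carefully is the translation $\varrho^{\mathrm{ri}}\subseteq\widehat{(\varrho^{\mathrm{ri}})^r}$ in step (a), since this uses crucially that a fuzzy inequality $\varrho^{\mathrm{ri}}\le(\varrho^{\mathrm{ri}})^r$ together with $\varrho^{\mathrm{ri}}$ being two-valued upgrades to containment in the crisp part; everything else is a direct crisp analogue of the arguments already given for Theorem~\ref{th:alg}, and crucially no local finiteness of $\cal L$ is invoked because we never leave the sublattice $\{0,1\}^{A\times A}$.
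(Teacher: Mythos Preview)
Your proposal is correct and follows essentially the same route as the paper's proof: the same induction for (a) using Lemma~\ref{le:R.r} and Theorem~\ref{th:rifqo} together with the crisp-part translation $\varrho^{\mathrm{ri}}\subseteq\widehat{(\varrho^{\mathrm{ri}})^r}\subseteq\widehat{\varrho_k^r}$, the same collapse-and-invoke-maximality argument for (b), and the same finiteness argument for (c). Your preliminary remarks that each $\varrho_k$ is crisp and a quasi-order are more explicit than the paper (which takes these for granted), but otherwise the arguments coincide.
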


\begin{proof}
(a) Clearly, $\varrho_{k+1}\subseteq \varrho_k$, for every $k\in \Bbb N$, and $\varrho^{\mathrm{ri}}
\subseteq \varrho_1$. If $\varrho^{\mathrm{ri}} \subseteq \varrho_k$, for some $k\in \Bbb N$, then $(\varrho^{\mathrm{ri}})^r\le \varrho_k^r$, and also, $\varrho^{\mathrm{ri}} \le (\varrho^{\mathrm{ri}} )^r$, so we have that
\[
\varrho^{\mathrm{ri}} \subseteq \widehat{(\varrho^{\mathrm{ri}})^r}\subseteq \widehat{\varrho_k^r},
\]
and by this it follows that $\varrho^{\mathrm{ri}}\subseteq \varrho_{k+1}$. Hence, by induction we
obtain that $\varrho^{\mathrm{ri}}\subseteq \varrho_k$, for every $k\in \Bbb N$.

(b) If $\varrho_k=\varrho_{k+m}$, for some $k,m\in \Bbb N$, then
\[
\varrho_k=\varrho_{k+m}\subseteq  \varrho_{k+1}\subseteq
\widehat{\varrho_k^r}\le {\varrho_k^r},
\]
so we have that $\varrho_k$ is a right invariant  quasi-order on $\cal A$.~Therefore,
$\varrho_k = \varrho_{k+1}=\varrho^{\mathrm{ri}} $.

(c) If the set $A$ is finite, then the set of all crisp relations on $A$
is also finite, so there exist $k,m\in \Bbb N$~such that $\varrho_k=\varrho_{k+m}$, and then $\varrho_k=\varrho^{\mathrm{ri}} $.
\end{proof}

The previous theorem shows that the greatest right invariant crisp quasi-order can
be effectively compu\-ted for~any fuzzy finite automaton over an arbitrary complete residuated lattice, not necessary locally finite, and even for a fuzzy finite automaton over an arbitrary lattice-ordered monoid.~However, in cases when we are able to effectively compute the greatest right invariant fuzzy quasi-order, using it we can attain better reduction than using the greatest right invariant crisp quasi-order, as the next example shows. Namely, the greatest right invariant crisp quasi-order $\varrho^{\mathrm{ri}}$ is less or equal than the greatest right invariant fuzzy quasi-order $R^{\mathrm{ri}}$ and according to Remark \ref{re:afrcardinality} there holds $|{\cal A}/{R^{\mathrm{ri}}}|\le |\cal A/{\varrho^{\mathrm{ri}}}|$.

\begin{example}\label{ex:FQO-QO}\rm
Let $\cal L$ be the G\"odel structure, and let  ${\cal A}=(A,X,\delta^A)$ be a fuzzy
automaton over $\cal L$, where $A=\{1,2,3\}$, $X=\{x\}$, and $\delta_x^A$ is given by
\[
\delta_x^A=\begin{bmatrix}
0 & 0.1 & 0  \\
0.2 &  0  & 0 \\
0.1 & 0 & 0
\end{bmatrix}.
\]
Then the greatest right invariant fuzzy quasi-order $R^{\mathrm{ri}}$ and the greatest
right invariant crisp quasi-order~$\varrho^{\mathrm{ri}}$ on $\cal A$ are given by
\[
R^{\mathrm{ri}}=\begin{bmatrix}
1 & 0.1 & 1  \\
1 & 1 & 1 \\
1 & 0.1 & 1
\end{bmatrix},\ \ \
\varrho^{\mathrm{ri}}=\begin{bmatrix}
1 & 0 & 0  \\
0 & 1 & 1 \\
0 & 0 & 1
\end{bmatrix}.
\]
Hence, $\varrho^{\mathrm{ri}}$ do not reduce the number of states of $\cal A$, but
$R^{\mathrm{ri}}$ reduces $\cal A$ to a fuzzy automaton with two states.
\end{example}

Let ${\cal A}=(A,X,\delta^{A})$ be a fuzzy automaton.~If a fuzzy quasi-order $R$ on $A$  is a solution to system
\begin{equation}\label{eq:SRIFQO}
R\circ \delta_x^A = \delta_x^A, \ \ \text{for every}\ x\in X,
\end{equation}
then it is called a {\it strongly right invariant fuzzy quasi-order\/} on $\cal A$, and if it is a solution to system
\begin{equation}\label{eq:SLIFQO}
\delta_x^{A}\circ R = \delta_x^{A}, \ \ \text{for every}\ x\in X,
\end{equation}
then it is  a {\it strongly left invariant fuzzy quasi-order\/} on $\cal A$.~Clearly, every strongly right (resp.~left)~invariant fuzzy quasi-order
is right (resp.~left) invariant.~Let us note that a fuzzy quasi-order on $A$ is both strongly right and left invariant if and only if it is a solution to system
\begin{equation}\label{eq:SIFQO}
R\circ \delta_x^{A}\circ R = \delta_x^{A}, \ \ \text{for every}\ x\in X,
\end{equation}
and then it is called a {\it strongly invariant fuzzy quasi-order\/}.

If ${\cal A}=(A,X,\delta^{A},\sigma^A,\tau^A)$ is a fuzzy recognizer, then
by a {\it stongly right invariant fuzzy quasi-order\/}~on~$\cal A$ we mean
a fuzzy quasi-order on $A$ which is a solution to (\ref{eq:SRIFQO}) and
\begin{equation}\label{eq:SRIFQO.2}
R\circ \tau^A  = \tau^A ,
\end{equation}
and a {\it strongly left invariant fuzzy quasi-order\/} on $\cal A$ is a fuzzy quasi-order which is a solution to~(\ref{eq:SLIFQO})
and
\begin{equation}\label{eq:SLIFQO.2}
\sigma^A\circ R  = \sigma^A .
\end{equation}

In the further text we study strongly right invariant fuzzy quasi-orders.

\begin{theorem}\label{th:SRIFQO}
Let ${\cal A}=(A,X,\delta^A)$ be a fuzzy automaton and let
${\cal A}'=(A,X,\delta^A,\sigma^A,\tau^A)$ be a fuzzy~recog\-nizer belonging
to $\cal A$. Then
\begin{itemize}
\item[{\rm (a)}] The set ${\cal Q}^{\mathrm{sri}}({\cal A})$ of all strongly
right invariant fuzzy quasi-orders on $\cal A$ is a principal ideal of
the lattice ${\cal Q}(A)$.~The greatest element of this principal ideal is a fuzzy quasi-order $R^{\mathrm{sri}}$
defined by
\begin{equation}\label{eq:R.sri}
R^{\mathrm{sri}} (a,b) = \bigwedge_{x\in X}\bigwedge_{c\in A}\delta_x^A(b,c)\to
\delta_x^A(a,c), \ \ \ \text{for all $a,b\in A$} .
\end{equation}
\item[{\rm (b)}] The set ${\cal Q}^{\mathrm{sri}}({\cal A}')$ of all strongly
right invariant fuzzy quasi-orders on ${\cal A}'$ is the principal ideal of~the lattice ${\cal Q}(A)$.~The greatest element of this principal ideal is a fuzzy quasi-order $R_\tau\land R^{\mathrm{sri}}$.
\end{itemize}
\end{theorem}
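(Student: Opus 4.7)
The plan is to reduce the defining conditions of a strongly right invariant fuzzy quasi-order to a family of pointwise inequalities whose greatest solution can be read off directly. The key observation is that for any reflexive fuzzy relation $R$ on $A$ one automatically has $R\circ\delta_x^A\ge \delta_x^A$, since $(R\circ\delta_x^A)(a,b)\ge R(a,a)\otimes\delta_x^A(a,b)=\delta_x^A(a,b)$. Hence equation (\ref{eq:SRIFQO}) is equivalent, for reflexive $R$, to the inequality $R\circ\delta_x^A\le\delta_x^A$ for every $x\in X$, which pointwise reads $R(a,c)\otimes\delta_x^A(c,b)\le\delta_x^A(a,b)$ for all $a,b,c\in A$. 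By the adjunction property (\ref{eq:adj}) this is in turn equivalent to $R(a,c)\le\delta_x^A(c,b)\to\delta_x^A(a,b)$ for all such $a,b,c$ and $x$.

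Taking the infimum over $x$ and $b$, the pointwise-greatest fuzzy relation on $A$ satisfying all of these inequalities is exactly the $R^{\mathrm{sri}}$ defined by (\ref{eq:R.sri}). To establish (a) I must still verify that $R^{\mathrm{sri}}$ is itself a fuzzy quasi-order and that the set of strongly right invariant fuzzy quasi-orders coincides with $\{S\in{\cal Q}(A)\mid S\le R^{\mathrm{sri}}\}$. Reflexivity of $R^{\mathrm{sri}}$ is immediate since each $\delta_x^A(a,c)\to\delta_x^A(a,c)=1$. For transitivity, the cleanest route is to set $f_{x,c}(a)=\delta_x^A(a,c)$ and observe that (\ref{eq:R.sri}) exhibits $R^{\mathrm{sri}}$ as the intersection of the family $\{R^{f_{x,c}}\}_{x\in X,\,c\in A}$ built via (\ref{eq:Rf}), each of which is already known to be a fuzzy quasi-order; since the meet in ${\cal Q}(A)$ coincides with the ordinary intersection of fuzzy relations, $R^{\mathrm{sri}}$ is a fuzzy quasi-order.

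Downward closure is routine: if $S\in{\cal Q}(A)$ satisfies $S\le R^{\mathrm{sri}}$, then $S\circ\delta_x^A\le R^{\mathrm{sri}}\circ\delta_x^A=\delta_x^A$ by (\ref{eq:comp.mon}), and reflexivity of $S$ gives the reverse inequality, so $S\in {\cal Q}^{\mathrm{sri}}({\cal A})$. Conversely, every element of ${\cal Q}^{\mathrm{sri}}({\cal A})$ lies below $R^{\mathrm{sri}}$ by the greatest-solution property derived above. This proves (a).

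For (b), I invoke the fact recalled just before the statement of the theorem (cf.\ \cite{CIB.09,Perf.04,PG.03,PN.07,San.76}) that the set of solutions in ${\cal Q}(A)$ to the single equation $R\circ\tau^A=\tau^A$ is the principal ideal generated by $R^\tau$ from (\ref{eq:Rf}). Thus ${\cal Q}^{\mathrm{sri}}({\cal A}')$ is the intersection of the two principal ideals produced by $R^{\mathrm{sri}}$ and $R^\tau$, which in the complete lattice ${\cal Q}(A)$ equals the principal ideal generated by $R^{\mathrm{sri}}\wedge R^\tau$. No serious obstacle is anticipated; the only mildly delicate point is the initial reduction of the equation $R\circ\delta_x^A=\delta_x^A$ to the inequality $R\circ\delta_x^A\le\delta_x^A$, which crucially uses reflexivity of $R$ and would fail for arbitrary fuzzy relations.
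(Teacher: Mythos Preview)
Your proof is correct and follows essentially the same route as the paper's: you reduce the equation $R\circ\delta_x^A=\delta_x^A$ to the inequality $R\circ\delta_x^A\le\delta_x^A$ via reflexivity, unwind this pointwise via the adjunction property, recognize $R^{\mathrm{sri}}$ as the intersection of fuzzy quasi-orders of the form $R^{f}$ from (\ref{eq:Rf}), and handle (b) by intersecting with the principal ideal generated by $R^\tau$. The paper presents the same content as a single chain of equivalences $R\le R^{\mathrm{sri}}\iff\cdots\iff R\circ\delta_x^A=\delta_x^A$, whereas you split it into ``greatest solution'' plus a separate downward-closure check, but the underlying argument is identical.
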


\begin{proof}
(a) We have that $R^{\mathrm{sri}}$ is a fuzzy quasi-order, as an intersection
of a family of fuzzy quasi-orders~defined as in (\ref{eq:Rf}).~Let $R$ be
an arbitrary fuzzy quasi-order on $A$.~Then we have that
\[
\begin{aligned}
R\le R^{\mathrm{sri}} \ &\iff\ \  (\forall x\in X)(\forall a,b,c\in
A)\  R(a,b)\le \delta_x^A(b,c)\to \delta_x^A(a,c) \\
&\iff\ \  (\forall x\in X)(\forall a,b,c\in A)\  R(a,b)\otimes \delta_x^A(b,c)\le
\delta_x^A(a,c) \\
&\iff\ \  (\forall x\in X)(\forall a,c\in A)\  \bigvee_{b\in A} R(a,b)\otimes \delta_x^A(b,c)\le \delta_x^A(a,c) \\
&\iff\ \  (\forall x\in X)(\forall a,c\in A)\  R\circ \delta_x^A(a,c)\le \delta_x^A(a,c) \\
&\iff\ \  (\forall x\in X)\  R\circ \delta_x^A\le \delta_x^A \\
&\iff\ \  (\forall x\in X)\  R\circ \delta_x^A = \delta_x^A ,
\end{aligned}
\]
so $R$ is the strongly right invariant if and only if it belongs
to the principal ideal of ${\cal Q}(A)$ generated by
$R^{\mathrm{sri}}$.

(b) This follows immediately by (a).
\end{proof}

According to (\ref{eq:R.sri}), the greatest strongly right invariant crisp quasi-order can
be effectively compu\-ted~for any fuzzy finite automaton over an arbitrary complete residuated lattice, not necessary locally finite.~However, in cases when we are able to effectively compute the greatest right invariant fuzzy quasi-order, using it we can attain better reduction than using the greatest strongly right invariant quasi-order.~Indeed, the following
example~pre\-sents a fuzzy automaton whose number of states can be reduced
by means of   right invariant~fuzzy~quasi-orders, but it can not be reduced using
strongly right invariant ones.
\begin{example}\label{ex:ri.sri}\rm
Consider again the fuzzy automaton $\cal A$ from Example \ref{ex:R.ri}.~In this example
we  showed~that the greatest right invariant fuzzy quasi-order $R^{\mathrm{ri}}$
on $\cal A$ reduces $\cal A$ to a fuzzy automaton with two states. On the other hand,
the greatest strongly right invariant fuzzy quasi-order $R^{\mathrm{sri}}$ on $\cal
A$ is given by
\[
R^{\mathrm{sri}} =
\begin{bmatrix}
1 & 0 & 1  \\
0 & 1 & 1 \\
0 & 0 & 1
\end{bmatrix},
\]
and the related afterset fuzzy automaton
${\cal A}_2={\cal A}/R^{\mathrm{sri}}=(A_2,X,\delta^{A_2})$ has also three states
and fuzzy~transition relations $\delta_x^{A_2}$ and $\delta_y^{A_2}$ are given by
\[
\delta_x^{A_2} = \delta_x^A\circ R^{\mathrm{sri}} =
\begin{bmatrix}
1 & 0 & 1 \\
0 & 0 & 0 \\
0 & 0 & 0
\end{bmatrix}, \ \ \ \
\delta_y^{A_2} = \delta_y^A\circ R^{\mathrm{sri}} =
\begin{bmatrix}
1 & 0 & 1  \\
1 &  1  & 1 \\
1 & 0 & 1
\end{bmatrix}.
\]
Further, the greatest strongly right invariant fuzzy quasi-order $R_{2}^{\mathrm{sri}}$ on ${\cal A}_2$ is given by
\[
R_2^{\mathrm{sri}} =
\begin{bmatrix}
1 & 0 & 1  \\
0 & 1 & 1 \\
0 & 0 & 1
\end{bmatrix},
\]
and the afterset fuzzy automaton ${\cal A}_2/R_2^{\mathrm{sri}}$
is isomorphic to ${\cal A}_2$.~Therefore, the number of states of
$\cal A$ can not be reduced by means of strongly right invariant
fuzzy quasi-orders.

\end{example}

\section{Weakly right and left invariant fuzzy quasi-orders}\label{sec:WEAKLY}

In the previous sections
we have considered right and left invariant
fuzzy quasi-orders
 and some   special types of these fuzzy quasi-orders.~In this section we study some fuzzy quasi-orders which are more general
than
right and left invariant ones.

Let ${\cal A}=(A,X,\delta^A,\sigma^A,\tau^A)$ be a fuzzy recognizer.~For any
$u\in X^*$ we define fuzzy sets $\sigma_u^A,\tau_u^A\in L^A$ by
\[
\sigma_u^A(a)=\bigvee_{b\in A}\sigma^A(b)\otimes \delta_{*}^A(b,u,a),
\qquad
\tau_u^A(a)=\bigvee_{b\in A}\delta_{*}^A(a,u,b)\otimes \tau^A(b),
\]
for each $a\in A$, i.e.,~   $\sigma_u^A=\sigma^{A}\circ \delta_u^A$ and $\tau_u^A=\delta_u^A\circ \tau^A$.~Evidently, for the empty word $e\in X^*$ we~have~that $\sigma_e^A=\sigma^A$
and  $\tau_e^A=\tau^A$.~Fuzzy sets $\sigma_u^A$ have been already used in \cite{ICB.08},~and~they played a key role in~determi\-ni\-zation of fuzzy automata.~By the same
rule, for any $a\in A$ we define fuzzy languages $\sigma_a^A, \tau_a^A\in L^{X^*}$,~i.e.,
$\sigma_a^A(u)=\sigma_u^A(a)$ and $\tau_a^A(u)=\tau_u^A(a)$, for every $u\in X^*$.~Following
terminology used in \cite{CC.04} for non-determi\-nistic automata, we call $\sigma_a^A$ the {\it left fuzzy language\/} of $a$, and $\tau_a^A$ the {\it right fuzzy language\/} of $a$.~Left fuzzy languages have been already studied in \cite{ICB.08,ICBP}.

A fuzzy quasi-order $R$ on $A$ which is a solution to a system of fuzzy relation
equations
\begin{equation}\label{eq:syst.R.tau}
R\circ \tau_u^A = \tau_u^A , \ \ \text{for every}\ u\in X^* ,
\end{equation}
is called a {\it weakly right invariant fuzzy quasi-order\/} on the fuzzy recognizer
$\cal A$, and if $R$ is a solution to
\begin{equation}\label{eq:syst.R.sigma}
\sigma_u^A\circ R = \sigma_u ^A, \ \ \text{for every}\ u\in X^* ,
\end{equation}
then it is called a {\it weakly left invariant fuzzy quasi-order\/} on $\cal A$.~Fuzzy equivalences on $\cal A$ which are~solu\-tions to (\ref{eq:syst.R.tau}) will be called {\it weakly right invariant fuzzy equivalences\/}, and those which are solutions
to~(\ref{eq:syst.R.sigma}) will be called {\it weakly left invariant fuzzy equivalences\/}.

We have the following

\begin{theorem}\label{th:syst.R.tau}
Let ${\cal A}=(A,X,\delta^A,\sigma^A,\tau^A)$ be a fuzzy recognizer. Then
\begin{itemize}\parskip=-2pt
\item[{\rm (a)}] The set ${\cal Q}^{\mathrm{wri}}(A)$ of all weakly right invariant
fuzzy quasi orders on $\cal A$ is a principal ideal of the~lattice ${\cal Q}(A)$.~The greatest element of this principal ideal is a fuzzy quasi-order $R^{\mathrm{wri}}$ on $A$ defined by
\begin{equation}\label{eq:R.rq}
R^{\mathrm{wri}}(a,b) = \bigwedge_{u\in X^*} \tau_u^A(b)\to \tau_u^A(a)  , \ \ \ \ \text{for
all $a,b\in A$.}
\end{equation}
Moreover, $R^{\mathrm{wri}}$ is the greatest solution to the system {\rm (\ref{eq:syst.R.tau})} in ${\cal R}(A)$.
\item[{\rm (b)}] Every weakly right invariant fuzzy quasi-order on $\cal A$
is a solution to the general system.
\item[{\rm (c)}] Every right invariant fuzzy quasi-order on $\cal A$ is weakly right
invariant.
\end{itemize}
\end{theorem}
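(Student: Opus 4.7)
The plan is to prove the three parts sequentially, reducing each to a combination of basic adjunction facts, the already recalled solvability of single fuzzy relation equations of the form $R\circ f=f$, and the characterisation of right invariance given in Theorem \ref{th:rifqo}.

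For part (a), I would exploit the fact — already invoked before Theorem \ref{th:rifqo} and used to identify $R^\tau$ — that for any fuzzy subset $f\in L^A$, the greatest fuzzy relation on $A$ satisfying $R\circ f=f$ is the fuzzy quasi-order $R^f(a,b)=f(b)\to f(a)$. Indeed, by the adjunction property, $R\circ f\le f$ is equivalent to $R(a,b)\le f(b)\to f(a)$ for all $a,b\in A$, and reflexivity of $R^f$ forces the reverse inequality $R^f\circ f\ge f$ automatically. Applying this to $f=\tau_u^A$ for each $u\in X^*$, the greatest solution in ${\cal R}(A)$ to the single equation $R\circ \tau_u^A=\tau_u^A$ is $R^{\tau_u^A}(a,b)=\tau_u^A(b)\to \tau_u^A(a)$, and the greatest common solution to the whole system (\ref{eq:syst.R.tau}) is the intersection $R^{\mathrm{wri}}=\bigwedge_{u\in X^*}R^{\tau_u^A}$, which by (\ref{eq:res.inf.inf}) has exactly the form (\ref{eq:R.rq}). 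Because each $R^{\tau_u^A}$ is a fuzzy quasi-order and ${\cal Q}(A)$ is closed under intersection, $R^{\mathrm{wri}}\in{\cal Q}(A)$; since the set of solutions in ${\cal Q}(A)$ is clearly closed downwards (if $R'\le R$ and $R\circ \tau_u^A=\tau_u^A$, then by reflexivity of $R'$ we have $\tau_u^A\le R'\circ\tau_u^A\le R\circ\tau_u^A=\tau_u^A$), this set is the principal ideal of ${\cal Q}(A)$ generated by $R^{\mathrm{wri}}$.

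For part (b), let $R$ be a weakly right invariant fuzzy quasi-order and let $w=x_1x_2\cdots x_n\in X^+$. I would prove, by downward induction on $k\in\{n,n-1,\ldots,1\}$, the identity
\[
R\circ \delta_{x_k}^A\circ R\circ \delta_{x_{k+1}}^A\circ R\circ\cdots\circ R\circ \delta_{x_n}^A\circ R\circ\tau^A \;=\; \delta_{x_k}^A\circ \delta_{x_{k+1}}^A\circ\cdots\circ\delta_{x_n}^A\circ\tau^A \;=\; \tau_{x_kx_{k+1}\cdots x_n}^A.
\]
The base case uses $R\circ\tau^A=\tau^A$ (the instance $u=e$ of (\ref{eq:syst.R.tau})); the inductive step rewrites $R\circ \delta_{x_k}^A\circ\tau_{x_{k+1}\cdots x_n}^A=R\circ\tau_{x_kx_{k+1}\cdots x_n}^A$ using (\ref{eq:delta.uv.2}) and then collapses it to $\tau_{x_kx_{k+1}\cdots x_n}^A$ by (\ref{eq:syst.R.tau}). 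Taking $k=1$ and pre-composing with $\sigma^A$ yields the non-empty-word equations of the general system (\ref{eq:gen.syst}); the empty word case $\sigma^A\circ R\circ\tau^A=\sigma^A\circ\tau^A$ is immediate from $R\circ\tau^A=\tau^A$.

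For part (c), I would invoke Theorem \ref{th:rifqo}(ii) to conclude that right invariance yields $R\circ\delta_x^A\le\delta_x^A\circ R$ for every $x\in X$. A straightforward induction on $|u|$, using associativity of $\circ$ and the isotonicity property (\ref{eq:comp.mon}), extends this to $R\circ\delta_u^A\le\delta_u^A\circ R$ for every $u\in X^*$. Combining with the extra defining condition $R\circ\tau^A=\tau^A$, we obtain $R\circ\tau_u^A=R\circ\delta_u^A\circ\tau^A\le\delta_u^A\circ R\circ\tau^A=\delta_u^A\circ\tau^A=\tau_u^A$, and reflexivity of $R$ supplies the reverse inequality. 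The only mildly delicate point in the whole proof is the bookkeeping in part (b) — matching the nested $R$-insertions in (\ref{eq:gen.syst}) with the correct suffix $x_kx_{k+1}\cdots x_n$ at each inductive step; parts (a) and (c) are otherwise routine.
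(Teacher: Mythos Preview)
Your proposal is correct and follows essentially the same approach as the paper: the same adjunction-based identification of $R^{\mathrm{wri}}$ in (a), the same induction collapsing the alternating $R$-insertions in (b), and the same extension of right invariance to all words followed by $R\circ\tau^A=\tau^A$ in (c). The one noteworthy difference is in the ``greatest in ${\cal R}(A)$'' claim of (a): the paper first establishes $R^{\mathrm{wri}}$ as greatest among quasi-orders and then, for an arbitrary relation $R$ solving the system, passes to the quasi-order $(R\lor I)^\infty$ to bound it by $R^{\mathrm{wri}}$; your direct adjunction argument (any solution satisfies $R\circ\tau_u^A\le\tau_u^A$, hence $R\le R^{\tau_u^A}$ for every $u$) is a cleaner shortcut that avoids this detour.
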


\begin{proof}
(a) Beeing an intersection of a family
of fuzzy quasi-orders defined as in (\ref{eq:Rf}), $R^{\mathrm{wri}}$ is a fuzzy~quasi-order.~According to   results from
\cite{San.76} (see also \cite{Perf.04,PG.03,PN.07}), $R^{\mathrm{wri}}$ is the greatest solution to (\ref{eq:syst.R.tau}), and it~is~easy to check that solutions to (\ref{eq:syst.R.tau})
in ${\cal Q}(A)$ form an ideal of ${\cal Q}(A)$, and thus, they~form~a principal
ideal~of~${\cal Q}(A)$.

Let $R$ be an arbitrary solution to (\ref{eq:syst.R.tau}) in ${\cal R}(A)$.~The
equality relation $I$  on $A$ is also a solution to (\ref{eq:syst.R.tau}), and by
(\ref{eq:comp.sup}) we obtain that $(R\lor I)^\infty $ is a solution to (\ref{eq:syst.R.tau}).~Since $(R\lor I)^\infty $ is a fuzzy quasi-order on $A$, we conclude that
$R\le (R\lor I)^\infty \le R^{\mathrm{wri}}$, and therefore, $R^{\mathrm{wri}}$ is the greatest solution
to (\ref{eq:syst.R.tau}) in ${\cal R}(A)$.

(b) Let $R$ be an arbitrary weakly right invariant fuzzy quasi-order on $\cal A$.~By induction on $n$ we~will prove that
\begin{equation}\label{eq:gen.syst.2}
 R\circ \delta_{x_1}^A\circ R\circ \delta_{x_2}^A\circ R \circ \cdots
\circ R\circ \delta_{x_n}^A\circ R\circ \tau^A = \delta_{x_1}^A\circ \delta_{x_2}^A\circ \cdots \circ
\delta_{x_n}^A\circ \tau^A ,
\end{equation}
for every $n\in \Bbb N$ and all $x_1,x_2,\ldots ,x_n\in X$.~First we note that $\tau_e^A=\tau^A $, where $e\in X^*$ is the empty~word, and by (\ref{eq:syst.R.tau}) we obtain that $R\circ \tau^A =\tau^A $. By this and by (\ref{eq:syst.R.tau}), for each $x\in X$ we have that
\[
 R\circ \delta_x^A\circ R\circ \tau^A =  R\circ \delta_x^A\circ  \tau^A =   \delta_x^A\circ  \tau^A  ,
\]
and hence, (\ref{eq:gen.syst.2}) holds for $n=1$.~Suppose now that (\ref{eq:gen.syst.2}) holds for some $n\in \Bbb N$.~Then by (\ref{eq:gen.syst.2})
and (\ref{eq:syst.R.tau}), for arbitrary $x_1,\ldots ,x_n,x_{n+1}\in X$ we have that
\[
\begin{aligned}
&R\circ \delta_{x_1}^A\circ R\circ \delta_{x_2}^A\circ  \cdots
\circ R\circ \delta_{x_n}^A\circ R\circ \delta_{x_{n+1}}^A\circ R\circ \tau^A = \\
&\hspace{20mm}= R\circ \delta_{x_1}^A\circ (R\circ \delta_{x_2}^A\circ  \cdots
\circ R\circ \delta_{x_n}^A\circ R\circ \delta_{x_{n+1}}^A\circ R\circ \tau^A )\\
&\hspace{20mm}= R\circ \delta_{x_1}^A\circ ( \delta_{x_2}^A\circ  \cdots
\circ  \delta_{x_n}^A\circ \delta_{x_{n+1}}^A\circ \tau^A) \\
&\hspace{20mm}= R\circ \delta_{x_1}^A\circ  \delta_{x_2}^A\circ  \cdots
\circ  \delta_{x_n}^A\circ \delta_{x_{n+1}}^A\circ \tau^A \\
&\hspace{20mm}= \delta_{x_1}^A\circ  \delta_{x_2}^A\circ  \cdots
\circ  \delta_{x_n}^A\circ \delta_{x_{n+1}}^A\circ \tau^A .
\end{aligned}
\]
Therefore, by induction we conclude that (\ref{eq:gen.syst.2}) holds for every $n\in \Bbb N$.~Finally, it follows immediately~by~(\ref{eq:gen.syst.2}) that $R$ is a solution
to the general system.

(c) Let $R$ be a right invariant fuzzy quasi-order on $\cal
A$.~For each $u\in X^*$ we have   $R\circ \delta_u^A\circ
R=\delta_u^A\circ R$, and also $R\circ \tau^A =\tau^A$, what
implies $R\circ \tau_u^A=R\circ \delta_u^A\circ \tau  ^A= R\circ
\delta_u^A\circ R\circ \tau^A = \delta_u^A\circ R\circ \tau^A =
\delta_u^A\circ \tau^A = \tau_u^A$.~Hence, $R$ is the weakly right
invariant.
\end{proof}

Let us note that $R^{\mathrm{wri}}$ can be also represented by
\begin{equation}\label{eq:R.rq.2}
R^{\mathrm{wri}}(a,b) = \bigwedge_{u\in X^*} \tau_b^A(u)\to \tau_a^A(u), \ \ \ \ \text{for
all $a,b\in A$,}
\end{equation}
i.e., $R^{\mathrm{wri}}(a,b)$ can be interpreted as the degree of inclusion of a  fuzzy
language $\tau_b^A$ in the fuzzy language~$\tau_a^A$.

Analogously, we can define a fuzzy quasi-order $R^{\mathrm{wli}}$ on $\cal A$ by
\begin{equation}\label{eq:R.lq}
R^{\mathrm{wli}}(a,b) =  \bigwedge_{u\in X^*} \sigma_u^A(a)\to \sigma_u^A(b) = \bigwedge_{u\in X^*} \sigma_a^A(u)\to \sigma_b^A(u), \ \ \ \ \text{for
all $a,b\in A$,}
\end{equation}
and we can prove that $R^{\mathrm{wli}}$ is the greatest weakly left invariant fuzzy
quasi-order on $\cal A$, that every weakly left invariant fuzzy quasi-order on $\cal
A$ is also a solution to the general system,~and that every left invariant fuzzy quasi-order on $\cal A$ is weakly left invariant.~We can also show that the greatest weakly right invariant fuzzy equivalence $E^{\mathrm{wrie}}$ on ${\cal A}$ is given by
\begin{equation}\label{eq:E.re}
E^{\mathrm{wrie}}(a,b) =  \bigwedge_{u\in X^*} \tau_u^A(a)\lra \tau_u^A(b) = \bigwedge_{u\in X^*} \tau_a^A(u)\lra \tau_b^A(u), \ \ \ \ \text{for
all $a,b\in A$,}
\end{equation}
and the greatest weakly left invariant fuzzy equivalence $E^{\mathrm{wlie}}$ on $\cal A$ is given by
\begin{equation}\label{eq:E.le}
E^{\mathrm{wlie}}(a,b) =  \bigwedge_{u\in X^*} \sigma_u^A(a)\lra \sigma_u^A(b) = \bigwedge_{u\in X^*} \sigma_a^A(u)\lra \sigma_b^A(u), \ \ \ \ \text{for
all $a,b\in A$,}
\end{equation}
Clearly,  $E^{\mathrm{wrie}}$ is the natural fuzzy equivalence of $R^{\mathrm{wri}}$,
and $E^{\mathrm{wlie}}$ is the natural fuzzy equivalence~of~$R^{\mathrm{wlie}}$. We will also call $R^{\mathrm{wri}}$ the {\it right Myhill-Nerode's fuzzy quasi-order\/} of  $\cal A$, $R^{\mathrm{wli}}$ the {\it left  Myhill-Nerode's~fuzzy quasi-order\/} of  $\cal A$,  $E^{\mathrm{wrie}}$ the {\it right Myhill-Nerode's fuzzy eqivalence\/} of $\cal A$, and $E^{\mathrm{wlie}}$ the {\it left Myhill-Nerode's fuzzy eqivalence\/} of $\cal A$.~Note
that a fuzzy relation $N_\sigma $ on the free monoid $X^*$ defined in a similar way
by
\[
N_\sigma (u,v) =  \bigwedge_{a\in A} \sigma_u^A(a)\lra \sigma_v^A(a) = \bigwedge_{a\in A} \sigma_a^A(u)\lra \sigma_a^A(v), \ \ \ \ \text{for all $u,v\in X^*$,}
\]
is called the {\it Nerode's fuzzy right congruence\/} on $X^*$.~Nerode's fuzzy right
congruences and Myhill's fuzzy congruences on free monoids associated with fuzzy automata have been studied in \cite{ICB.08,ICBP}.

The following example shows that there are weakly right invariant fuzzy quasi-orders which are not~right invariant, and that weakly right invariant fuzzy quasi-orders generally give better reductions than right invariant ones, according to Remark \ref{re:afrcardinality}.

\begin{example}\rm
Let $\cal L$ be the Boolean structure and  ${\cal A}=(A,X,\delta^A,\sigma^A,\tau^A)$  a fuzzy recognizer over $\cal L$, where $A=\{1,2,3,4\}$, $X=\{x\}$, $\sigma^A$
is any fuzzy subset of $A$ and  $\delta_x^A$, and  $\tau^A$ are given by
\[
\delta_x^A=\begin{bmatrix}
1 & 0 & 0 & 0 \\
0 & 0 & 0 & 1 \\
0 & 0 & 0 & 0 \\
0 & 0 & 0 & 0
\end{bmatrix},\ \ \ \
\tau^A=\begin{bmatrix}
0 \\
0 \\
1 \\
0
\end{bmatrix}.
\]
For the sake of simplicity set $\tau^A=\tau $.~As we have noted before, the greatest right invariant fuzzy~equivalence on the fuzzy recognizer $\cal A$ is the greatest right invariant fuzzy equivalence on the fuzzy automaton $(A,X,\delta^A)$
contained in the fuzzy quasi-order $R^{\tau }$. In this example we have
 \[
R^\tau =\begin{bmatrix}
1 & 1 & 0 & 1 \\
1 & 1 & 0 & 1 \\
1 & 1 & 1 & 1 \\
1 & 1 & 0 & 1
\end{bmatrix},
\]
and hence, applying the procedure from Theorem \ref{th:alg} to $R^\tau $ we obtain that the greatest right invariant fuzzy equivalence $R^{\mathrm{ri}}$ on $\cal A$ is
\[
R^{\mathrm{ri}} =
\begin{bmatrix}
1 & 1 & 0 & 1 \\
0 & 1 & 0 & 1 \\
0 & 0 & 1 & 1 \\
0 & 0 & 0 & 1
\end{bmatrix}.
\]
On the other hand, we have that $\tau_e=\tau $ and
\[
\tau_x = \delta_x^A\circ \tau  =
\begin{bmatrix}
0 \\
0 \\
0 \\
0
\end{bmatrix},\ \ \tau_{x^2}=\delta_x^A\circ \tau_x=\tau_x
\]
what means that $\tau_u=\tau_x$,
for every $u\in X^*$, $u\ne e$, whence
\[
R^{\mathrm{wri}} = R^{\tau } \land R^{\tau_x} = \begin{bmatrix}
1 & 1 & 0 & 1 \\
1 & 1 & 0 & 1 \\
1 & 1 & 1 & 1 \\
1 & 1 & 0 & 1
\end{bmatrix}\land
\begin{bmatrix}
1 & 1 & 1 & 1 \\
1 & 1 & 1 & 1 \\
1 & 1 & 1 & 1 \\
1 & 1 & 1 & 1
\end{bmatrix} =\begin{bmatrix}
1 & 1 & 0 & 1 \\
1 & 1 & 0 & 1 \\
1 & 1 & 1 & 1 \\
1 & 1 & 0 & 1
\end{bmatrix} .
\]
Hence, $R^{\mathrm{ri}}$ is strictly smaller than
$R^{\mathrm{wri}}$, and $R^{\mathrm{ri}}$ do not reduce the number
of states of $\cal A$, whereas $R^{\mathrm{wri}}$~reduces $\cal A$
to a fuzzy~recog\-nizer ${\cal
A}/R^{\mathrm{wri}}=(A_2,X,\delta^{A_2},\sigma^{A_2},\tau^{A_2})$
with two states, where $\delta_x^{A_2}$ and $\tau^{A_2}$ are
given~by
\[
\delta_x^{A_2}=\begin{bmatrix}
1 & 0 \\
1 & 0
\end{bmatrix}, \ \
\tau^{A_2}=\begin{bmatrix}
0 \\
1
\end{bmatrix},
\]
and $\sigma^{A_2}$ is defined as in (\ref{eq:sE}).
\end{example}

However, although weakly right invariant and weakly left invariant
fuzzy quasi-orders generally give better reductions than right
invariant and left invariant ones, they have a serious
shortcoming.~For~fuzzy automata and fuzzy recognizers over a
locally finite complete residuated lattice, the greatest~right and
left invariant fuzzy equivalences can be computed in a polynomial
time, using a procedure from Theorem~\ref{th:alg}, but computing
the greatest~weakly right and left invariant ones is
computationally hard.~Namely, any particular equation $R\circ
\tau_u^A=\tau_u^A$~in~(\ref{eq:syst.R.tau}) can be easily solved
if the fuzzy set $\tau_u^A$ is given, but computing  $\tau_u^A$,
for all $u\in X^*$, may be very~hard.~In~fact, computing
$\tau_u^A$, for all $u\in X^*$, is nothing else than
determinization of the reverse fuzzy recognizer of $\cal A$,
whereas computing $\sigma_u^A$, for all $u\in X^*$, is the
determinization of $\cal A$ using a procedure developed in
\cite{ICB.08},~called the {\it accessible fuzzy subset
construction\/}.~It is well-known that determini\-zation of crisp
non-deterministic recognizers may require an exponential time,
because numbers of elements of the  sets $\{\sigma_u^A\mid u\in
X^*\}$ and $\{\tau_u^A\mid u\in X^*\}$  may be exponential in the
number of states~of $\cal A$, and in the case of fuzzy
recognizers~these sets may even be infinite.~Conditions under
which these sets must be finite have been determined
in~\cite{ICB.08,ICBP}.~Moreover, because of exponential growth in
the~number~of states during determinization of non-deterministic
recognizers, state reduction procedures are often used to decrease
the number of states prior to determinization. But, here we have
that determinization is needed prior to the state reduction by
means of  the greatest~weakly right and left invariant fuzzy
quasi-orders.

\section{Alternate reductions}\label{sec:ALTERNATE}

In this section we show that better reductions can be obtained alternating
reductions by means of the greatest right and left invariant fuzzy quasi-orders,
or the greatest weakly right and left invariant fuzzy quasi-orders.~We show that
even if any of these fuzzy quasi-orders separately do not reduce the number~of states, alternating right and left invariant ones, or weakly right and left invariant ones, the number of states can be reduced.

\begin{theorem}\label{th:R.RS}
Let ${\cal A}$ be a fuzzy automaton or a fuzzy recognizer, let $R$ be a right invariant fuzzy
quasi-order on~$\cal A$ and let $S$ be a fuzzy quasi-order on the set of states of
 $\cal A$ such that $R\le S$.~Then
\begin{itemize}\parskip=-2pt
\item[{\rm (a)}] $S$ is a right invariant fuzzy quasi-order on
$\cal A$ if and only if $S/R$ is a right invariant fuzzy
quasi-order on ${\cal A}/R$; \item[{\rm (b)}] $S$ is the greatest
right invariant fuzzy quasi-order on $\cal A$ if and only if $S/R$
is the greatest right~invar\-iant fuzzy quasi-order on ${\cal
A}/R$; \item[{\rm (c)}] $R$ is the greatest right invariant fuzzy
quasi-order on ${\cal A}$ if and only if $\widetilde R$ is the
greatest right~invar\-iant fuzzy quasi-order on  ${\cal A}/R$.
\end{itemize}
\end{theorem}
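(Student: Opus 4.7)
The plan is to prove (a) by reducing to an identity in the set ${\cal R}(A)$ of fuzzy relations, derive (b) from (a) together with the fact (Theorem \ref{th:lrifqo}(a)) that right invariant fuzzy quasi-orders form a complete join-subsemilattice, and obtain (c) as the case $S=R$ of (b). The identity driving everything is that $R\le S$ combined with reflexivity of $R$ and transitivity of $S$ gives $R\circ S=S\circ R=S$.

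For (a), I would evaluate both sides of the right-invariance equation for $S/R$ on an arbitrary pair $(R_a,R_b)$ of aftersets, using the formulas (\ref{eq:aft.aut.1}) and (\ref{eq:EdE=d}). The left-hand side $(S/R)\circ\delta_x^{A/R}\circ(S/R)$ pulls back to $S\circ R\circ\delta_x^A\circ R\circ S$, which collapses to $S\circ\delta_x^A\circ S$ by the identity above. The right-hand side $\delta_x^{A/R}\circ(S/R)$ pulls back to $R\circ\delta_x^A\circ R\circ S$, which collapses to $\delta_x^A\circ S$: Theorem \ref{th:rifqo}(ii) applied to the right invariant $R$ gives $R\circ\delta_x^A\le\delta_x^A\circ R$, whence $R\circ\delta_x^A\circ S\le\delta_x^A\circ R\circ S=\delta_x^A\circ S$, while reflexivity of $R$ supplies the reverse inequality. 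So right invariance of $S/R$ on ${\cal A}/R$ is equivalent to $S\circ\delta_x^A\circ S=\delta_x^A\circ S$ for every $x\in X$, i.e.\ right invariance of $S$ on $\cal A$. In the recognizer case, the terminal-state condition $(S/R)\circ\tau^{A/R}=\tau^{A/R}$ reduces by the same pullback to $S\circ\tau^A=R\circ\tau^A$, and the right-hand side equals $\tau^A$ because $R$ is right invariant on the recognizer.

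For (b), the key extra observation is that $\widetilde R=R/R$ is itself a right invariant fuzzy quasi-order on ${\cal A}/R$, by the case $S=R$ of~(a). The injection $\Phi:T\mapsto T/R$ from ${\cal Q}_R(A)$ into ${\cal Q}(A/R)$ has image exactly $\{Q\mid Q\ge\widetilde R\}$, with inverse $Q\mapsto T$ where $T(a,b)=Q(R_a,R_b)$. In the forward direction, given $S$ greatest on $\cal A$ and any right invariant $Q$ on ${\cal A}/R$, the join $Q\vee\widetilde R$ in ${\cal Q}(A/R)$ is right invariant by Theorem \ref{th:lrifqo}(a) and dominates $\widetilde R$, so it pulls back to some right invariant $T\ge R$ on $\cal A$; maximality of $S$ gives $T\le S$, and then (\ref{eq:FE.GE}) yields $Q\le Q\vee\widetilde R=T/R\le S/R$. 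Symmetrically, given $S/R$ greatest on ${\cal A}/R$ and any right invariant $T$ on $\cal A$, the join $T\vee R$ is right invariant and contains $R$, so $(T\vee R)/R$ is right invariant on ${\cal A}/R$ by the forward direction of~(a), hence bounded by $S/R$, and (\ref{eq:FE.GE}) gives $T\le T\vee R\le S$. Statement (c) is then the instance $S=R$ of (b), where the hypothesis $R\le S$ is vacuous. The main obstacle is precisely the lifting step in~(b): an arbitrary right invariant $Q$ on ${\cal A}/R$ need not lie in the image of $\Phi$, so one cannot pull it back directly, and the join with $\widetilde R$ is the trick that moves $Q$ into the image while preserving right invariance.
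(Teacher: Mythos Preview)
Your argument for (a) is essentially identical to the paper's: both pull back the two sides of the right-invariance equation along $R_a\mapsto a$, use $R\circ S=S\circ R=S$, and invoke right invariance of $R$ to collapse $R\circ\delta_x^A\circ R\circ S$ to $\delta_x^A\circ S$; the recognizer clause is handled the same way.

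In (b) you take a slightly more elaborate route than the paper. You worry that an arbitrary right invariant $Q$ on ${\cal A}/R$ need not contain $\widetilde R$, and fix this by passing to the join $Q\vee\widetilde R$ (right invariant by Theorem~\ref{th:lrifqo}(a)) before pulling back; symmetrically you replace an arbitrary right invariant $T$ on $\cal A$ by $T\vee R$. This is correct, but the paper sidesteps the obstacle entirely: in the forward direction it takes $Q$ to be the \emph{greatest} right invariant fuzzy quasi-order on ${\cal A}/R$, and then $\widetilde R\le Q$ holds automatically (since $\widetilde R$ is right invariant by the case $S=R$ of (a)), so $Q$ already lies in the image of $\Phi$ and can be pulled back directly to a right invariant $T\ge R$ on $\cal A$. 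Likewise, in the converse the paper takes $T$ to be the greatest right invariant on $\cal A$, so $R\le T$ is immediate. Your approach buys you a proof that does not presuppose the existence of greatest elements (you effectively reprove it via the join-subsemilattice property), at the cost of an extra join at each step; the paper's approach is shorter because it uses that existence, already established in Theorem~\ref{th:lrifqo}. Part (c) is handled identically in both.
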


\begin{proof} First we note that $R\le S$ is equivalent to $R\circ S=S\circ R=S$.

(a) Assume first that ${\cal A}=(A,X,\delta^{A} )$ is a fuzzy automaton. Consider any $a,b\in A$ and~$x\in X$. Then
\begin{equation}\label{eq:SR1}
\begin{aligned}
(\delta_x^{A/R}\circ S/R)(R_a,R_b) &= \bigvee_{c\in A}\delta_x^{A/R}(R_a,R_c)\otimes S/R(R_c,R_b) = \bigvee_{c\in A}(R\circ \delta_x^{A}\circ R)(a,c)\otimes S(c,b) \\
&= (R\circ \delta_x^{A}\circ R\circ S)(a,b) = (\delta_x^{A}\circ R\circ S)(a,b) =
(\delta_x^{A}\circ S)(a,b) ,
\end{aligned}
\end{equation}
and by~the proof of Theorem \ref{th:SIT} it follows that
\begin{equation}\label{eq:SR2}
(S/R\circ \delta_x^{A/R}\circ S/R)(R_a,R_b) = (S\circ \delta_x^A\circ S)(a,b).
\end{equation}
Therefore, by (\ref{eq:SR1}) and (\ref{eq:SR2}) we obtain that (a) holds.

Next, let ${\cal A}=(A,X,\delta^{A},\sigma^A,\tau^A )$ be a fuzzy recognizer. Then
for any $a\in A$ we have that
\[
(S/R\circ \tau^{A/R})(R_a)=\bigvee_{b\in A} S/R(R_a,R_b)\otimes \tau^{A/R}(R_b)=\bigvee_{b\in
A}S(a,b)\otimes \tau^A(b)=(S\circ \tau^A)(a),
\]
so $S/R\circ \tau^{A/R}=\tau^{A/R}$ if and only if $S\circ \tau^A=\tau^A$.
Therefore, in this case we also have that (a) holds.

(b) Let $S$ be the greatest right invariant fuzzy quasi-order on $\cal A$.~By  (a), $S/R$ is a right~invar\-iant fuzzy quasi-order on ${\cal A}/R$.~Let $Q$ be the greatest right invariant fuzzy quasi-order on ${\cal A}/R$. Define
a fuzzy relation $T$ on $\cal A$ by
\[
T(a,b)=Q(R_a,R_b), \qquad \text{for all $a,b\in A$.}
\]
It is easy to verify that $T$ is a fuzzy quasi-order on $\cal A$.~According to (a), $\widetilde R$ is a right invariant~fuzzy quasi-order on ${\cal A}/R$,
what implies $\widetilde R\le Q$, and for arbitrary $a,b\in A$ we
obtain that
\[
R(a,b)=\widetilde R(R_a,R_b)\le Q(R_a,R_b) = T(a,b),
\]
what means that $R\le T$. Therefore, we have that $Q=T/R$, and by  (a)  we obtain~that $T$ is a right invariant fuzzy quasi-order on $\cal A$, what implies  $T\le S$. Now, according to (\ref{eq:FE.GE}), we have that $Q=T/R\le S/R$,
and since $S/R$ is a right invariant fuzzy quasi-order on ${\cal A}/R$, we conclude
that $Q=S/R$, i.e., $S/R$ is the greatest right invariant fuzzy quasi-order on ${\cal A}/R$.

Conversely, let $S/R$ be the greatest right invariant fuzzy quasi-order on ${\cal
A}/R$.~According to (a), $S$ is~a right invariant fuzzy
quasi-order on $\cal A$. Let $T$ be the greatest right invariant~fuzzy quasi-order
on~$\cal A$.~Then  we have that $R\le S\le T$, and by (a) it follows that $T/R$ is
a right invariant fuzzy quasi-order on ${\cal A}/R$, what yields $T/R\le S/R$. Now,
by (\ref{eq:FE.GE}) it follows that $T\le S$, and hence, $T=S$,~and we have proved
that $S$ is the greatest right invariant fuzzy quasi-order on $\cal A$.

(c) This assertion follows immediately by (b).
\end{proof}

Certainly, the previous theorem also holds for left invariant fuzzy quasi orders.
Furthermore, we~have~that a similar theorem concerning weakly right invariant fuzzy
quasi-orders is true:

\begin{theorem}\label{th:R.RS.wri}
Let ${\cal A}=(A,X,\delta^{A},\sigma^A,\tau^A)$ be a fuzzy recognizer, let $R$ be a weakly right invariant fuzzy
quasi-order on~$\cal A$ and let $S$ be a fuzzy quasi-order on $A$ such that $R\le S$.~Then
\begin{itemize}\parskip=-2pt
\item[{\rm (a)}] $S$ is a weakly right invariant fuzzy quasi-order on $\cal A$ if and only
if $S/R$ is a weakly right invariant fuzzy quasi-order on ${\cal A}/R$;
\item[{\rm (b)}] $S$ is the greatest weakly right invariant fuzzy quasi-order on $\cal A$ if and only
if $S/R$ is the greatest weakly right~invar\-iant fuzzy quasi-order on ${\cal A}/R$;
\item[{\rm (c)}] $R$ is the greatest weakly right invariant fuzzy quasi-order on ${\cal A}$ if and only
if $\widetilde R$ is the greatest weakly right~invar\-iant fuzzy quasi-order on  ${\cal A}/R$.
\end{itemize}
\end{theorem}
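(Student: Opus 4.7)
The plan is to parallel the proof of Theorem \ref{th:R.RS}, with the key technical preliminary being the identification of the ``$u$-step terminal fuzzy set'' on the afterset recognizer with that on $\cal A$. Specifically, I will first prove the identity
\[
\tau_u^{A/R}(R_a) = \tau_u^A(a) \qquad \text{for every } u\in X^* \text{ and } a\in A.
\]
For $u=e$ this is just $\tau^{A/R}(R_a) = (R\circ\tau^A)(a) = \tau^A(a)$, where the second equality uses weak right invariance of $R$ applied to $u=e$. For $u=x_1\cdots x_n$, I would expand $\delta_u^{A/R}(R_a,R_b)$ inductively via $(\ref{eq:delta.x})$ as a supremum over tuples of aftersets; because the summand depends on $R_{c_i}$ only (not on the chosen representative $c_i$), this supremum agrees with the corresponding supremum over $A^{n-1}$, and collapsing $R\circ R=R$ yields
\[
\delta_u^{A/R}(R_a,R_b) = (R\circ\delta_{x_1}^A\circ R\circ\cdots\circ R\circ\delta_{x_n}^A\circ R)(a,b).
\]
Composing with $\tau^{A/R}$ (again using $R\circ\tau^A=\tau^A$ to drop the outer $R$) and then invoking Theorem~\ref{th:syst.R.tau}(b), which asserts that $R$ satisfies the general system (\ref{eq:gen.syst}), I get $\tau_u^{A/R}(R_a)=(\delta_{x_1}^A\circ\cdots\circ\delta_{x_n}^A\circ\tau^A)(a)=\tau_u^A(a)$.

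With that identity in hand, part (a) is a direct computation. Using $(S/R)(R_a,R_b)=S(a,b)$ from Theorem~\ref{th:SIT} together with the facts that both $S(a,b)$ and $\tau_u^A(b)$ are constant on $R$-classes (the former by well-definedness of $S/R$, the latter because $R\circ\tau_u^A=\tau_u^A$ when $R$ is weakly right invariant), I would write
\[
\bigl((S/R)\circ\tau_u^{A/R}\bigr)(R_a) = \bigvee_{R_b\in A/R} S(a,b)\otimes\tau_u^A(b) = \bigvee_{b\in A} S(a,b)\otimes\tau_u^A(b) = (S\circ\tau_u^A)(a).
\]
Comparing with $\tau_u^{A/R}(R_a)=\tau_u^A(a)$ gives the equivalence $(S/R)\circ\tau_u^{A/R}=\tau_u^{A/R}\iff S\circ\tau_u^A=\tau_u^A$ for each $u\in X^*$, which is exactly (a).

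For (b), I would follow the template of the proof of Theorem~\ref{th:R.RS}(b). Given the greatest weakly right invariant fuzzy quasi-order $Q$ on ${\cal A}/R$, I define $T(a,b)=Q(R_a,R_b)$; the inequality $R\le T$ is guaranteed because $\widetilde R=R/R$ is weakly right invariant on ${\cal A}/R$ by part (a) (taking $S=R$), whence $\widetilde R\le Q$ and so $R(a,b)=\widetilde R(R_a,R_b)\le Q(R_a,R_b)=T(a,b)$. Then $T/R=Q$, $T$ is weakly right invariant on $\cal A$ by (a), hence $T\le S$, and (\ref{eq:FE.GE}) yields $Q=T/R\le S/R$, forcing equality. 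The converse mirrors this, again using (a) and (\ref{eq:FE.GE}). Part (c) is the specialization of (b) to $S=R$, where $R/R=\widetilde R$.

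The main obstacle is the opening identity $\tau_u^{A/R}(R_a)=\tau_u^A(a)$; once it is established, everything else is essentially formal and indeed almost word-for-word like the proof of Theorem~\ref{th:R.RS}. The critical ingredient is that weak right invariance, although not strong enough to collapse local products of the form $R\circ\delta_x^A\circ R$ into $\delta_x^A\circ R$ (as right invariance does), is strong enough via Theorem~\ref{th:syst.R.tau}(b) to collapse the global iterated product terminated by $\tau^A$.
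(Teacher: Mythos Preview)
Your approach is essentially identical to the paper's: the paper also first establishes the key identity $\tau_u^{A/R}(R_a)=\tau_u^A(a)$, then computes $(S/R\circ\tau_u^{A/R})(R_a)=(S\circ\tau_u^A)(a)$ to obtain (a), and defers (b) and (c) to the argument of Theorem~\ref{th:R.RS}. One small but genuine imprecision: to collapse $R\circ\delta_{x_1}^A\circ R\circ\cdots\circ R\circ\delta_{x_n}^A\circ R\circ\tau^A$ down to $\delta_{x_1}^A\circ\cdots\circ\delta_{x_n}^A\circ\tau^A$ you cannot invoke the statement of Theorem~\ref{th:syst.R.tau}(b), since the general system (\ref{eq:gen.syst}) is a family of \emph{scalar} equations (each side is composed with $\sigma^A$ on the left) and does not by itself yield the needed equality of fuzzy subsets of $A$; what you actually need is the stronger identity (\ref{eq:gen.syst.2}), established inside the proof of Theorem~\ref{th:syst.R.tau}(b), and indeed the paper cites (\ref{eq:gen.syst.2}) directly at that step.
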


\begin{proof} (a) For arbitrary $a\in A$ and $u=x_1\dots x_n\in X^*$,  $x_1,\dots
,x_n\in X$, by (\ref{eq:gen.syst.2}) we obtain that
\[
\begin{aligned}
\tau_u^{A/R}(R_a)&= (\delta_u^{A/R}\circ \tau^{A/R})(R_a) =
\bigvee_{b\in A}\delta_u^{A/R}(R_a,R_b)\otimes \tau^{A/R}(R_b) \\
& = \bigvee_{b\in A}(R\circ \delta_{x_1}^A\circ R\circ \cdots \circ R\circ \delta_{x_n}^A\circ
R)(a,b)\otimes (R\circ \tau^A)(b) \\
& =  (R\circ \delta_{x_1}^A\circ R\circ \cdots \circ R\circ \delta_{x_n}^A\circ
R\circ \tau^A)(a) \\
& =  (\delta_{x_1}^A\circ \cdots \circ \delta_{x_n}^A\circ \tau^A)(a) \\
& = \tau_u^A(a).
\end{aligned}
\]
Next, for any $a\in A$ and $u\in X^*$ we have that
\[
(S/R\circ \tau_u^{A/R})(R_a) = \bigvee_{b\in A}S/R(R_a,R_b)\otimes \tau_u^{A/R}(R_b)
= \bigvee_{b\in A}S(a,b)\otimes \tau_u^A(b) = (S\circ \tau_u^A)(a).
\]
Therefore $S/R\circ \tau_u^{A/R}=\tau_u^{A/R}$ if and only if $S\circ \tau_u^A=\tau_u^A$,
and we have proved that (a) is true.

The assertion (b) can be proved similarly as (b) of Theorem \ref{th:R.RS},
and (c) follows immediately by (b).
\end{proof}

~Let $\cal A$ be a fuzzy automaton.~A sequence ${\cal A}_1, {\cal A}_2, \ldots ,{\cal A}_n$ of
fuzzy automata we will call a {\it ${\cal Q}^{\mathrm{ri}}$-reduc\-tion\/} of $\cal A$ if ${\cal A}_1={\cal A}$
and for each $k\in \{1,2,\ldots ,n-1\}$ we have that ${\cal A}_{k+1}$ is the afterset fuzzy automaton~of~${\cal A}_k$ w.r.t.~the greatest right invariant fuzzy quasi-order on ${\cal A}_k$.~Analogously, using left invariant fuzzy~quasi-orders instead of
right invariant ones we define a {\it ${\cal Q}^{\mathrm{li}}$-reduction\/}  of~${\cal A}$, using strongly right and left invariant fuzzy quasi-orders
we define a {\it ${\cal Q}^{\mathrm{sri}}$-reduction\/} and a {\it ${\cal Q}^{\mathrm{sli}}$-reduction\/} of $\cal A$, and using right and left invariant fuzzy equivalences we define a {\it ${\cal E}^{\mathrm{ri}}$-reduction\/} and a {\it ${\cal E}^{\mathrm{li}}$-reduction\/} of $\cal A$.~If we consider fuzzy recognizers,~in a similar way we define ${\cal Q}^{\mathrm{wri}}$- and {\it
${\cal Q}^{\mathrm{wli}}$-reductions\/},
as well as ${\cal Q}^{\mathrm{ri}}$- and {\it ${\cal Q}^{\mathrm{li}}$-reductions\/}
of fuzzy recognizers.

Let us note that for each fuzzy finite automaton ${\cal A}$ there exists a ${\cal Q}^{\mathrm{ri}}$-reduction ${\cal A}_1, {\cal A}_2, \ldots ,{\cal A}_n$ of~$\cal A$ such that for every ${\cal Q}^{\mathrm{ri}}$-reduction ${\cal A}_1, {\cal A}_2, \ldots ,{\cal A}_n,{\cal A}_{n+1},\ldots ,
{\cal A}_{n+m}$ of $\cal A$ which is a continuation of this reduc\-tion we have that
\[
|{\cal A}_n|=|{\cal A}_{n+1}|=\dots =|{\cal A}_{n+m}|,\vspace{-2mm}
\]
i.e., all fuzzy automata ${\cal A}_{n+1}, \ldots ,{\cal A}_{n+m}$ have the same number of states as ${\cal A}_n$.~Also, there is a~shortest ${\cal Q}^{\mathrm{ri}}$-reduction ${\cal A}_1, {\cal A}_2, \ldots ,{\cal A}_n$ of $\cal A$ having this property, which we will call the {\it shortest ${\cal Q}^{\mathrm{ri}}$-reduction\/}~of~$\cal A$, and then we will call ${\cal A}_n$ a {\it ${\cal Q}^{\mathrm{ri}}$-reduct\/} of ${\cal A}$, and we will cal $n$ the~{\it length\/}~of this shortest ${\cal Q}^{\mathrm{ri}}$-reduction.~If~a
fuzzy automaton $\cal A$ is its own ${\cal Q}^{\mathrm{ri}}$-reduct, then it is called {\it
${\cal Q}^{\mathrm{ri}}$-reduced}.~Analogously we define a {\it ${\cal Q}^{\mathrm{li}}$-reduct\/} of ${\cal A}$ and a {\it ${\cal Q}^{\mathrm{li}}$-reduced\/} fuzzy
automaton, as well as {\it ${\cal Q}^{\mathrm{sri}}$-\/} and {\it ${\cal Q}^{\mathrm{sli}}$-reducts\/}, {\it ${\cal Q}^{\mathrm{sri}}$-\/} and {\it ${\cal Q}^{\mathrm{sli}}$-reduced\/} fuzzy automata, {\it ${\cal E}^{\mathrm{ri}}$-\/} and {\it ${\cal E}^{\mathrm{li}}$-reducts\/}, {\it ${\cal E}^{\mathrm{ri}}$-\/} and {\it ${\cal E}^{\mathrm{li}}$-reduced\/} fuzzy
auto\-mata, and other related notions.~For fuzzy recognizers we similarly define
 ${\cal Q}^{\mathrm{wri}}$- and {\it
${\cal Q}^{\mathrm{wli}}$-reducts\/}, ${\cal Q}^{\mathrm{wri}}$- and {\it
${\cal Q}^{\mathrm{wli}}$-reduced\/} fuzzy~recognizers,~${\cal Q}^{\mathrm{ri}}$-~and {\it
${\cal Q}^{\mathrm{li}}$-reducts\/}, ${\cal Q}^{\mathrm{ri}}$- and {\it
${\cal Q}^{\mathrm{li}}$-reduced\/} fuzzy~recognizers, and so forth.

The next theorem shows that  length of the shortest ${\cal Q}^{\mathrm{ri}}$- and ${\cal Q}^{\mathrm{li}}$-reductions do not exceed
2.~

\begin{theorem}\label{th:rred} A fuzzy recognizer {\rm ({\it automaton\/})} ${\cal A}$ is
${\cal Q}^{\mathrm{ri}}$-reduced if and only if the greatest right invariant fuzzy quasi-order $R^{\mathrm{ri}}$ on $\cal A$ is a fuzzy order.

Consequently, for each fuzzy finite recognizer {\rm ({\it automaton\/})} ${\cal A}$, the afterset fuzzy~recognizer {\rm ({\it automaton\/})} ${\cal A}/R^{\mathrm{ri}}$ is ${\cal Q}^{\mathrm{ri}}$-reduced.
\end{theorem}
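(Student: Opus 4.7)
The plan is to reduce the (seemingly iterative) property of being ${\cal Q}^{\mathrm{ri}}$-reduced to a single-step condition, namely that the first application of the afterset construction does not drop the cardinality. Indeed, $\cal A$ is ${\cal Q}^{\mathrm{ri}}$-reduced exactly when $|{\cal A}_k|=|{\cal A}|$ for every fuzzy automaton ${\cal A}_k$ in any ${\cal Q}^{\mathrm{ri}}$-reduction sequence starting from $\cal A$, and in particular this must hold already at step $k=2$, i.e.\ $|{\cal A}/R^{\mathrm{ri}}|=|{\cal A}|$. Conversely, I will argue that the one-step equality at step $2$ is enough: by Theorem~\ref{th:R.RS}(c), the greatest right invariant fuzzy quasi-order on ${\cal A}/R^{\mathrm{ri}}$ is $\widetilde{R^{\mathrm{ri}}}=R^{\mathrm{ri}}/R^{\mathrm{ri}}$, and the text preceding this theorem (after Theorem~\ref{th:SIT}) notes that $\widetilde R$ is always a fuzzy order on $A/R$; hence no further step of any ${\cal Q}^{\mathrm{ri}}$-reduction decreases the cardinality either.

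The one-step condition is then translated into the fuzzy-order condition on $R^{\mathrm{ri}}$ by invoking Theorem~\ref{th:ER}. The surjection $a\mapsto R^{\mathrm{ri}}_a$ of $A$ onto $A/R^{\mathrm{ri}}$ is a bijection precisely when it is injective, and by Theorem~\ref{th:ER}(a), $R^{\mathrm{ri}}_a=R^{\mathrm{ri}}_b$ is equivalent to $E_{R^{\mathrm{ri}}}(a,b)=1$, where $E_{R^{\mathrm{ri}}}=R^{\mathrm{ri}}\land (R^{\mathrm{ri}})^{-1}$ is the natural fuzzy equivalence of $R^{\mathrm{ri}}$. Thus $|A/R^{\mathrm{ri}}|=|A|$ iff $E_{R^{\mathrm{ri}}}$ is a fuzzy equality, iff (by the definition in Section~2.1) $R^{\mathrm{ri}}$ is a fuzzy order. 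Combining the two halves yields the main equivalence. In the case of a fuzzy recognizer, the same argument applies verbatim since Theorem~\ref{th:R.RS}(c) is stated uniformly for both automata and recognizers.

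For the consequence, I simply apply the already-proved equivalence to the fuzzy recognizer (automaton) ${\cal A}/R^{\mathrm{ri}}$: by Theorem~\ref{th:R.RS}(c), its greatest right invariant fuzzy quasi-order is $\widetilde{R^{\mathrm{ri}}}$, and $\widetilde{R^{\mathrm{ri}}}$ is a fuzzy order on $A/R^{\mathrm{ri}}$ (again by the remark following Theorem~\ref{th:SIT}), so the equivalence forces ${\cal A}/R^{\mathrm{ri}}$ to be ${\cal Q}^{\mathrm{ri}}$-reduced. I expect no serious obstacle in this proof; the only subtlety is bookkeeping the reduction definition so that the iterative stability is correctly isolated as a one-step property, and recognising that the fact ``$\widetilde R$ is always a fuzzy order'', already recorded in the preliminaries, is exactly what makes the second step of any ${\cal Q}^{\mathrm{ri}}$-reduction automatically stable.
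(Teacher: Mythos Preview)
Your proposal is correct and follows essentially the same approach as the paper's proof: both reduce the question to whether $R^{\mathrm{ri}}$ being a fuzzy order forces $|{\cal A}/R^{\mathrm{ri}}|=|{\cal A}|$, and then invoke Theorem~\ref{th:R.RS}(c) together with the fact that $\widetilde R$ is always a fuzzy order to propagate cardinality stability through every subsequent step of a ${\cal Q}^{\mathrm{ri}}$-reduction. The paper makes the induction over $k$ explicit (writing $R^{\mathrm{ri}}_k=\widetilde{R^{\mathrm{ri}}_{k-1}}$ for all $k\ge 2$), whereas you leave it implicit in the phrase ``no further step decreases the cardinality''; conversely, you spell out via Theorem~\ref{th:ER} why $|A/R^{\mathrm{ri}}|=|A|$ is equivalent to $R^{\mathrm{ri}}$ being a fuzzy order, which the paper takes for granted.
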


\begin{proof}
Let  $\cal A$ be ${\cal Q}^{\mathrm{ri}}$-reduced.~If $R^{\mathrm{ri}}$ is not a fuzzy order, then $|{\cal A}/R^{\mathrm{ri}}|<|{\cal A}|$, what contradicts our~starting hypothesis that $\cal A$ is ${\cal Q}^{\mathrm{ri}}$-reduced.~Thus, we conclude that $R^{\mathrm{ri}}$ is a fuzzy order.

Conversely, let $R^{\mathrm{ri}}$ be a fuzzy order. Consider an arbitrary ${\cal Q}^{\mathrm{ri}}$-reduction ${\cal
A}_1={\cal A}, {\cal A}_2, \ldots ,{\cal A}_n$~of~$\cal A$.
For each $k\in \{1,2,\ldots ,n\}$ let $R^{\mathrm{ri}}_k$ be the greatest right invariant fuzzy quasi-order
on ${\cal A}_k$.~By Theorem~\ref{th:R.RS}, for every $k\in \{2,\ldots ,n\}$
we have that $R^{\mathrm{ri}}_k=\widetilde R^{\mathrm{ri}}_{k-1}$, so $R^{\mathrm{ri}}_k$ is a fuzzy order, and by the
hypothesis, $R^{\mathrm{ri}}_1=R^{\mathrm{ri}}$ is a fuzzy order.~Now, for every $k\in \{2,\ldots ,n\}$
we have that $|{\cal A}_k|=|({\cal A}_{k-1})/{R^{\mathrm{ri}}_{k-1}}|=|{\cal A}_{k-1}|$, and hence,
$|{\cal A}|=|{\cal A}_1|=|{\cal A}_2|=\dots =|{\cal A}_n|$. Therefore, the fuzzy recognizer (automaton) $\cal A$ is ${\cal Q}^{\mathrm{ri}}$-reduced.

Further, let $\cal A$ be an arbitrary fuzzy finite recognizer (automaton) and  $R^{\mathrm{ri}}$  the greatest
right invariant fuzzy quasi-order on $\cal A$.~Then by Theorem~\ref{th:R.RS} it follows
that $\widetilde R^{\mathrm{ri}}$ is the greatest right invariant fuzzy~quasi-order on the afterset
fuzzy recognizer (automaton) ${\cal A}/R^{\mathrm{ri}}$, and since it is a fuzzy order, we conclude~that ${\cal A}/R^{\mathrm{ri}}$~is~${\cal Q}^{\mathrm{ri}}$-reduced.
\end{proof}

Similarly we prove the following:

\begin{theorem}\label{th:wri.red} A fuzzy recognizer ${\cal A}$ is
${\cal Q}^{\mathrm{wri}}$-reduced if and only if the greatest weakly right invariant fuzzy quasi-order $R^{\mathrm{wri}}$ on $\cal A$ is a fuzzy order.

Consequently, for each fuzzy finite recognizer ${\cal A}$, the afterset fuzzy~recognizer ${\cal A}/R^{\mathrm{wri}}$ is ${\cal Q}^{\mathrm{wri}}$-reduced.
\end{theorem}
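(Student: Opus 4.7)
The plan is to mirror the proof of Theorem \ref{th:rred} almost line for line, with Theorem \ref{th:R.RS.wri} playing the role that Theorem \ref{th:R.RS} played there. The key structural fact is that for a fuzzy quasi-order $R$, the afterset fuzzy recognizer ${\cal A}/R$ has strictly fewer states than $\cal A$ if and only if $R$ is not a fuzzy order (two distinct states collapse precisely when the corresponding aftersets coincide, and by Theorem \ref{th:ER} this happens iff $R(a,b)=R(b,a)=1$ for some $a\neq b$, i.e.\ iff the natural fuzzy equivalence $E_R$ is not a fuzzy equality, i.e.\ iff $R$ is not a fuzzy order).

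For the forward implication, I would argue by contraposition: if $R^{\mathrm{wri}}$ is not a fuzzy order, then by the observation above $|{\cal A}/R^{\mathrm{wri}}|<|{\cal A}|$, so the one-step ${\cal Q}^{\mathrm{wri}}$-reduction ${\cal A},{\cal A}/R^{\mathrm{wri}}$ strictly decreases the cardinality, contradicting the assumption that $\cal A$ is ${\cal Q}^{\mathrm{wri}}$-reduced.

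For the reverse implication, assume $R^{\mathrm{wri}}$ is a fuzzy order and take an arbitrary ${\cal Q}^{\mathrm{wri}}$-reduction ${\cal A}_1={\cal A},{\cal A}_2,\ldots,{\cal A}_n$. Let $R_k^{\mathrm{wri}}$ be the greatest weakly right invariant fuzzy quasi-order on ${\cal A}_k$. By (c) of Theorem \ref{th:R.RS.wri} applied at each step, we get $R_k^{\mathrm{wri}}=\widetilde{R_{k-1}^{\mathrm{wri}}}$ for $k\ge 2$, and since the relation $\widetilde R$ on a quotient by a fuzzy quasi-order $R$ is always a fuzzy order (as noted right after equation (\ref{eq:FE.GE})), each $R_k^{\mathrm{wri}}$ with $k\ge 2$ is automatically a fuzzy order; by hypothesis $R_1^{\mathrm{wri}}=R^{\mathrm{wri}}$ is also a fuzzy order. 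Hence by the forward observation $|{\cal A}_k|=|{\cal A}_{k-1}|$ at every step, so the cardinality never drops, proving $\cal A$ is ${\cal Q}^{\mathrm{wri}}$-reduced.

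The consequence follows immediately: for a fuzzy finite recognizer $\cal A$, Theorem \ref{th:R.RS.wri}(c) gives that $\widetilde{R^{\mathrm{wri}}}$ is the greatest weakly right invariant fuzzy quasi-order on ${\cal A}/R^{\mathrm{wri}}$, and being of the form $\widetilde R$ it is a fuzzy order, so the already-proved equivalence yields that ${\cal A}/R^{\mathrm{wri}}$ is ${\cal Q}^{\mathrm{wri}}$-reduced. I expect no genuine obstacle here: the entire argument is a routine transfer of the proof of Theorem \ref{th:rred}, the only point that needs mild care being the correct invocation of Theorem \ref{th:R.RS.wri} at each iterated step of the reduction to ensure that the chain of greatest weakly right invariant fuzzy quasi-orders is fuzzy-ordered from the second term onward.
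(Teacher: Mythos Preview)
Your proposal is correct and follows exactly the approach the paper intends: the paper proves this theorem by simply stating ``Similarly we prove the following'' after Theorem~\ref{th:rred}, and your argument is precisely that analogous proof, replacing Theorem~\ref{th:R.RS} by Theorem~\ref{th:R.RS.wri} throughout. The key steps you identify---the cardinality drop iff $R^{\mathrm{wri}}$ fails to be a fuzzy order, the use of Theorem~\ref{th:R.RS.wri}(c) to get $R_k^{\mathrm{wri}}=\widetilde{R_{k-1}^{\mathrm{wri}}}$, and the fact that $\widetilde R$ is always a fuzzy order---match the paper's template exactly.
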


If a fuzzy automaton ${\cal A}=(A,X,\delta^{A} )$ is ${\cal
Q}^{\mathrm{ri}}$-reduced, that is, if the greatest right
invariant fuzzy quasi-order $R^{\mathrm{ri}}$ on $\cal A$ is a
fuzzy order, then the afterset fuzzy automaton ${\cal
A}/R^{\mathrm{ri}}$ has the same cardinality as~$\cal A$, but
it~is not necessary isomorphic to $\cal A$ (see Example
\ref{ex:all.in.one}).~If~the afterset fuzzy automaton~${\cal
A}/R^{\mathrm{ri}}$~is~isomorphic to~$\cal A$, ~then~$\cal A$ is
called {\it completely ${\cal
Q}^{\mathrm{ri}}$-reduced\/}.~Analogously we define
 {\it completely ${\cal Q}^{\mathrm{li}}$-, ${\cal
 E}^{\mathrm{ri}}$-}
and {\it ${\cal E}^{\mathrm{li}}$-reduced\/} fuzzy automata, as
well as {\it completely ${\cal Q}^{\mathrm{ri}}$-, ${\cal
Q}^{\mathrm{li}}$-, ${\cal Q}^{\mathrm{wri}}$-}, and {\it ${\cal
Q}^{\mathrm{wli}}$-reduced\/} fuzzy recognizers.

Example \ref{ex:all.in.one} will show that even if a fuzzy
recognizer or a fuzzy automaton $\cal A$ is ${\cal
Q}^{\mathrm{wri}}$- and/or ${\cal Q}^{\mathrm{wli}}$-reduced, or
it is ${\cal Q}^{\mathrm{ri}}$- and/or ${\cal
Q}^{\mathrm{li}}$-reduced, it is still possible to continue
reduction of  the number of states of $\cal A$ alternating
reductions by means of the greatest weakly right and left
invariant fuzzy quasi-orders,~or~by means of the greatest right
and left invariant fuzzy quasi-orders.~For that reason we
introduce the following concepts.

Let $\cal A$ be a fuzzy automaton.~A sequence ${\cal A}_1, {\cal A}_2, \ldots ,{\cal A}_n$ of
fuzzy automata will be called an {\it alternate ${\cal Q}$-reduction\/} of $\cal A$ if ${\cal A}_1={\cal A}$
and for every $k\in \{1,2,\ldots ,n-2\}$ the following is true:
\begin{itemize}\parskip=-2pt
\item[(1)]  ${\cal A}_{k+1}$ is the afterset fuzzy automaton of  ${\cal A}_k$ w.r.t.~the greatest right invariant or the greatest left invariant fuzzy quasi-odred on ${\cal A}_k$;
\item[(2)] If ${\cal A}_{k+1}$ is the afterset fuzzy automaton of  ${\cal A}_k$ w.r.t.~the greatest
right invariant fuzzy quasi-order on ${\cal A}_k$, then ${\cal A}_{k+2}$ is the afterset fuzzy automaton of ${\cal A}_{k+1}$
w.r.t.~the greatest left invariant fuzzy quasi-order on ${\cal A}_k$;
\item[(3)]  If ${\cal A}_{k+1}$ is the afterset fuzzy automaton of ${\cal A}_k$ w.r.t.~the greatest
left invariant fuzzy quasi-order on ${\cal A}_k$, then ${\cal A}_{k+2}$ is the afterset fuzzy automaton of ${\cal A}_{k+1}$
w.r.t.~the greatest right invariant fuzzy quasi-order on ${\cal A}_k$.
\end{itemize}
If ${\cal A}_2$ is the afterset fuzzy automaton of ${\cal A}_1$  w.r.t.~the greatest right invariant fuzzy
quasi-order on ${\cal A}_1$,~then this alternate ${\cal Q}$-reduction is called an {\it alternate ${\cal Q}^{\mathrm{rl}}$-reduction\/}, and if  ${\cal A}_2$ is the afterset fuzzy automaton of ${\cal A}_1$  w.r.t.~the greatest left invariant fuzzy
quasi-order on ${\cal A}_1$, then this alternate ${\cal Q}$-reduction is called an {\it alternate ${\cal Q}^{\mathrm{lr}}$-reduction\/}.

Note that for each fuzzy finite automaton ${\cal A}$ there exists an alternate ${\cal Q}^{\mathrm{rl}}$-reduction ${\cal A}_1, {\cal A}_2, \ldots ,{\cal A}_n$ of $\cal A$ such that for every alternate ${\cal Q}^{\mathrm{rl}}$-reduction  ${\cal A}_1, {\cal A}_2, \ldots ,{\cal A}_n,{\cal A}_{n+1},\ldots ,
{\cal A}_{n+m}$ which is a continua\-tion~of this reduction we have that
\[
|{\cal A}_n|=|{\cal A}_{n+1}|=\dots =|{\cal A}_{n+m}|,\vspace{-2mm}
\]
i.e., all fuzzy automata ${\cal A}_{n+1}, \ldots ,{\cal A}_{n+m}$ have the same number of states as ${\cal A}_n$.~Also, there is a~shortest alternate ${\cal Q}^{\mathrm{rl}}$-reduction ${\cal A}_1, {\cal A}_2, \ldots ,{\cal A}_n$ of $\cal A$ having this property, which we will call the {\it shortest  alternate ${\cal Q}^{\mathrm{rl}}$-reduction\/} of $\cal A$, and we will call ${\cal A}_n$  an {\it
alternate ${\cal Q}^{\mathrm{rl}}$-reduct\/} of ${\cal A}$, whereas the number $n$
will be~called
the {\it length\/} of the shortest alternate ${\cal Q}^{\mathrm{rl}}$-reduction of $\cal A$.~Analogously~we~define the {\it shortest alternate ${\cal Q}^{\mathrm{lr}}$-reduction\/}, its length, and the {\it alternate ${\cal Q}^{\mathrm{lr}}$-reduct\/} of ${\cal A}$.~Using the greatest right and left invariant~fuzzy equivalences instead of the greatest right~and left invariant fuzzy quasi-orders, we~also define {\it alternate ${\cal E}$-reduc\-tions\/},
{\it alternate ${\cal E}^{\mathrm{rl}}$-\/}and
{\it ${\cal E}^{\mathrm{lr}}$-reductions\/}, {\it alter\-nate ${\cal E}^{\mathrm{rl}}$-\/}and {\it ${\cal E}^{\mathrm{lr}}$-reducts\/}, etc.~For fuzzy recognizers, weakly right invariant and weakly left invariant fuzzy quasi-orders, similarly we define {\it alternate ${\cal Q}^{\mathrm{w}}$-\/} and {\it ${\cal E}^{\mathrm{w}}$-reductions\/},
{\it alternate\/}~${\cal Q}^{\mathrm{wrl}}$- and {\it
${\cal Q}^{\mathrm{wlr}}$-reductions\/}, {\it alternate\/}~${\cal Q}^{\mathrm{wrl}}$- and {\it
${\cal Q}^{\mathrm{wlr}}$-reducts\/}, as well as  {\it alternate ${\cal Q}$-reductions\/},
{\it alternate\/}~${\cal Q}^{\mathrm{rl}}$-~and {\it
${\cal Q}^{\mathrm{lr}}$-reductions\/}, {\it alternate\/}~${\cal Q}^{\mathrm{rl}}$- and {\it
${\cal Q}^{\mathrm{lr}}$-reducts\/}
of fuzzy
recognizers, and other related concepts.

Consider now the following example.

\begin{example}\label{ex:all.in.one}\rm
Let $\cal L$ be the Boolean structure and let ${\cal A}=(A,X,\delta^A,\sigma^A,\tau^A)$ be a fuzzy recognizer~over $\cal L$, where $A=\{1,2,3\}$, $X=\{x,y\}$, and  $\delta_x^A$,
$\delta_y^A$, $\sigma^A$ and  $\tau^A$ are given by
\[
\delta_x^A=\begin{bmatrix}
1 & 0 & 0 \\
0 & 0 & 0 \\
0 & 0 & 0
\end{bmatrix},\ \ \ \
\delta_y^A=\begin{bmatrix}
0 & 1 & 0 \\
1 & 1 & 1 \\
1 & 0 & 0
\end{bmatrix},\ \ \ \
\sigma^A=\begin{bmatrix}
1 & 0 & 0 \end{bmatrix},
\ \ \ \ \tau^A=\begin{bmatrix}
0 \\
1 \\
1
\end{bmatrix}.
\]
Let us note that the fuzzy automaton $(A,X,\delta^A)$ has been already considered
in Example \ref{ex:all.in.one}.~

The greatest~weakly right invariant fuzzy quasi-order $R^{\mathrm{wri}}$ on $\cal A$ and related afterset fuzzy recognizer ${\cal A}_2={\cal A}/R^{\mathrm{wri}}
=(A_2,X,\delta^{A_2},\sigma^{A_2},\tau^{A_2})$ are given by
\[
R^{\mathrm{wri}} =\begin{bmatrix}
1 & 0 & 0 \\
0 & 1 & 1 \\
0 & 0 & 1
\end{bmatrix}, \quad \delta_x^{A_2}=\begin{bmatrix}
1 & 0 & 0 \\
0 & 0 & 0 \\
0 & 0 & 0
\end{bmatrix}, \quad \delta_y^{A_2}=\begin{bmatrix}
0 & 1 & 1 \\
1 & 1 & 1 \\
1 & 0 & 1
\end{bmatrix}, \quad \sigma^{A_2}=\begin{bmatrix}
1 & 0 & 0 \end{bmatrix}, \quad \tau^{A_2}=\begin{bmatrix}
0 \\
1 \\
1
\end{bmatrix} ,
\]
and the greatest weakly left invariant fuzzy quasi-order $R_{2}^{\mathrm{wli}}$ on ${\cal A}_2$ and related afterset fuzzy recognizer ${\cal A}_3={\cal A}_2/R^{\mathrm{wri}}
=(A_3,X,\delta^{A_3},\sigma^{A_3},\tau^{A_3})$ are given by
\[
R_{2}^{\mathrm{wli}} = \begin{bmatrix}
1 & 0 & 0 \\
0 & 1 & 1 \\
0 & 1 & 1
\end{bmatrix}, \quad \delta_x^{A_3}=\begin{bmatrix}
1 & 0 \\
0 & 0
\end{bmatrix}, \quad \delta_y^{A_3}=\begin{bmatrix}
0 & 1 \\
1 & 1
\end{bmatrix}, \quad \sigma^{A_3}=\begin{bmatrix}
1 & 0 \end{bmatrix}, \quad \tau^{A_3}=\begin{bmatrix}
0 \\
1
\end{bmatrix} .
\]
It can be easily verified that both the greatest weakly right invariant fuzzy
quasi-order and the greatest weakly left invariant fuzzy
quasi-order on ${\cal A}_3$ coincide with the equality relation on $A_3$,
and the afterset fuzzy recognizers of ${\cal A}_3$ w.r.t.~these fuzzy quasi-orders are isomorphic to ${\cal A}_3$.~By this it follows that none alter\-nate ${\cal Q}^{\mathrm{w}}$-reduction decreases the number of states of ${\cal A}_3$, and we obtain that the sequence ${\cal A}={\cal
A}_1$, ${\cal A}_2$, ${\cal A}_3$ is the shortest alternate ${\cal Q}^{\mathrm{wrl}}$-reduction of $\cal A$, and~${\cal A}_3$ is the alternate ${\cal Q}^{\mathrm{wrl}}$-reduct of~$\cal A $.

On the other hand, the greatest~weakly left invariant fuzzy quasi-order $R^{\mathrm{wli}}$ on $\cal A$ and the afterset~fuzzy recognizer ${\cal A}_2'={\cal A}/R^{\mathrm{wli}}
=(A_2',X,\delta^{A_2'},\sigma^{A_2'},\tau^{A_2'})$ are given by
\[
R^{\mathrm{wli}} = \begin{bmatrix}
1 & 0 & 0 \\
0 & 1 & 0 \\
1 & 1 & 1
\end{bmatrix}, \quad \delta_x^{A_2'}=\begin{bmatrix}
1 & 0 & 0 \\
0 & 0 & 0 \\
1 & 0 & 0
\end{bmatrix}, \quad \delta_y^{A_2'}=\begin{bmatrix}
0 & 1 & 0 \\
1 & 1 & 1 \\
1 & 1 & 1
\end{bmatrix}, \quad \sigma^{A_2'}=\begin{bmatrix}
1 & 0 & 0 \end{bmatrix}, \quad \tau^{A_2'}=\begin{bmatrix}
0 \\
1 \\
1
\end{bmatrix} ,
\]
and both the greatest weakly right invariant fuzzy
quasi-order and the greatest weakly left invariant~fuzzy
quasi-order on ${\cal A}_2'$ coincide with $R^{\mathrm{wli}}$,
and the afterset fuzzy recognizers of ${\cal A}_2'$ w.r.t.~these fuzzy quasi-orders are isomorphic to ${\cal A}_2'$.~This means that none alter\-nate ${\cal Q}^{\mathrm{w}}$-reduction of ${\cal A}_2'$ decreases the number of states of ${\cal A}_2'$, i.e., none alter\-nate ${\cal Q}^{\mathrm{wlr}}$-reduction decreases the number~of~states of ${\cal A}$, and we obtain that  ${\cal A}$ is its own alternate ${\cal Q}^{\mathrm{wlr}}$-reduct.

Let us note that $R^{\mathrm{wri}}$ and $R_2^{\mathrm{wri}}$ are also the greatest
right invariant fuzzy quasi-orders on fuzzy recogni\-zers $\cal A$ and ${\cal A}_2$,
as well as on fuzzy automata $(A,X,\delta^A)$ and $(A_2,X,\delta^{A_2})$, and
$R^{\mathrm{wli}}$ is also the greatest left invariant fuzzy quasi-order on the
fuzzy recogni\-zer $\cal A$ and the fuzzy automaton $(A,X,\delta^A)$.~Therefore,
everything we have shown for weakly right invariant and weakly left invariant fuzzy quasi-orders holds~also
for right invariant and left invariant ones.
\end{example}

 Example \ref{ex:all.in.one} shows that even if a fuzzy recognizer  $\cal
A$ is ${\cal Q}^{\mathrm{wri}}$- and/or ${\cal
Q}^{\mathrm{wli}}$-reduced, it is still possible to continue
reduction of  the number of states of $\cal A$ alternating
reductions by means of the greatest weakly right and left
invariant fuzzy quasi-orders.~Namely, the fuzzy recognizer $\cal
A$ from this example is both ${\cal Q}^{\mathrm{wri}}$- and ${\cal
Q}^{\mathrm{wli}}$-reduced, but alter\-nate ${\cal
Q}^{\mathrm{wrl}}$-reduction decreases its number of states.~The
same example~also~shows that  shortest alternate ${\cal
Q}^{\mathrm{wrl}}$- and ${\cal Q}^{\mathrm{wlr}}$-reductions can
have different lengths, and that alter\-nate ${\cal
Q}^{\mathrm{wrl}}$- and ${\cal Q}^{\mathrm{wlr}}$-reducts  can
have different number of states.~Indeed, alter\-nate ${\cal
Q}^{\mathrm{wrl}}$-reduction reduces $\cal A$ from~three to two
states, whereas alternate ${\cal Q}^{\mathrm{wlr}}$-reduction do
not decrease number~of states~of~$\cal A$.~The above remarks also
hold for alternate ${\cal Q}^{\mathrm{rl}}$- and ${\cal
Q}^{\mathrm{lr}}$-reductions.

The state reduction of non-deterministic automata and recognizers
by means of right invariant and left~invar\-iant quasi-orders has
been studied by Champarnaud and Coulon \cite{CC.03,CC.04}, Ilie,
Navarro and Yu \cite{INY.04}, and~Ilie, Solis-Oba and Yu
\cite{ISY.05} (see also \cite{IY.02,IY.03}).~In these papers  a
non-deterministic recognizer $\cal A$ has been redu\-ced using
factor recognizers ${\cal A}/E_{R^\mathrm{ri}}$ and ${\cal
A}/E_{R^\mathrm{li}}$ w.r.t.~natural equivalences of
$R^\mathrm{ri}$ and~$R^\mathrm{li}$,~but none~of~the mentioned
authors have considered afterset recognizers ${\cal
A}/{R^\mathrm{ri}}$ and ${\cal A}/{R^\mathrm{li}}$.~As we have
noted earlier,~recognizers ${\cal A}/E_{R^\mathrm{ri}}$ and ${\cal
A}/{R^\mathrm{ri}}$, as well as ${\cal A}/E_{R^\mathrm{li}}$ and
${\cal A}/{R^\mathrm{li}}$, are not necessary~isomorphic, but they
have the same number of states and both of them are equivalent to
$\cal A$.~Therefore,~it~is all the same if we~use  ${\cal
A}/E_{R^\mathrm{ri}}$ or ${\cal A}/{R^\mathrm{ri}}$, and ${\cal
A}/E_{R^\mathrm{li}}$ or ${\cal A}/{R^\mathrm{li}}$.~However,
there are differences if we work with alter\-nate reductions.~For
the recognizer $\cal A$ with three states given in Example
\ref{ex:all.in.one}, natural equivalences $E_{R^\mathrm{ri}}$ and
$E_{R^\mathrm{li}}$ coincide with the equality relation, so
alternate reductions by means of these equivalences do not
decrease the number of states of $\cal A$, but the alter\-nate
${\cal Q}^{\mathrm{wrl}}$-reduction of  $\cal A$ gives a
recognizer~with two~states.~The~same conclusion can be drawn for
alternate $\cal E$-reductions.~Equivalences $E^{\mathrm{ri}}$ and
$E^{\mathrm{li}}$ on $A$ also coincide with the equality relation,
and none~alternate $\cal E$-reduction decrease the number of
states of $\cal A$.

In alternate ${\cal Q}^{\mathrm{w}}$-reductions considered in
Example \ref{ex:all.in.one} we have obtained three consecutive
members which are isomorphic, and by this fact we have concluded
that none alternate ${\cal Q}^{\mathrm{w}}$-reduction can further
decrease the number of states.~A similar conclusion we can draw in
cases when we obtain a fuzzy recognizer with only one
state.~However, we have no yet a general procedure to decide
whether we have reached the smallest number of states in an
alternate $\cal Q$- or ${\cal Q}^{\mathrm{w}}$-reduction.~An
exception are alter\-nate $\cal E$- and~${\cal
E}^{\mathrm{w}}$-reductions of non-deterministic automata and
recognizers,~for which there exists such general procedure.
Indeed, if after two successive steps the number of states did not
changed, then we can be sure that we have reached the smallest
number of states and this alternate $\cal E$- or~${\cal
E}^{\mathrm{w}}$-reduction is finished.~In other words, an
alternate $\cal E$-reduction finishes when~we obtain a
non-deterministic automaton which is both ${\cal
E}^{\mathrm{ri}}$- and ${\cal E}^{\mathrm{li}}$-reduced, and this
automaton is an~alternate $\cal E$-reduct of the staring
automaton.~The same holds for alternate ${\cal
E}^{\mathrm{w}}$-reductions of non-deterministic
recognizers.~Alter\-nate $\cal Q$- and~${\cal
Q}^{\mathrm{w}}$-reductions do not have this property even in the
case of non-deterministic automata and recognizers,   because
making an afterset auto\-maton or~recog\-nizer w.r.t.~an order
relation we change~the transition relation  and we obtain an
auto\-maton or recognizer which~is not necessary isomorphic to the
original one, what makes possible to continue an alternate $\cal
Q$- or~${\cal Q}^{\mathrm{w}}$-reduction and decrease the number
of states (see again~Example~\ref{ex:all.in.one}).~The same
conclusion can be drawn for alter\-nate $\cal Q$-,~${\cal
Q}^{\mathrm{w}}$-, $\cal E$- and~${\cal
E}^{\mathrm{w}}$-reductions of fuzzy automata and recognizers.

Finally, let us give several remarks concerning strongly right and
left invariant fuzzy quasi-orders.~It~can be easily verified that
for every fuzzy quasi-order $R$ on a fuzzy automaton $\cal A$, the
fuzzy order $\widetilde R$ on the afterset fuzzy automaton ${\cal
A}/R$ is strongly invariant, i.e., it is both strongly right and
strongly left invariant.~Consequently, for the greatest right
invariant fuzzy quasi-order $R^{\mathrm{ri}}$ on $\cal A$, by
Theorem~\ref{th:R.RS}~it follows that $\widetilde R^{\mathrm{ri}}$
is the greatest right invariant fuzzy quasi-order on ${\cal
A}/R^{\mathrm{ri}}$, and hence, $\widetilde R^{\mathrm{ri}}$ is
the greatest strongly right invariant fuzzy quasi-order on ${\cal
A}/R^{\mathrm{ri}}$, and every right invariant fuzzy quasi-order
on ${\cal A}/R^{\mathrm{ri}}$ is a strongly right invariant.

However, for the greatest strongly right invariant fuzzy quasi-order $R^{\mathrm{sri}}$ on $\cal A$ we have that $\widetilde R^{\mathrm{sri}}$ is a strongly right invariant fuzzy quasi-order on ${\cal A}/R^{\mathrm{sri}}$, but the next~exam\-ple
shows that it is not necessary the greatest element of ${\cal Q}^{\mathrm{sri}}({\cal
A})$.~For that reason,~the analogue
of Theorem \ref{th:rred} does not hold for strongly right invariant fuzzy quasi-orders,
i.e., the  afterset fuzzy automaton ${\cal A}/R^{\mathrm{sri}}$ is not necessary ${\cal Q}^{\mathrm{sri}}$-reduced, and contrary to ${\cal Q}^{\mathrm{ri}}$-reductions,
a ${\cal Q}^{\mathrm{sri}}$-reduction does not necessary stop after its first step.~This will be also shown by the next example.

\begin{example}\label{ex:qsri.red}\rm
Let $\cal L$ be the Boolean structure and let ${\cal A}=(A,X,\delta^A)$ be a fuzzy
automaton over $\cal L$, where $A=\{1,2,3\}$, $X=\{x\}$, and  a fuzzy transition relation $\delta_x^A$ is given by
\[
\delta_x^A=\begin{bmatrix}
1 & 0 & 1 \\
1 & 0 & 0 \\
1 & 0 & 0
\end{bmatrix}.
\]
Then the greatest strongly right invariant fuzzy quasi-order $R^{\mathrm{sri}}$ on $\cal A$ is given by
\[
R^{\mathrm{sri}} =
\begin{bmatrix}
1 & 1 & 1 \\
0 & 1 & 1 \\
0 & 1 & 1
\end{bmatrix},
\]
the afterset fuzzy automaton ${\cal A}_2={\cal A}/R^{\mathrm{sri}}=(A_2,X,\delta^{A_2})$
has two states, i.e., $A_2=\{1,2\}$, and a fuzzy transition relation $\delta_x^{A_2}$
is given by
\[
\delta_x^{A_2}=
\begin{bmatrix}
1 & 1 \\
1 & 1
\end{bmatrix}.
\]
Consequently, the greatest strongly right invariant fuzzy quasi-order $R_2^{\mathrm{sri}}$ on ${\cal A}_2$ is given by
\[
R_2^{\mathrm{sri}}=
\begin{bmatrix}
1 & 1 \\
1 & 1
\end{bmatrix},
\]
and it reduces ${\cal A}_2$ to a fuzzy automaton
${\cal A}_3={\cal A}_2/R_2^{\mathrm{sri}}=(A_3,X,\delta^{A_3})$ having only one state
and a fuzzy transition relation $\delta_x^{A_3}=\begin{bmatrix}1\end{bmatrix}$.~Therefore,
the sequence ${\cal A}={\cal A}_1$, ${\cal A}_2$, ${\cal A}_3$ is the shortest ${\cal Q}^{\mathrm{sri}}$-reduction~of~$\cal A$.

This example also shows that the converse implication in (a) of Theorem \ref{th:R.RS}
does not necessary hold~for strongly right invariant fuzzy quasi-orders.~Namely,
if we assume that $S$ is the universal relation~on~$A$,~then we have that
$S/R^{\mathrm{sri}}=R_2^{\mathrm{sri}}$ is a strongly right invariant fuzzy quasi-order
on ${\cal A}/R^{\mathrm{sri}}$, but $S$ is not a strongly right invariant fuzzy quasi-order
on $\cal A$.
\end{example}

\section{An example demonstrating some applications to fuzzy discrete event systems}

In this section we give an example demonstrating some applications of weakly left invariant fuzzy
quasi-orders to fuzzy discrete event systems.~A more complete study of fuzzy discrete event systems
will be~a~subject of our further work.

A {\it discrete event system\/} ({\it DES\/}) is a dynamical system whose state space is described
by a discrete set, and states evolve as a result of asynchronously occurring discrete events over
time \cite{CL.08,HZ.07}.~Such~systems~have significant applications in many fields of computer
science and engineering, such as concurrent and distributed software systems, computer and
communication networks, manufacturing, transportation and traffic~control systems,~etc.~Usually, a
discrete event system is modeled by a finite state automaton (deterministic~or nondeterministic),
with events modeled by input letters, and the behavior of a discrete~event~system~is~described by the language generated~by the auto\-maton.~However, in many situations
states and state transitions, as well as control strategies, are somewhat imprecise, uncertain and
vague.~To take this kind of uncertainty into account, Lin and Ying extended   classical discrete
event systems to {\it fuzzy discrete event systems\/} ({\it FDES\/})~by proposing a fuzzy finite
automaton model \cite{LY.01,LY.02}.~Fuzzy discrete event systems have been since studied in a
number of papers \cite{CY.05,CY.06,CYC.07,K.08,LY.01,LY.02,Lall.07,LL.08,Qiu.05,QL.09}, and they
have been successfully applied to biomedical control for HIV/AIDS treatment planning, robotic
control, intelligent~vehicle control, waste-water treatment, examination of chemical reactions, and
in other fields.

In \cite{LY.01,LY.02}, and later in \cite{CYC.07,K.08,Qiu.05,QL.09}, fuzzy discrete event systems have been modeled by automata~with fuzzy states and fuzzy inputs, whose transition function is defined  over the sets of fuzzy states and fuzzy~inputs in
a deterministic way.~In fact, such an automaton can be regarded as the
determinization of a fuzzy~automaton (defined as in this paper) by means of the accessible fuzzy subset construction (see \cite{ICB.08,ICBP}).~On the other hand, in \cite{CY.05,CY.06,LL.08}
fuzzy discrete event systems have been modeled by fuzzy automata~with single crisp
initial states.~In all mentioned papers membership values have been taken in the G\"odel or product structure.~

Here, a fuzzy discrete event system will be modeled by a fuzzy finite recognizer
${\cal A}=(A,X,\delta^A,\sigma^A,\tau^A)$ over a complete residuated lattice $\cal
L$, defined as in Section~\ref{ssec:FAL}.~Two kinds of fuzzy languages associated~with this fuzzy recognizer play a key role in study of fuzzy discrete event systems.~The first
one is the {\it fuzzy language $L({\cal A})$ recognized
by\/} $\cal A$, which is defined as in (\ref{eq:recog}) (or (\ref{eq:recog.comp})), and the second one is the {\it fuzzy language $L_g({\cal A})$ generated by\/} $\cal A$, which is defined by
\begin{equation}\label{eq:Lg}
L_{g}({\cal A})(u) = \bigvee_{a,b\in A} \sigma^{A} (a)\otimes
\delta_{*}^A(a,u,b) =\bigvee_{b\in A}(\sigma^{A} \circ
\delta^{A}_u)(b)=\bigvee_{b\in A}\sigma^{A}_u(b),
\end{equation}
for every $u\in X^*$.~Intuitively, $L_{g}({\cal A})(u)$ represents the degree to which the input word $u$ causes a transition from some initial state to any other
state.~Two fuzzy recognizers $\cal A$ and $\cal B$ are called {\it language-equivalent\/}
if $L({\cal A})=L({\cal B})$ and $L_{g}({\cal A})=L_{g}({\cal B})$.

Discrete event models of complex dynamic systems are
built rarely in a monolithic manner.~Instead, a modular
approach is used where models of individual components
are built first, followed by the composition of these models to
obtain the model of the overall system.~In the automaton modeling formalism the~compo\-si\-tion of individual automata (that model
interacting system components) is usually formalized by the {\it parallel composition\/} of automata.~Once a complete system model has been obtained by parallel
composition of a set of automata, the resulting monolithic model  can be used to analyze the properties of the system.

Let ${\cal A}=(A,X,\delta^{A} ,\sigma^{A} ,\tau^{A})$ and ${\cal
B}=(A,Y,\delta^{B} ,\sigma^{B} ,\tau^{B})$ be fuzzy recognizers.~The {\it product\/}
of $\cal A$ and~$\cal B$~is a fuzzy recognizer
${\cal A}\times {\cal B}=(A\times B, X\cap Y, \delta^{A\times B}, \sigma^{A\times
B},\tau^{A\times B})$, defined by
\begin{equation}\label{eq:product}
\begin{aligned}
&\delta^{A\times B}((a,b),x,(a',b')) = \delta^A(a,x,a')\otimes \delta^B(b,x,b') ,\\
&\sigma^{A\times B}(a,b)=\sigma^A(a)\otimes\sigma^B(b),\quad\tau^{A\times B}(a,b)=\tau^A(a)\otimes\tau^B(b),
\end{aligned}
\end{equation}
for all $a,a'\in A$, $b,b'\in B$ and $x\in X\cap Y$, and the
{\it parallel composition\/}~of $\cal A$ and $\cal B$  is a fuzzy recognizer ${\cal A}\Vert {\cal
B}=(A\times B, X\cup Y, \delta^{A\Vert B}, \sigma^{A\Vert B},
\tau^{A\Vert B})$, defined by
\begin{equation}\label{eq:parallel}
\begin{aligned}
&\delta^{A\Vert B}((a,b),x,(a',b')) =\begin{cases}
\delta^A(a,x,a')\otimes \delta^B(b,x,b') & \ \ \text{if}\ x\in X\cap Y\\
\delta^A(a,x,a') & \ \ \text{if $x\in X\setminus Y$ and $b=b'$}\\
\delta^B(b,x,b') & \ \ \text{if $x\in Y\setminus X$ and $a=a'$}\\
 0 &  \ \ \text{otherwise}
\end{cases},\\
&\sigma^{A\Vert B}(a,b)=\sigma^A(a)\otimes\sigma^B(b),\quad\tau^{A\Vert B}(a,b)=\tau^A(a)\otimes\tau^B(b),
\end{aligned}
\end{equation}for all $a,a'\in A$, $b,b'\in B$.~Associativity  is used to extend the definition of parallel
composition to more than two automata.

In the parallel composition of fuzzy automata $\cal A$ and $\cal B$, a common input~letter from $X\cap Y$ is executed in both automata simultaneously,
what means that these two
automata are synchro\-nized on the common~input letter.~On~the other hand,
a private
input letter from $X\setminus Y$ is
executed in~$\cal A$,~while $\cal B$ is staying in the~same state, and similarly
for private letters from $Y\setminus X$.~Clearly, if $X=Y$, then the parallel composition
reduces to the product.~However, even if $X\ne Y$, the parallel composition of fuzzy
automata can be regarded as the product of suitable input extensions of these fuzzy automata, what will be shown in the sequel.~If $X\cap Y=\emptyset$,
then no synchronized transitions occur and ${\cal A}\Vert {\cal B}$ is the {\it
concurrent behavior\/} of ${\cal A}$ and ${\cal B}$.~This behavior is often termed the {\it shuffle\/} of ${\cal A}$ and ${\cal B}$.

Let ${\cal A}=(A,X,\delta^{A} ,\sigma^{A} ,\tau^{A} )$ be a fuzzy
recognizer and let $Y$ be an alphabet such that $X\subseteq Y$.~Let~us~define a new
transition function
$\delta^{A_Y}:A\times Y\times A\to L$ by
\begin{equation}\label{eq:inp.ext}
\delta^{A_Y}(a,x,a')=\begin{cases}
\delta^A(a,x,a') & \text{if $x\in X$} \\
1 & \text{if $x\in Y\setminus X$ and $a=a'$} \\
0 & \text{otherwise}
\end{cases},
\end{equation}
for all $a,a'\in A$ and $x\in Y$. Then a fuzzy recognizer ${\cal A}_Y=(A,Y,\delta^{A_Y}
,\sigma^{A} ,\tau^{A} )$ is called a $Y$-{\it input extension\/} of $\cal A$.~In~other words,
input letters from $X$ cause in ${\cal A}_Y$ the same transitions as in $\cal
A$, while those from~$Y\setminus X$ cause ${\cal A}_Y$ to stay in the
same state.~Evidently, $\delta_u^{A_Y}$ is the equality relation on $A$, for each $u\in (Y\setminus X)^*$.

An operation frequently performed on words and languages
is the so-called natural projection, which transforms words over an alphabet $Y$ to words over a smaller alphabet $X\subseteq Y$. Formally, a {\it natural projection\/},
or briefly a {\it projection\/}, is a mapping
$\natp{X}:Y^*\to X^*$, where $X\subseteq Y$, defined inductively by
\begin{equation}\label{eq:projection}
\natp{X}(w)=\begin{cases} e &\quad\text{if $w\in (Y\setminus X)^*$}\\
w &\quad\text{if $w\in X^*$}\\
\natp{X}(u)\natp{X}(v) &\quad\text{if $w=uv$, for some $u,v\in Y^*$}
\end{cases},
\end{equation}for each $w\in Y^*$ (cf. \cite{CL.08}).~In other words, the word $\natp{X}(w)\in
X^*$ is obtained from $w$ by deleting all~appear\-ances of letters from $Y\setminus X$.

First we prove the following:

\begin{lemma}\label{lema3} Let ${\cal A}=(A,X,\delta^{A} ,\sigma^{A} ,\tau^{A} )$ be a fuzzy recognizer, let $Y$ be an alphabet such that $X\subseteq Y$,~and~let ${\cal A}_Y=(A,Y,\delta^{A_Y} ,\sigma^{A} ,\tau^{A})$ be the $Y$-input extension of $\cal
A$.~Then for every $u\in Y^*$ we have that
\[
L_g({\cal A}_Y)(u)=L_g({\cal A})(\natp{X}(u))\ \ \text{and}\ \ L({\cal A}_Y)(u)=L({\cal A})(\natp{X}(u)).
\]
\end{lemma}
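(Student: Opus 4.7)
The plan is to reduce both equalities to a single lemma at the level of fuzzy transition relations: namely, that for every $u \in Y^*$ one has
\[
\delta_u^{A_Y} = \delta_{\natp{X}(u)}^A .
\]
Once this is established, both claims follow at once, since $\sigma^{A_Y}=\sigma^A$ and $\tau^{A_Y}=\tau^A$, and the fuzzy languages $L_g$ and $L$ are given by compositions of $\sigma^A$ with the relevant transition relation (and, for $L$, also with $\tau^A$).

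First I would record the base observation that if $x \in Y \setminus X$, then by the definition (\ref{eq:inp.ext}) the fuzzy relation $\delta_x^{A_Y}$ is precisely the crisp equality relation $I_A$ on $A$, which is the identity for the composition $\circ$. Using $\delta_{uv}^{A_Y}=\delta_u^{A_Y}\circ\delta_v^{A_Y}$ (equation (\ref{eq:delta.uv.2})) together with an easy induction on the length of a word in $(Y\setminus X)^*$, I obtain $\delta_w^{A_Y}=I_A$ for every $w \in (Y\setminus X)^*$.

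Next I would prove $\delta_u^{A_Y} = \delta_{\natp{X}(u)}^A$ by induction on the length of $u \in Y^*$. For $u=e$ both sides are $I_A$. For the inductive step, write $u=vx$ with $x \in Y$. If $x \in X$, then $\natp{X}(u)=\natp{X}(v)x$ and $\delta_x^{A_Y}=\delta_x^A$, so
\[
\delta_u^{A_Y}=\delta_v^{A_Y}\circ\delta_x^{A_Y}=\delta_{\natp{X}(v)}^A\circ\delta_x^A=\delta_{\natp{X}(v)x}^A=\delta_{\natp{X}(u)}^A
\]
by the induction hypothesis and (\ref{eq:delta.uv.2}). If $x \in Y\setminus X$, then $\natp{X}(u)=\natp{X}(v)$ and $\delta_x^{A_Y}=I_A$, so
\[
\delta_u^{A_Y}=\delta_v^{A_Y}\circ I_A=\delta_v^{A_Y}=\delta_{\natp{X}(v)}^A=\delta_{\natp{X}(u)}^A .
\]

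Finally, I would apply the definitions of $L_g$ and $L$. From (\ref{eq:Lg}) and the identity just proved,
\[
L_g({\cal A}_Y)(u)=\bigvee_{b\in A}(\sigma^A\circ \delta_u^{A_Y})(b)=\bigvee_{b\in A}(\sigma^A\circ \delta_{\natp{X}(u)}^A)(b)=L_g({\cal A})(\natp{X}(u)) ,
\]
and similarly, using (\ref{eq:recog}),
\[
L({\cal A}_Y)(u)=\bigvee_{a,b\in A}\sigma^A(a)\otimes \delta_u^{A_Y}(a,b)\otimes \tau^A(b)=L({\cal A})(\natp{X}(u)) .
\]
No step here is delicate; the only thing to be careful about is the base case $u=e$ in relation to the convention in (\ref{eq:delta.e}), which gives $\delta_e^{A_Y}=\delta_e^A=I_A$, so the two expressions in the lemma also agree for the empty word (including the special case of $L$ at $e$, which equals $\sigma^A\circ\tau^A$). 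No real obstacle is expected; the argument is essentially bookkeeping once one recognises that private letters from $Y\setminus X$ contribute the identity relation under $\delta^{A_Y}$.
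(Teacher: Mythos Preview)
Your proof is correct and follows essentially the same approach as the paper: both arguments rest on the observation that letters from $Y\setminus X$ contribute the identity relation under $\delta^{A_Y}$, so that $\delta_u^{A_Y}=\delta_{\natp{X}(u)}^A$. The only difference is presentational---you prove this identity by induction on the length of $u$, whereas the paper writes $u$ as an alternating concatenation $u_1v_1\cdots u_nv_nu_{n+1}$ with $u_i\in(Y\setminus X)^*$ and $v_i\in X^*$ and computes directly---but the content is the same.
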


\begin{proof}
An arbitrary word $u\in Y^*$ can be represented in the form $u=u_1v_1u_2v_2\cdots u_nv_nu_{n+1}$, where $n\in \Bbb N$, $u_1,u_2,\dots,u_{n+1}\in (Y\setminus X)^*$, and $v_1,v_2,\ldots ,v_n\in X^*$, and clearly, $\natp{X}(u)=v$, where $v= v_1v_2\cdots v_n$.~Since~$\delta_p^{A_Y}$ is the equality relation on $A$ and $\delta_q^{A_Y}=\delta_q^A$,
for all $p\in (Y\setminus X)^*$ and $q\in X^*$, then we have that
\[
\begin{aligned}
L_g({\cal A}_Y)(u)&=\bigvee_{a\in A}(\sigma^A \circ
\delta^{A_Y}_u)(a)=\bigvee_{a\in A}(\sigma^A \circ \delta^{A_Y}_{u_1}\circ
\delta^{A_Y}_{v_1}\circ\delta^{A_Y}_{u_2}\circ\delta^{A_Y}_{v_2}\circ\cdots\circ
\delta^{A_Y}_{u_n}\circ\delta^{A_Y}_{v_n}\circ\delta^{A_Y}_{u_{n+1}})(a)\\
&=\bigvee_{a\in A}(\sigma^A\circ
\delta^A_{v_1}\circ\delta^A_{v_2}\circ\cdots\circ\delta^A_{v_n})(a)=\bigvee_{a\in
A}(\sigma^A\circ \delta^A_v)(a)=L_g({\cal A})(v) = L_g({\cal A})(\natp{X}(u)),
\end{aligned}
\]
and similarly, $L({\cal A}_Y)(u)=L({\cal A})(\natp{X}(u))$.
\end{proof}

Now we prove the following:

\begin{theorem}\label{th:par.ext}
 Let ${\cal A}=(A,X,\delta^{A} ,\sigma^{A} ,\tau^{A})$ and ${\cal
B}=(B,Y,\delta^{B} ,\sigma^{B} ,\tau^{B})$ be fuzzy recognizers, let $Z=X\cup Y$,
and let ${\cal A}_Z=(A,Z,\delta^{A_Z},\sigma^{A} ,\tau^{A})$ and ${\cal B}_Z=(B,Z,\delta^{B_Z} ,\sigma^{B},\tau^{B})$ be respectively their $Z$-input extensions.

Then fuzzy recognizers ${\cal A}\Vert {\cal B}$ and ${\cal A}_Z\Vert {\cal B}_Z$ are isomorphic, and
for each $u\in Z^*$ we have that
\begin{align}
&L_g({\cal A}\Vert {\cal B})(u)=L_g({\cal
A}_Z)(u)\otimes L_g({\cal B}_Z)(u)=L_g({\cal A})(\natp{X}(u))\otimes L_g({\cal B})(\natp{Y}(u)),\label{eq:Lg-par}\\
&L({\cal A}\Vert {\cal B})(u)=L({\cal A}_Z)(u)\otimes L({\cal B}_Z)(u)= L({\cal A})(\natp{X}(u))\otimes L({\cal B})(\natp{Y}(u)).\label{eq:L-par}
\end{align}
\end{theorem}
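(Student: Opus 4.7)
The plan is to establish the isomorphism first and then reduce the language identities to a product formula for the extended transition function, which is proven by a routine induction.

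First, I would verify the isomorphism via the identity map on $A\times B$. Since $\mathcal{A}_Z$ and $\mathcal{B}_Z$ share the common alphabet $Z$, the parallel composition $\mathcal{A}_Z\Vert \mathcal{B}_Z$ collapses to the product $\mathcal{A}_Z\times \mathcal{B}_Z$, so its transition function is $\delta^{A_Z\times B_Z}((a,b),x,(a',b'))=\delta^{A_Z}(a,x,a')\otimes \delta^{B_Z}(b,x,b')$ for every $x\in Z$. A case analysis on whether $x\in X\cap Y$, $x\in X\setminus Y$, or $x\in Y\setminus X$ matches this against the four clauses in (\ref{eq:parallel}), using that $\delta^{B_Z}(b,x,b')=1$ iff $b=b'$ for $x\in Z\setminus Y$ (and analogously for $A_Z$), together with $1\otimes t=t$. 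Initial and terminal fuzzy sets coincide by definition. This gives the isomorphism.

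Next I would prove the product formula for the extended transition function: for all $a,a'\in A$, $b,b'\in B$, and $u\in Z^*$,
\[
\delta_*^{A_Z\times B_Z}((a,b),u,(a',b'))=\delta_*^{A_Z}(a,u,a')\otimes \delta_*^{B_Z}(b,u,b').
\]
The base case $u=e$ uses that both sides equal $1$ when $a=a'$ and $b=b'$ and $0$ otherwise (using $0\otimes 1=0$). For the inductive step from $u$ to $ux$, I would expand $\delta_*^{A_Z\times B_Z}((a,b),ux,(a',b'))$ using (\ref{eq:delta.x}), substitute the inductive hypothesis, rearrange factors (commutativity of $\otimes$), split the inner join $\bigvee_{(c,d)\in A\times B}$ into two joins over $c\in A$ and $d\in B$ using distributivity (\ref{eq:inf.dist}), and finally recognize the resulting factors as $\delta_*^{A_Z}(a,ux,a')$ and $\delta_*^{B_Z}(b,ux,b')$.

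Having this, the equalities (\ref{eq:Lg-par}) and (\ref{eq:L-par}) follow by applying (\ref{eq:Lg}) and (\ref{eq:recog}) to $\mathcal{A}_Z\times \mathcal{B}_Z$, substituting the product formula, and again factoring the resulting double join using (\ref{eq:inf.dist}):
\[
\begin{aligned}
L_g(\mathcal{A}_Z\times \mathcal{B}_Z)(u)&=\bigvee_{\substack{a,a'\in A\\ b,b'\in B}}\sigma^A(a)\otimes\sigma^B(b)\otimes\delta_*^{A_Z}(a,u,a')\otimes\delta_*^{B_Z}(b,u,b')\\
&=L_g(\mathcal{A}_Z)(u)\otimes L_g(\mathcal{B}_Z)(u),
\end{aligned}
\]
and likewise for $L$ (now including the $\tau$ factors). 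The second equality in (\ref{eq:Lg-par}) and (\ref{eq:L-par}) is then an immediate application of Lemma~\ref{lema3} to both $\mathcal{A}_Z$ and $\mathcal{B}_Z$.

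The only nontrivial step is the inductive product formula for $\delta_*^{A_Z\times B_Z}$; once it is in hand, the rest is commutativity of $\otimes$ together with the join-distributivity law (\ref{eq:inf.dist}). The potential pitfall is to keep the bookkeeping straight when splitting the composite join over $A\times B$ into a tensor of joins over $A$ and $B$, which is legitimate precisely because $\otimes$ distributes over arbitrary suprema in both arguments.
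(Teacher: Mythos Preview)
Your proposal is correct and follows essentially the same route as the paper: establish the isomorphism by the case analysis on $x\in X\cap Y$, $x\in X\setminus Y$, $x\in Y\setminus X$, then factor the language formulas over $A\times B$ into a tensor of the $A$- and $B$-parts, and finish with Lemma~\ref{lema3}. The only difference is that you explicitly isolate and prove by induction the product identity $\delta_*^{A_Z\times B_Z}((a,b),u,(a',b'))=\delta_*^{A_Z}(a,u,a')\otimes \delta_*^{B_Z}(b,u,b')$, whereas the paper uses this step implicitly in passing from the double join to the tensor of single joins; your version is a bit more explicit but otherwise identical in substance.
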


\begin{proof}According to (\ref{eq:inp.ext}) and (\ref{eq:parallel}), for every $x\in
Z=X\cup Y$ we have that
\[
\begin{aligned}
\delta^{A_Z\Vert B_Z}((a,b),x,(a',b'))&=\delta^{A_Z}(a,x,a')\otimes
\delta^{B_Z}(b,x,b') \\
&=\begin{cases}
\delta^{A}(a,x,a')\otimes\delta^{B}(b,x,b'),&\ \ \text{if}\ x\in X\cap Y\\
\delta^{A}(a,x,a')\otimes 1,&\ \ \text{if $x\in X\setminus Y$ and $b=b'$}\\
1\otimes\delta^{B}(b,x,b'),&\ \ \text{if $x\in Y\setminus X$ and $a=a'$}
\end{cases}\\
&=\delta^{A\Vert B}((a,b),x,(a',b')),
\end{aligned}
\]for every $a,a'\in A$, $b,b'\in B$. Since fuzzy recognizers ${\cal A}$ and ${\cal A}_Z$, as well as $\cal B$ and ${\cal B}_Z$, have the same fuzzy sets of initial
and terminal states, we conclude that ${\cal A}\Vert {\cal B}$
and ${\cal A}_Z\Vert {\cal B}_Z$ are isomorphic.~Moreover, according to Lemma~\ref{lema3},~for~each $u\in Z^*=(X\cup Y)^*$ we have that
\[
\begin{aligned}
L_g({\cal A}\Vert {\cal B})(u)&=L_g({\cal A}_Z\Vert {\cal B}_Z)(u)= \bigvee_{(a,b),(a',b')\in A\times B}
\sigma^{A\Vert B}(a,b)\otimes \delta^{A_Z\Vert B_Z}((a,b),u,(a',b'))\\
&=\biggl(\bigvee_{a,a'\in A} \sigma^{A}(a)\otimes\delta^{A_Z}(a,u,a')\biggr)\otimes
\biggl(\bigvee_{b,b'\in B} \sigma^{B}(b)\otimes \delta^{B_Z}(b,u,b')\biggr)\\
&=L_g({\cal A}_Z)(u)\otimes L_g({\cal B}_Z)(u)=L_g({\cal A})(\natp{X}(u))\otimes L_g({\cal B})(\natp{Y}(u)).
\end{aligned}
\]
 The rest of the assertion can be
proved in a similar way.
\end{proof}

In particular, if $X=Y$, i.e., if ${\cal A}\Vert {\cal B}={\cal A}\times {\cal B}$,
then by (\ref{eq:Lg-par}) and (\ref{eq:L-par}) it follows that
\begin{align}
&L_g({\cal A}\times {\cal B})(u)=L_g({\cal
A})(u)\otimes L_g({\cal B})(u),\label{eq:Lg-prod}\\
&L({\cal A}\times {\cal B})(u)=L({\cal A})(u)\otimes L({\cal B})(u),\label{eq:L-prod}
\end{align}
for every $u\in X^*$.

One of the key reasons for using automata to model discrete event systems is their amenability to analysis for answering various questions about the structure and behavior of the system, such as safety properties, blocking properties, diagnosability, etc.~In the context of fuzzy automata we will consider blocking properties, which are originally
concerned with the presence of deadlock and/or livelock in the automaton,~i.e.,~with~the problem
of checking whether a terminal state can be reached from every reachable state.~

A {\it prefix-closure} of a fuzzy language $f\in L^{X^*}$, denoted by $\overline{f}$,
  is a fuzzy language in $L^{X^*}$ defined by
\begin{equation}\label{eq:pref.close}
\overline{f}(u)=\bigvee_{v\in X^*}f(uv),
\end{equation}
for any $u\in X^*$.~It is easy to verify that the mapping $f\mapsto \overline{f}$
is a closure operator on $L^{X^*}$, i.e., for arbitrary $f,f_1,f_2\in L^{X^*}$ we have that
\begin{equation}\label{eq:cl.oper}
f\le \overline{f},\ \ \ \overline{\overline{f}}=\overline{f}\ \ \text{and}\ \ f_1\le f_2\ \ \text{implies}\ \ \overline{f_1}\le \overline{f_2}.
\end{equation}
A fuzzy language $f\in L^{X^*}$ is called {\it prefix-closed\/} if $f=\overline{f}$.

We have that the following is true:

\begin{lemma}\label{lema2} Let ${\cal A}=(A,X,\delta^{A} ,\sigma^{A} ,\tau^{A}
)$ be a fuzzy recognizer.~Then
\begin{equation}\label{eq:LA.clos}
L({\cal A})\le \overline{L({\cal A})}\le L_{g}({\cal A})= \overline{L_g({\cal A})}.
\end{equation}
\end{lemma}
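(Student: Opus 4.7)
The plan is to verify the chain of three assertions in \eqref{eq:LA.clos} separately, exploiting only the definitions of $L(\cal A)$, $L_g(\cal A)$, the composition formula \eqref{eq:delta.uv} for $\delta_*^A$, and the elementary bound $x\otimes y\le x$ that follows from $y\le 1$ and isotonicity of $\otimes$.

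The first inequality $L({\cal A})\le \overline{L({\cal A})}$ is immediate from \eqref{eq:cl.oper}, since $f\le \overline f$ is the extensivity clause of the closure operator. No real work here.

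For the second inequality $\overline{L({\cal A})}\le L_g({\cal A})$, I would take an arbitrary $u\in X^*$ and expand
\[
\overline{L({\cal A})}(u)=\bigvee_{v\in X^*}L({\cal A})(uv)=\bigvee_{v\in X^*}\bigvee_{a,b\in A}\sigma^A(a)\otimes \delta_*^A(a,uv,b)\otimes \tau^A(b).
\]
Applying \eqref{eq:delta.uv} to $\delta_*^A(a,uv,b)$ produces an internal supremum $\bigvee_{c\in A}\delta_*^A(a,u,c)\otimes \delta_*^A(c,v,b)$. Since $\delta_*^A(c,v,b)\le 1$ and $\tau^A(b)\le 1$, isotonicity of $\otimes$ gives
\[
\sigma^A(a)\otimes\delta_*^A(a,u,c)\otimes\delta_*^A(c,v,b)\otimes\tau^A(b)\le \sigma^A(a)\otimes\delta_*^A(a,u,c),
\]
and taking the supremum over $a,b,c\in A$ and $v\in X^*$ bounds the left-hand side by $\bigvee_{a,c\in A}\sigma^A(a)\otimes\delta_*^A(a,u,c)=L_g({\cal A})(u)$, by the definition \eqref{eq:Lg}.

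For the final equality $L_g({\cal A})=\overline{L_g({\cal A})}$, one direction is again given by $f\le \overline f$ in \eqref{eq:cl.oper}. The reverse inclusion $\overline{L_g({\cal A})}\le L_g({\cal A})$ proceeds exactly as the previous step: expand $\overline{L_g({\cal A})}(u)=\bigvee_{v\in X^*}L_g({\cal A})(uv)$, apply \eqref{eq:delta.uv} to split the transition along $u$ and $v$, and discard the factor corresponding to $v$ using $\delta_*^A(c,v,b)\le 1$, obtaining the upper bound $L_g({\cal A})(u)$. There is no genuine obstacle in the proof; the only point requiring slight care is keeping track of the order of suprema when swapping $\bigvee_{v\in X^*}$ past the other joins, which is justified by associativity/commutativity of $\bigvee$ in the complete lattice $L$.
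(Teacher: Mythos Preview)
Your proof is correct and uses the same core technique as the paper: expand via \eqref{eq:delta.uv} and discard trailing factors using $x\otimes y\le x$. The only organizational difference is that the paper observes $L({\cal A})\le L_g({\cal A})$ together with the monotonicity clause of \eqref{eq:cl.oper} already gives $\overline{L({\cal A})}\le \overline{L_g({\cal A})}$, so it suffices to prove the single inequality $\overline{L_g({\cal A})}\le L_g({\cal A})$; you instead prove the middle inequality and the prefix-closedness of $L_g({\cal A})$ separately, which duplicates essentially the same computation but is equally valid.
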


\begin{proof}
According to $L({\cal A})\le L_g({\cal A})$ and (\ref{eq:cl.oper}), it is enough
to prove $\overline{L_g({\cal A})}\le L_g({\cal A})$. Indeed, for arbitrary $a,b,c\in
A$ and $u,v\in X^*$ we have that
\[
\sigma^A(a)\otimes \delta_u^A(a,c)\otimes \delta_v^A(c,b) \le \sigma^A(a)\otimes \delta_u^A(a,c)\le L_g({\cal A})(u),
\]
what implies that
\begin{align*}
\overline{L_g({\cal A})}(u)&=\bigvee_{v\in X^*}L_g({\cal A})(uv)=\bigvee_{v\in X^*}\bigvee_{a,b\in
A}\sigma^A(a)\otimes \delta_{uv}^A(a,b)\\
&=\bigvee_{v\in X^*}\bigvee_{a,b\in A}\bigvee_{c\in A}\sigma^A(a)\otimes \delta_u^A(a,c)\otimes \delta_v^A(c,b)= \bigvee_{a,c\in A}\Bigl(\sigma^A(a)\otimes \delta_u^A(a,c)\Bigr)\otimes \Bigl(\bigvee_{v\in X^*}\bigvee_{b\in A}\delta_v^A(c,b)\Bigr)\\
&\le \bigvee_{a,c\in A}\sigma^A(a)\otimes \delta_u^A(a,c) = L_g({\cal A})(u),
\end{align*}
for every $u\in X^*$. Therefore, $\overline{L_g({\cal A})}\le L_g({\cal A})$.
\end{proof}

It is worth noting that the fuzzy language $\overline{L({\cal A})}$ can be represented
by
\[
\overline{L({\cal A})}(u)=\bigvee_{v\in X^*}L({\cal A})(uv) =\bigvee_{v\in X^*}\sigma^A\circ
\delta^A_{uv}\circ \tau^A = \bigvee_{v\in X^*}\sigma^A\circ
\delta^A_u\circ \delta^A_v\circ \tau^A = \bigvee_{v\in X^*}\sigma^A_u\circ \tau^A_v, \]
for every $u\in X^*$.

A fuzzy recognizer $\cal A$ is said to be {\it blocking\/} if $\overline{L({\cal A})}< L_g({\cal
A})$, where the inequality is proper, and~other\-wise, if $\overline{L({\cal A})}=L_g({\cal A})$,
then~$\cal A$ is referred to as {\it nonblocking\/}.~These concepts generalize related concepts of
the crisp discrete event systems theory, where a crisp automaton is considered to be blocking if
it can~reach a state from which no terminal state can be reached anymore.~This includes both the
possibility of a deadlock, where an automaton is stuck and unable to continue at all, and a
livelock, where an automaton continues~to run forever without achieving any further progress.

When we work with parallel compositions, the term conflicting is used instead of blocking.~Namely, fuzzy recognizers $\cal A$ and $\cal B$ are said to be {\it
nonconflicting\/} if their parallel composition ${\cal A}\Vert {\cal B}$ is nonblocking,
and otherwise they are said to be {\it conflicting\/}.~The parallel composition of a~set~of~auto\-mata may be~blocking~even if each of the individual components is nonblocking (cf.~\cite{CL.08}), and hence, it is necessary to examine
the transition structure of the parallel composition to answer blocking properties.~But,
the size of the~state~set~of~the parallel composition may
in the worst case grow exponentially in the number of
automata that are~co\-mposed. This process is known
as the ''curse of dimensionality'' in the study of complex
systems composed of many interacting components.

The mentioned problems in analysis of large discrete event models may be mitigated
if we adopt modular reasoning, which can make it possible to
replace components in the parallel~compo\-sition by smaller equivalent automata, and then to analyse a simpler system.~Such an approach has been used in \cite{MSR.04}
in study of conflicting properties of crisp discrete event systems.~Here we will show that every fuzzy recognizer $\cal A$ is conflict-equivalent with the afterset
fuzzy recognizer ${\cal A}/R$ w.r.t.~any weakly left invariant fuzzy quasi-order
$R$ on $\cal A$.~This means that in the parallel~composi\-tion of fuzzy recog\-nizers every component can be replaced~by such afterset fuzzy recognizer, what results in
a smaller fuzzy recognizer to be analysed, and do not affect
conflicting properties of the components.

Two fuzzy recognizers $\cal A$ and $\cal B$ are said to be {\it
conflict-equivalent\/} if for every fuzzy
recognizer $\cal C$ we have that ${\cal A}\Vert {\cal C}$ is nonblocking if and
only if ${\cal B}\Vert {\cal C}$ is nonblocking, i.e., if $\cal A$ and $\cal B$ are
nonconflicting (conflicting) with the same fuzzy recognizers (cf. \cite{MSR.04}).~

Now we are ready to state and prove the main results of this section.

\begin{theorem}\label{th:conf.eq}
Let ${\cal A}=(A,X,\delta^{A} ,\sigma^{A} ,\tau^{A})$ be a fuzzy
recognizer and let $R$  be a weakly left invariant fuzzy quasi-order on $\cal A$.~Then fuzzy recognizers $\cal A$
and ${\cal A}/R$ are language-equivalent, and consequently, they are conflict-equivalent.
\end{theorem}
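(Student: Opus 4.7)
The plan is to split the statement into two parts: first prove language-equivalence, i.e., $L({\cal A})=L({\cal A}/R)$ and $L_g({\cal A})=L_g({\cal A}/R)$; then deduce conflict-equivalence almost mechanically from Theorem~\ref{th:par.ext} and the definition of (non)blocking.

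For $L({\cal A})=L({\cal A}/R)$ no new work is required: the analogue of Theorem~\ref{th:syst.R.tau}(b) for weakly left invariant fuzzy quasi-orders, recorded in the discussion immediately following (\ref{eq:E.le}), says that $R$ is a solution to the general system (\ref{eq:gen.syst}); and by the very derivation of (\ref{eq:gen.syst}) from (\ref{eq:LAR}) this is the same as $\cal A$ and ${\cal A}/R$ recognizing the same fuzzy language.

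The generated-language equality is the nontrivial part. Using $\sigma^{A/R}(R_a)=(\sigma^A\circ R)(a)$ and $\delta_x^{A/R}(R_a,R_b)=(R\circ\delta_x^A\circ R)(a,b)$, and collapsing adjacent $R$'s via $R\circ R=R$, one finds that for every $u=x_1\cdots x_n\in X^+$
\[
L_g({\cal A}/R)(u)=\bigvee_{b\in A}\bigl(\sigma^A\circ R\circ\delta_{x_1}^A\circ R\circ\cdots\circ R\circ\delta_{x_n}^A\circ R\bigr)(b),
\]
and $L_g({\cal A}/R)(e)=\bigvee_{b\in A}(\sigma^A\circ R)(b)$. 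Thus the equality $L_g({\cal A})=L_g({\cal A}/R)$ reduces to the composition identity
\[
\sigma^A\circ R\circ\delta_{x_1}^A\circ R\circ\cdots\circ R\circ\delta_{x_n}^A\circ R=\sigma^A\circ\delta_{x_1}^A\circ\cdots\circ\delta_{x_n}^A,\qquad n\in\Bbb N,
\]
which I would prove by induction on $n$, processing the $R$'s from left to right: the base case $\sigma^A\circ R=\sigma^A$ is the instance of weak left invariance at $u=e$, and for the inductive step the trailing $R$ after $\delta_{x_k}^A$ is absorbed by applying $\sigma_{x_1\cdots x_k}^A\circ R=\sigma_{x_1\cdots x_k}^A$. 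Taking suprema over $b$ then yields $L_g({\cal A}/R)(u)=\bigvee_b\sigma_u^A(b)=L_g({\cal A})(u)$.

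For conflict-equivalence, fix any fuzzy recognizer $\cal C$ with alphabet $Y$. By Theorem~\ref{th:par.ext}, both $L({\cal A}\Vert{\cal C})(u)$ and $L_g({\cal A}\Vert{\cal C})(u)$ factor as $L^{\sharp}({\cal A})(\natp{X}(u))\otimes L^{\sharp}({\cal C})(\natp{Y}(u))$ for $L^{\sharp}\in\{L,L_g\}$. The already-obtained language-equivalence of $\cal A$ and ${\cal A}/R$ therefore gives $L({\cal A}\Vert{\cal C})=L(({\cal A}/R)\Vert{\cal C})$ and $L_g({\cal A}\Vert{\cal C})=L_g(({\cal A}/R)\Vert{\cal C})$, hence also $\overline{L({\cal A}\Vert{\cal C})}=\overline{L(({\cal A}/R)\Vert{\cal C})}$, so by the definition of nonblocking (and Lemma~\ref{lema2}) the two parallel compositions are simultaneously (non)blocking. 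The only non-routine step is the inductive absorption of the $R$'s in the $L_g$ computation; all else is bookkeeping with composition identities and Theorem~\ref{th:par.ext}.
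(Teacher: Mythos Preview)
Your proposal is correct and follows essentially the same route as the paper: the paper also reduces $L_g({\cal A}/R)=L_g({\cal A})$ to the identity $\sigma^A\circ R\circ\delta_{x_1}^A\circ R\circ\cdots\circ R\circ\delta_{x_n}^A\circ R=\sigma^A\circ\delta_{x_1}^A\circ\cdots\circ\delta_{x_n}^A$ (citing it as ``the dual statement of (\ref{eq:gen.syst.2})'', whose inductive proof via $\sigma_u^A\circ R=\sigma_u^A$ is exactly what you sketch), and then derives conflict-equivalence from Theorem~\ref{th:par.ext} just as you do. The only cosmetic difference is that the paper packages the induction as a reference to the already-established dual of (\ref{eq:gen.syst.2}) rather than spelling it out.
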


\begin{proof}
As we already know, $L({\cal A})=L({\cal A}/R)$.~Moreover, according to the dual
statement of (\ref{eq:gen.syst.2}), for an arbitrary
$u=x_1\cdots x_n\in X^*$, where~$n\in \Bbb N$ and $x_1,\ldots ,x_n\in X$,
we have that
\begin{align*}
L_{g}({\cal A}/R)(u)& =\bigvee_{b\in A}(\sigma^{A/R}\circ \delta^{A/R}_u)(R_b) =
\bigvee_{b\in A} (\sigma^{A/R}\circ \delta^{A/R}_{x_1}\circ
\cdots \circ \delta^{A/R}_{x_n} )(R_b) \\
&=
\bigvee_{a_1,a_{2},\ldots,a_n,b\in A}\sigma^{A/R}(R_{a_1})\otimes \delta^{A/R}_{x_1}(R_{a_1},R_{a_2})
\otimes \cdots \otimes \delta^{A/R}_{x_n}(R_{a_n},R_b)\\
&=
\bigvee_{a_1,a_{2,}\ldots,a_n,b\in A}(\sigma^{A}\circ R)(a_1)\otimes (R\circ \delta^A_{x_1}\circ
R)(a_1,a_2)
\otimes  \cdots \otimes (R\circ \delta^{A}_{x_n}\circ
R)(a_n,b)\\
&=\bigvee_{b\in A}(\sigma^{A}\circ R\circ \delta^A_{x_1}\circ R\circ \cdots \circ
R\circ \delta^{A}_{x_n}\circ R)(b)\\
&=\bigvee_{b\in A}(\sigma^{A}\circ\delta^{A}_{x_1}\circ \cdots \circ \delta^A_{x_n})(b)
=\bigvee_{b\in A}(\sigma^{A}\circ\delta^{A}_u)(b) =L_{g}({\cal A})(u),
\end{align*}
~and therefore, $L_{g}({\cal A}/R)=L_{g}({\cal A})$.~Hence, $\cal A$ and ${\cal A}/R$
are language-equivalent.

Next, let ${\cal B}=(B,Y,\delta^B ,\sigma^B ,\tau^B)$ be an arbitrary
fuzzy recognizer, and let $Z=X\cup Y$.~By the language-equivalence of $\cal A$
and ${\cal A}/R$ and Theorem \ref{th:par.ext},
for every $u\in Z^*=(X\cup Y)^*$ we~have that
\[
\begin{aligned}
L_g(({\cal A}/R)\Vert {\cal B})(u)&=L_g(({\cal A}/R)_Z)(u)\otimes L_g({\cal B}_Z)(u)=
L_g(({\cal A}/R))(\natp{X}(u))\otimes L_g({\cal B})(\natp{Y}(u))\\
&=L_g({\cal A})(\natp{X}(u))\otimes L_g({\cal B})(\natp{Y}(u)= L_g({\cal A}_Z)(u)\otimes L_g({\cal B}_Z)(u) = L_g({\cal A}\Vert {\cal B})(u),
\end{aligned}
\]
and hence, $L_g(({\cal A}/R)\Vert {\cal B})=L_g({\cal A}\Vert {\cal B})$.~Similarly we prove
that $L(({\cal A}/R)\Vert {\cal B})=L({\cal A}\Vert {\cal B})$, and by this it follows  that $\overline{L(({\cal A}/R)\Vert {\cal B})}=\overline{L({\cal A}\Vert {\cal
B})}$.

Hence, we have that $\overline{L({\cal A}\Vert {\cal B})}=L_g({\cal A}\Vert {\cal
B})$ if and only if $\overline{L(({\cal A}/R)\Vert {\cal B})}=L_g(({\cal A}/R)\Vert {\cal B})$,
what~means~that $\cal A$~and ${\cal A}/R$ are conflict-equivalent.
\end{proof}

The following example shows that the previous theorem do not hold for weakly right
invariant fuzzy~quasi-orders, i.e., a fuzzy recognizer and its afterset fuzzy recognizer
w.r.t.~a weakly right invariant fuzzy quasi-order are not necessary language-equivalent
nor conflict-equivalent.

\begin{example}\label{ex:ex1}\rm
Let $\cal L$ be the Boolean structure and let ${\cal A}=(A,X,\delta^A,\sigma^A,\tau^A)$  be a fuzzy recognizer~over~$\cal L$, where $A=\{1,2,3,4\}$, $X=\{x\}$, and $\delta_x^A$,
 $\sigma^A$ and  $\tau^A$ are given by
\[
\delta_x^A=\begin{bmatrix}
1 & 0 & 0 & 0 \\
0 & 0 & 0 & 1 \\
0 & 0 & 0 & 0 \\
0 & 0 & 0 & 0
\end{bmatrix},\ \ \ \
\sigma^A= \begin{bmatrix}
0 & 1 & 0 & 1
\end{bmatrix},\ \ \ \
\tau^A=\begin{bmatrix}
0 \\
1 \\
0 \\
1
\end{bmatrix}.
\]
For every $u\in X^*$ we have that
\[
\overline{L({\cal A})}(u)= L_{g}({\cal A})(u) = \begin{cases} 1 &\ \text{if}\ u=e \ \text{or}\ u=x \\ 0 &\ \text{if}\ u=x^n,\ \text{for some}\ n\ge 2
\end{cases} ,
\]
and hence, the fuzzy recognizer $\cal A$ is nonblocking.

A fuzzy relation $R$ on $A$ given by
\[
R=\begin{bmatrix}
1 & 0 & 1 & 0 \\
1 & 1 & 1 & 1 \\
1 & 0 & 1 & 0 \\
1 & 0 & 1 & 1
\end{bmatrix},
\]
is a weakly right invariant fuzzy quasi-order on $\cal A$ (it is just the greatest
one), and the related afterset fuzzy recognizer  is ${\cal A}/R=(A/R,X,\delta^{A/R},\sigma^{A/R},
\tau^{A/R})$, where $\delta_x^{A/R}$, $\sigma^{A/R}$ and $\tau^{A/R}$ are given by
\[
\delta_x^{A/R}=\begin{bmatrix}
0 & 1 & 1 \\
0 & 1 & 0 \\
0 & 1 & 0
\end{bmatrix},\ \ \ \
\sigma^{A/R}= \begin{bmatrix}
1 & 1 & 1
\end{bmatrix},\ \ \ \
\tau^{A/R}=\begin{bmatrix}
1 \\
0 \\
1
\end{bmatrix}.
\]
For every $u\in X^*$ we have that
\[
\overline{L({\cal A}/R)}(u)=  \begin{cases} 1 &\ \text{if}\ u=e \ \text{or}\ u=x \\ 0 &\ \text{if}\ u=x^n,\ \text{for some}\ n\ge 2
\end{cases} ,
\]
and $L_{g}({\cal A}/R)(u)=1$, for each $u\in X^*$.~Hence, $\overline{L({\cal A}/R)}<L_{g}({\cal A}/R)$, and we have that the fuzzy~recognizer ${\cal A}/R$
is blocking.~We also have that
$L_{g}({\cal A})\ne L_{g}({\cal A}/R)$, what means that $\cal A$ and ${\cal A}/R$
are not language-equivalent.

Next, let ${\cal B}=(B,X,\delta^B,\sigma^B,\tau^B)$, where $B=\{b\}$, $\delta^B(b,x,b)=1$,
for each $x\in X$, and $\sigma^B(b)=\tau^B(b)=1$. Then ${\cal A}\Vert {\cal B}={\cal
A}\times {\cal B}$, and by (\ref{eq:Lg-prod}) and (\ref{eq:L-prod}) it follows that
\begin{align*}
&L_g({\cal A}\Vert {\cal B})=L_g({\cal
A})\otimes L_g({\cal B})= L_g({\cal A}), \ \ L({\cal A}\Vert {\cal B})=L({\cal
A})\otimes L({\cal B})= L({\cal A}),\\
&L_g(({\cal A}/R)\Vert {\cal B})=L_g({\cal
A}/R)\otimes L_g({\cal B})= L_g({\cal A}/R), \ \ L(({\cal A}/R)\Vert {\cal B})=L({\cal
A}/R)\otimes L({\cal B})= L({\cal A}/R).
\end{align*}
Therefore,
\[
\overline{L({\cal A}\Vert {\cal B})} = \overline{L({\cal A})}=
L_g({\cal A}) = L_g({\cal A}\Vert {\cal B}), \ \ \ \overline{L(({\cal A}/R)\Vert {\cal B})} = \overline{L({\cal A}/R)}<
L_g({\cal A}/R) = L_g(({\cal A}/R)\Vert {\cal B}),
\]
what means that ${\cal A}\Vert {\cal B}$ is nonblocking and $({\cal A}/R)\Vert {\cal B}$ is blocking, and hence, ${\cal A}$ and ${\cal A}/R$ are not conflict-equivalent.
\end{example}

\section{Concluding remarks}

In our recent paper we have established close relationships
between the state reduction of a fuzzy recognizer and~the
resolution of a particular system of fuzzy relation equations. We
have studied reductions by means of those solutions which are
fuzzy equivalences.~In this paper we demonstrated that in some
cases better reductions can be~obtained using the solutions of
this system that are fuzzy quasi-orders.~Although~by Theorem
\ref{th:lang.ER} we have proved that in the general case fuzzy
quasi-orders and fuzzy~equivalences are equally good in the state
reduction, we have shown that in some cases fuzzy quasi-orders
give better reductions.~The meaning of state reductions by means of fuzzy
quasi-orders and fuzzy equivalences is in their possible effectiveness,
as opposed to the minimization which is not effective.~By Theorem~3.5 we have shown~that minimization of some fuzzy recognizers
can not be realized as its state reduction by means of fuzzy quasi-orders
or fuzzy equivalences.

We gave a procedure for computing the
greatest right invariant fuzzy quasi-order on a fuzzy automaton or
fuzzy recognizer, which works if the underlying structure of truth
values is a locally finite, but not~only in this case.~We also gave procedures for computing
the greatest right invariant crisp quasi-order and the greatest
strongly right invariant fuzzy quasi-order.~They work for fuzzy
automata over any complete residuated lattice.~However, although
these procedures are applicable to a larger class of fuzzy
automata, we have proved that right invariant fuzzy quasi-orders
give better reductions than right invariant~crisp quasi-orders and
strongly right invariant fuzzy quasi-orders.~We also have studied
a~more general type~of~fuzzy quasi-orders, weakly right and left
invariant ones.~These fuzzy quasi-orders give better reductions
than right and left invariant ones, but are harder to compute.~In
fact, weakly right and~left~invariant fuzzy quasi-orders on a
fuzzy recognizer are defined by means of two systems of fuzzy
relation equations whose resolution include determinization of
this fuzzy recognizer and its reverse fuzzy
recognizer.

Finally,~we~have shown that better results in the
state reduction can be achieved if we alternate reductions by
means of right and left invariant fuzzy quasi-orders, or weakly
right and left invariant fuzzy quasi-orders.~Furthermore, we
show~that~alternate reductions by means of fuzzy quasi-orders give
better results than those by means of fuzzy~equivalences.~It is
worth noting that the presented state reduction methods are based
on the construction of the afterset fuzzy recognizer w.r.t.~a
fuzzy quasi-order, and we have proved that such approach gives
better results in alternate reductions than approach by
Champarnaud and Coulon, Ilie, Navarro and Yu, and Ilie, Solis-Oba
and Yu, whose state reduction methods are based on the
construction~of the factor recognizer w.r.t.~the natural
equivalence of a quasi-order.

At the end of the paper we have demonstrated some applications
of weakly left invariant fuzzy quasi-orders in conflict analysis of fuzzy discrete
event systems.~Another interesting problem is application of state reductions by means of fuzzy quasi-orders in fault diagnosis~of discrete event systems.~Since this problem is very complex and deserves special attention, it will be discussed in a separate paper.

Several questions remained unsolved, too.~They include determining more precise conditions under which our iterative procedures for
computing the greatest right and left invariant fuzzy quasi-orders terminate in a finite number of steps, finding alternative algorithms for computing the greatest right and left invariant fuzzy quasi-orders for use in cases where the mentioned iterative procedures do not terminate in a finite number of steps, as well as finding even faster algorithms for
computing such fuzzy quasi-orders, and general procedures to decide whether we have reached the smallest number of states in
alternate reductions, and so forth.~All these issues will be topics of our future research.

\end{document}